\newcommand{\remove}[1]{}
\newtheorem{theorem}{Theorem}
\newtheorem{lemma}{Lemma}
\newtheorem{corollary}{Corollary}
\newtheorem{prope}{Property}
\newtheorem*{T4}{Theorem~\ref{hardminor}}
\newtheorem*{T5}{Theorem~\ref{hardtminor}}
\newcommand{\smallo}{o}
\newcommand{\cupall}{\pmb{\pmb{\bigcup}}}
\newcommand{\Ostar}{\mathcal{O}^*}
\newcommand{\Acal}{\mathcal{A}}
\newcommand{\Ccal}{\mathcal{C}}
\newcommand{\Fcal}{\mathcal{F}}
\newcommand{\Ocal}{\mathcal{O}}
\newcommand{\Pcal}{\mathcal{P}}
\newcommand{\Qcal}{\mathcal{Q}}
\newcommand{\Scal}{\mathcal{S}}
\newcommand{\Nbb}{\mathbb{N}}
\newcommand{\ETH}{{\sf ETH}\xspace}
\newcommand{\polyn}{\cdot n^{\mathcal{O}(1)}}
\newcommand{\degs}[2]{{\sf deg}_{#1}(#2)}
\theoremstyle{plain}
\newcommand{\ig}[1]{\textcolor{red}{[Ig: #1]}}
\newcommand{\paraprobl}[5]
{
  \begin{flushleft}
    \fbox{
      \begin{minipage}{#5cm}
        \noindent {\textsc {#1}}\\
        {\bf Input:} #2\\
        {\bf Parameter:} #4\\
        {\bf Output:} #3
      \end{minipage}
    }
  \end{flushleft}
}
\newcommand{\sm}{\setminus}
\newcommand{\gm}{\setminus}
\newcommand{\tw}{{\sf{tw}}}
\newcommand{\pw}{{\sf{pw}}}
\newcommand{\es}{\emptyset}
\newcommand{\intv}[1]{\left [ #1 \right ]}
\newcommand{\pretp}{\preceq_{\sf tm}}
\newcommand{\prem}{\preceq_{\sf m}}
\newcommand{\besf}{{\sf besf}}
\newcommand{\paw}{{\sf paw}\xspace}
\newcommand{\chair}{{\sf chair}\xspace}
\newcommand{\banner}{{\sf banner}\xspace}
\newcommand{\dart}{{\sf dart}\xspace}
\newcommand{\gem}{{\sf gem}\xspace}
\newcommand{\claw}{{\sf claw}\xspace}
\newcommand{\leaf}{{\sf L}\xspace}
\definecolor{linkcol}{rgb}{0,0,0.8}
\definecolor{citecol}{rgb}{0.65,0,0}
\definecolor{titlecol}{rgb}{0.65,0,0}
\newcommand{\vertex}[1]{\filldraw (#1) circle (2 pt);}
\newcommand{\bct}{\textsf{bct}}
\newcommand{\block}{\textsf{block}}
\newcommand{\cut}{\textsf{cut}}
\definecolor{gray0}{gray}{0.875}
\definecolor{gray1}{gray}{0.775}
\definecolor{gray2}{gray}{0.75}
\newcommand\cuparrow{%
  \mathrel{\ooalign{\hss$\cup$\hss\cr%
      \kern0.3ex\raise0.7ex\hbox{\scalebox{0.7}{$\downarrow$}}}}}
\newcommand\bigcuparrow{%
  \mathrel{\ooalign{\hss$\bigcup$\hss\cr%
      \kern0.55ex\raise0.7ex\hbox{\scalebox{0.7}{$\downarrow$}}}}}
\newcommand{\pbtm}{\textsc{$\Fcal$-TM-Deletion}\xspace}
\newcommand{\pbm}{\textsc{$\Fcal$-M-Deletion}\xspace}
\begin{document}

\title{\vspace{-.5cm}Hitting minors on bounded treewidth graphs.\\III. Lower bounds\thanks{Emails of authors: \texttt{julien.baste@uni-ulm.de}, \texttt{ignasi.sau@lirmm.fr}, \texttt{sedthilk@thilikos.info}.\vspace{.2cm}\newline \indent \!\!\!\!\!The results of this article are permanently available at \texttt{https://arxiv.org/abs/1704.07284}. Extended abstracts containing some of the results of this article appeared in the \emph{Proc. of the 12th International Symposium on Parameterized and Exact Computation  (\textbf{IPEC 2017})}~\cite{BasteST17}, in the \emph{Proc. of the 13th International Symposium on Parameterized and Exact Computation  (\textbf{IPEC 2018})}~\cite{BasteST18}, and in the \emph{Proc. of the 31st Annual ACM-SIAM Symposium on Discrete Algorithms (\textbf{SODA 2020})}~\cite{BasteSTSODA20}. Work supported by French projects DEMOGRAPH (ANR-16-CE40-0028) and ESIGMA (ANR-17-CE23-0010).\newline}}

\author{\bigskip Julien Baste\thanks{LIRMM, Université de Montpellier, Montpellier, France.}~\thanks{Sorbonne Université, Laboratoire d'Informatique de Paris 6, LIP6, Paris, France.}~\thanks{Institute of Optimization and Operations Research, Ulm University, Germany.} \and
  Ignasi Sau\thanks{LIRMM, Université de Montpellier, CNRS, Montpellier, France.}
  \and
  Dimitrios  M. Thilikos$^{\small \P}$}

\date{\vspace{-1cm}}

\maketitle

\begin{abstract}
   \noindent For a finite collection of graphs ${\cal F}$, the \textsc{$\Fcal$-M-Deletion} problem consists in, given a graph $G$ and an integer $k$, decide  whether there exists $S \subseteq V(G)$ with $|S| \leq k$ such that $G \setminus S$ does not contain any of the graphs in ${\cal F}$ as a minor. We are interested in the parameterized complexity of \textsc{$\Fcal$-M-Deletion} when the parameter is the treewidth of $G$, denoted by $\tw$.
   Our objective is to determine, for a fixed ${\cal F}$, the smallest function $f_{{\cal F}}$ such that \textsc{$\Fcal$-M-Deletion} can be solved in time $f_{{\cal F}}(\tw) \cdot n^{\Ocal(1)}$ on $n$-vertex graphs.
   We provide lower bounds under the \ETH on $f_{{\cal F}}$ for several collections ${\cal F}$. We first prove that for any $\Fcal$ containing connected graphs of size at least two, $f_{{\cal F}}(\tw)= 2^{\Omega(\tw)}$, even if the input graph $G$ is planar. Our main contribution consists of superexponential lower bounds for a number of collections $\Fcal$, inspired by a reduction of Bonnet et al.~[IPEC, 2017]. In particular, we prove that when ${\cal F}$ contains a single connected graph $H$
   that is either $P_5$ or is not a minor of the $\banner$ (that is, the graph consisting of a $C_4$ plus a pendent edge), then $f_{{\cal F}}(\tw)= 2^{\Omega(\tw \cdot \log \tw)}$. This is the third of a series of articles on this topic, and the results given here together with other ones allow us, in particular, to provide a tight dichotomy on the complexity of \textsc{$\{H\}$-M-Deletion}, in terms of $H$, when $H$ is connected.

\vspace{.3cm}

\noindent{\bf Keywords}: parameterized complexity; graph minors; treewidth; hitting minors; topological minors; dynamic programming; Exponential Time Hypothesis.
\vspace{.5cm}

\end{abstract}

\newpage

\section{Introduction}
\label{illusion}

Let ${\cal F}$ be a finite non-empty collection of non-empty graphs.  In the \textsc{$\Fcal$-M-Deletion} (resp. \textsc{$\Fcal$-TM-Deletion}) problem, we are given a graph $G$ and an integer $k$, and the objective is to decide whether there exists a set $S \subseteq V(G)$ with $|S| \leq k$ such that $G \setminus S$ does not contain any of the graphs in ${\cal F}$ as a minor (resp. topological minor). Instantiations of these problems correspond to several well-studied problems. For instance,  the cases ${\cal F}= \{K_2\}$, ${\cal F}= \{K_3\}$, and ${\cal F}= \{K_5,K_{3,3}\}$ of \textsc{$\Fcal$-M-Deletion} (or \textsc{$\Fcal$-TM-Deletion}) correspond to \textsc{Vertex Cover}, \textsc{Feedback Vertex Set}, and \textsc{Vertex Planarization}, respectively.



We are interested in the parameterized complexity of both problems when the parameter is the treewidth of the input graph. By Courcelle's theorem~\cite{Courcelle90}, \textsc{$\Fcal$-M-Deletion} \textsc{$\Fcal$-TM-Deletion} can be solved in time $f(\tw)\polyn$ on $n$-vertex graphs with treewidth at most $\tw$, where $f$ is some computable function.
Our objective is to determine, for a fixed collection ${\cal F}$, which is the {\sl smallest} such function $f$ that one can (asymptotically) hope for, subject to reasonable complexity assumptions.

This line of research has attracted some attention  in the parameterized complexity community during the last years. For instance, \textsc{Vertex Cover} is easily solvable in time $2^{\Ocal(\tw)}\polyn$, called \emph{single-exponential}, by standard dynamic-programming techniques, and no algorithm with running time $2^{o(\tw)}\polyn$ exists, unless the Exponential Time Hypothesis (\ETH)\footnote{The \ETH states that 3-\textsc{SAT} on $n$ variables cannot be solved in time $2^{o(n)}$; see~\cite{ImpagliazzoP01} for more details.} fails~\cite{ImpagliazzoP01}.

For \textsc{Feedback Vertex Set}, standard dynamic programming techniques give a running time of $2^{\Ocal(\tw \cdot \log \tw)}\polyn$, while the lower bound under the \ETH~\cite{ImpagliazzoP01} is again $2^{o(\tw)}\polyn$. This gap remained open for a while, until Cygan et al.~\cite{CyganNPPRW11} presented an optimal algorithm running in time $2^{\Ocal(\tw)}\polyn$, introducing  the celebrated \emph{Cut{\sl \&}Count}
technique. This article triggered several other techniques to obtain single-exponential algorithms for so-called \emph{connectivity problems} on graphs of bounded treewidth, mostly based on algebraic tools~\cite{BodlaenderCKN15,FominLPS16}.

Concerning \textsc{Vertex Planarization}, Jansen et al.~\cite{JansenLS14} presented an algorithm of time $2^{\Ocal(\tw \cdot \log \tw)}\polyn$ as a subroutine in an {\sf FPT}-algorithm parameterized by $k$. Marcin Pilipczuk~\cite{Pili15} proved that this running time is {\sl optimal} under the \ETH. This lower bound was acheived
by using the framework introduced by Lokshtanov et al.~\cite{permuclique,LokshtanovMS11} for obtaining superexponential lower bounds (namely, of the form $2^{\Omega(k \cdot \log k)}\polyn$, in particular for problems parameterized by treewidth), which has proved very successful in recent years~\cite{CyganNPPRW11,Pili15,BasteS15}. We also use this framework in the current article.

\newpage
\noindent
\textbf{Our results and techniques}. We present lower bounds under the \ETH for \textsc{$\Fcal$-M-Deletion} and \textsc{$\Fcal$-TM-Deletion} parameterized by treewidth, several of them being tight.
We first prove that for any connected\footnote{A \emph{connected} collection $\mathcal{F}$ is a collection containing only connected graphs of size at least two.} $\Fcal$, neither \textsc{$\Fcal$-M-Deletion} nor \textsc{$\Fcal$-TM-Deletion} can be solved in time $2^{o(\tw)}\polyn$, even if the input graph $G$ is planar (cf.~Theorem~\ref{temporis} and Corollary~\ref{patterns}). The main contribution of this article consists of superexponential lower bounds for a number of collections $\Fcal$, which we proceed to describe.
Let $\Ccal$ be the set of all connected graphs that contain a block (i.e., a biconnected component) with at least five edges,
let $\mathcal{Q}$ be the set containing $P_5$ and all connected graphs that are {\sl not} minors of the \banner (that is, the graph consisting of a $C_4$ plus a pendent edge),
and let $\Scal = \{K_{1,i} \mid i \geq 4 \}$. We prove that, assuming the \ETH,

\begin{itemize}
\item for every finite non-empty subset $\Fcal \subseteq \Ccal$, neither \textsc{$\Fcal$-M-Deletion} nor \textsc{$\Fcal$-TM-Deletion} can be solved in time $2^{\smallo(\tw \log \tw)}\polyn$ (cf.~Theorem~\ref{hardgene}),
\item for every $H \in \mathcal{Q}$, \textsc{$\{H\}$-M-Deletion} cannot be solved in time $2^{\smallo(\tw \log \tw)}\polyn$  (cf.~Theorem~\ref{hardminor}), and
\item  for every $H \in \mathcal{Q} \setminus \Scal$, \textsc{$\{H\}$-TM-Deletion} cannot be solved in time $2^{\smallo(\tw \log \tw)}\polyn$  (cf.~Theorem~\ref{hardtminor}).
\end{itemize}

\medskip

The general lower bound of  $2^{o(\tw)}\polyn$ for connected collections is based on a simple reduction from \textsc{(Planar) Vertex Cover}. The superexponential lower bounds, namely $2^{o(\tw \cdot \log \tw)}\polyn$, are  based on the ideas presented by Bonnet et al.~\cite{BonnetBKM-IPEC17} for generalized feedback vertex set problems. We provide two hardness results that apply to different families of collections $\Fcal$,  both  based on a general framework described in Section~\ref{supexpgencons}, consisting of a reduction from the {\sc $k\times k$ Permutation Independent Set} problem introduced by Lokshtanov et al.~\cite{permuclique}. Namely, we prove, in Theorem~\ref{hardgene} (applying to both the minor and topological minor versions), the lower bound when $\Fcal$ is any finite non-empty subset of all connected graphs that contain a block with at least five edges. We then prove, in Theorem~\ref{hardminor} (for minors) and Theorem~\ref{hardtminor} (for topological minors), the lower bound when $\Fcal$ contains a single graph $H$ that is either $P_5$ or is not a minor of the $\banner$, with the exception of $K_{1,i}$ mentioned above  for the topological minor version. The proofs of the latter two theorems are considerably longer, as we need to distinguish several cases according to certain properties of the graph $H$ (cf. Lemma~\ref{cycletwocut} up to Lemma~\ref{crickbistm}).

We would like to mention that in previous versions of this article (in particular, in the conference version presented in~\cite{BasteST17}), we presented another family of reductions inspired by a reduction of Pilipczuk~\cite{Pili15} for \textsc{Vertex Planarization}, that is, for $\Fcal = \{K_5, K_{3,3}\}$. Afterwards, we found a more general unifying reduction along the lines of Bonnet et al.~\cite{BonnetBKM-IPEC17}, which is the one we present here. This reduction generalizes the hardness results presented in~\cite{BasteST17} and in~\cite{BasteST18} (also inspired by~\cite{BonnetBKM-IPEC17}, and that can be seen as a weaker version of the current reduction), as well as the lower bound of Pilipczuk~\cite{Pili15}, which is a corollary of one of our hardness results, namely Theorem~\ref{hardgene}.

\medskip

\noindent\textbf{Results in other articles of the series and discussion}. In the first article of this series~\cite{monster1}, we show, among other results, that for every connected $\mathcal{F}$ containing at least one planar graph (resp. subcubic planar graph), \textsc{$\Fcal$-M-Deletion} (resp. \textsc{$\Fcal$-TM-Deletion}) can be solved in time
  $2^{\Ocal(\tw  \cdot \log \tw)}\polyn$. In the second article of this series~\cite{monster2}, we provide single-exponential algorithms for \textsc{$\{H \}$-M-Deletion} for all the graphs $H$ for which the superexponential lower bounds given in this article do not apply, namely those on the left of Figure~\ref{shifting}: $P_3$, $P_4$, $C_4$, the \claw, the \paw, the \chair (sometimes also called \emph{fork} in the literature), and the \banner. Note that the cases $H = P_2$~\cite{ImpagliazzoP01,CyganFKLMPPS15}, $H = P_3$~\cite{P3-cover,P3-cover-improved}, and $H = C_3$~\cite{CyganNPPRW11,BodlaenderCKN15} were already known (nevertheless, for completeness we provide in~\cite{monster2} a simple algorithm when $H = P_3$).
  In the fourth article of this series~\cite{BasteST20-SODA} (whose full version is~\cite{SODA-arXiv}), we present an algorithm for \textsc{$\Fcal$-M-Deletion} in time $\Ostar(2^{O(\tw \cdot \log \tw)})$ for {\sl any} collection $\Fcal$.

 \begin{figure}[h!]
  \begin{center}
    \includegraphics[width=.87\textwidth]{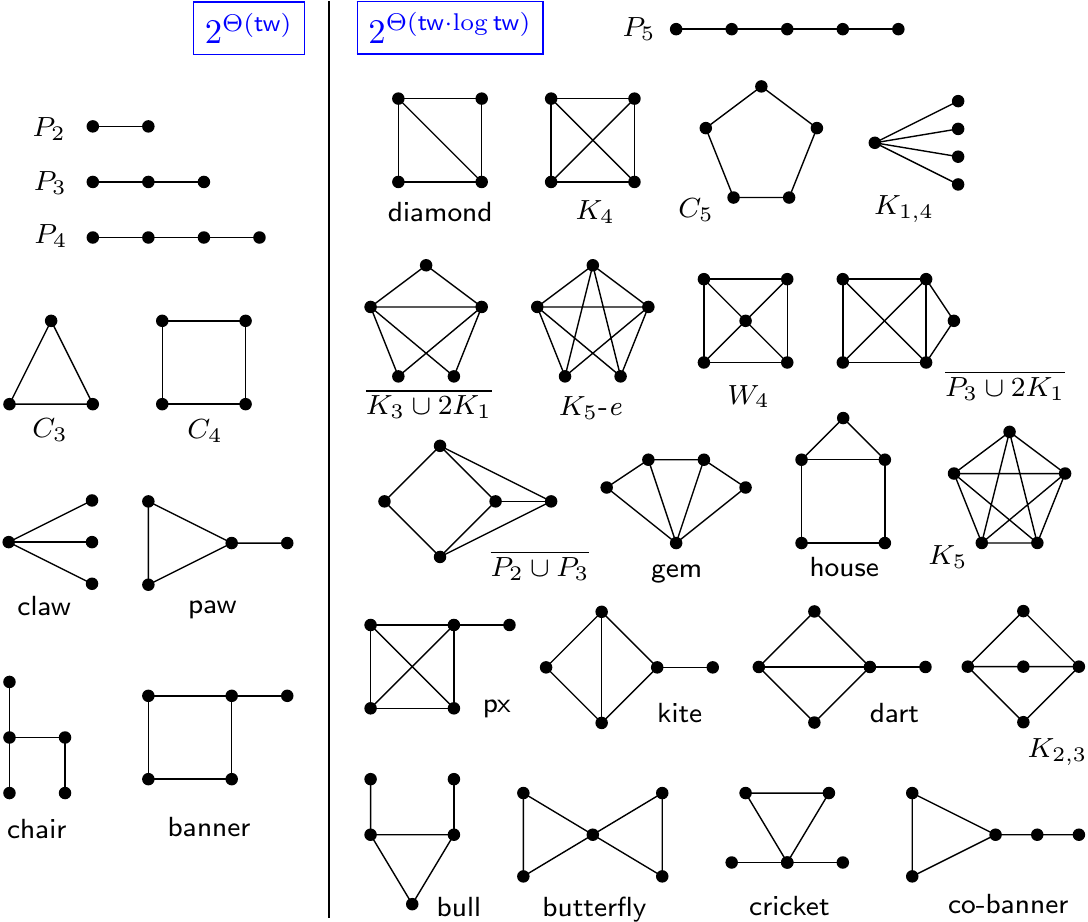}
  \end{center}\vspace{-.25cm}
  \caption{Classification of the complexity of \textsc{$\{H\}$-M-Deletion} for all connected simple graphs $H$ with $2 \leq |V(H)|\leq 5$: for the nine graphs on the left (resp. 21 graphs on the right, and all the larger ones), the problem is solvable in time $2^{\Theta(\tw)} \cdot n^{\Ocal(1)}$ (resp. $2^{\Theta(\tw \cdot \log \tw)} \cdot n^{\Ocal(1)}$). For \textsc{$\{H\}$-TM-Deletion}, $K_{1,4}$ should be on the left. This figure also appears in~\cite{monster2}.}
  \label{shifting}
  \vspace{.35cm}
\end{figure}

The  lower bounds presented in this article, together with the algorithms given in~\cite{monster2,SODA-arXiv,BasteST20-SODA}, cover all the cases of \textsc{$\Fcal$-M-Deletion}  where $\Fcal$ contains a single connected graph, as discussed in Section~\ref{secsupexp}. Namely,  we obtain the following theorem.

\begin{theorem}
Let $H$ be a connected  graph of size at least two. Then the \textsc{$\{H\}$-M-Deletion} problem can be solved in time
\begin{itemize}
    \item $2^{\Theta(\tw )}\polyn$, if $H$ is a minor of the \banner
    that is different from $P_5$, and
    \item $2^{\Theta(\tw \cdot \log \tw)}\polyn$, otherwise.
\end{itemize}
\end{theorem}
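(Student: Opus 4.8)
The plan is to assemble this dichotomy from the results already cited in this article series, so the ``proof'' is really a bookkeeping argument that matches each connected graph $H$ of size at least two to exactly one of the two regimes. First I would partition the connected graphs of size at least two into two classes: class $\mathcal{A}$, consisting of those $H$ that are minors of the \banner but different from $P_5$, and class $\mathcal{B}$, consisting of all the others. Observe that $\mathcal{A}$ is a finite, explicitly listable set: since the \banner has five vertices, its minors of size at least two are exactly $P_2$, $P_3$, $P_4$, $C_3$, $C_4$, the \paw, the \claw, the \chair, and the \banner itself (the nine graphs on the left of Figure~\ref{shifting}), and from these we keep all of them because $P_5$ is not among them anyway. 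Every connected graph not on that list falls into $\mathcal{B}$, and one checks that $\mathcal{B} = \mathcal{Q} \cup \{P_5\} \cup (\text{all connected graphs with a block of} \geq 5 \text{ edges not already in } \mathcal{Q})$; in fact, since $\mathcal{Q}$ already contains $P_5$ and every connected graph that is not a minor of the \banner, we simply have $\mathcal{B} = \mathcal{Q}$.

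For the upper bounds, the plan is to invoke the algorithmic results from the companion papers. For every $H \in \mathcal{A}$, the graph $H$ is connected and planar, so by the algorithm of the second article~\cite{monster2} (for the cases $P_3$, $P_4$, $C_4$, \claw, \paw, \chair, \banner; and the earlier results for $P_2$~\cite{ImpagliazzoP01,CyganFKLMPPS15} and $C_3$~\cite{CyganNPPRW11,BodlaenderCKN15}), \textsc{$\{H\}$-M-Deletion} can be solved in time $2^{\Ocal(\tw)}\polyn$. For every $H \in \mathcal{B} = \mathcal{Q}$, the graph $H$ is in particular connected and planar when it happens to be planar, but in any case the general algorithm of the fourth article~\cite{SODA-arXiv,BasteST20-SODA} solves \textsc{$\Fcal$-M-Deletion} in time $2^{\Ocal(\tw \cdot \log \tw)}\polyn$ for any collection $\Fcal$, so in particular for $\Fcal = \{H\}$. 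Thus in both regimes the stated upper bound holds.

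For the lower bounds, I would argue as follows. If $H \in \mathcal{A}$, then $|V(H)| \geq 2$ and $H$ is connected, so $\{H\}$ is a connected collection, and Theorem~\ref{temporis} (together with Corollary~\ref{patterns}) gives that \textsc{$\{H\}$-M-Deletion} cannot be solved in time $2^{o(\tw)}\polyn$ unless the \ETH\ fails; combined with the single-exponential algorithm this yields $2^{\Theta(\tw)}\polyn$. If $H \in \mathcal{B} = \mathcal{Q}$, then Theorem~\ref{hardminor} states precisely that \textsc{$\{H\}$-M-Deletion} cannot be solved in time $2^{o(\tw \log \tw)}\polyn$ unless the \ETH\ fails; combined with the $2^{\Ocal(\tw \cdot \log \tw)}\polyn$ algorithm this yields $2^{\Theta(\tw \cdot \log \tw)}\polyn$. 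Since the two classes $\mathcal{A}$ and $\mathcal{B}$ partition all connected graphs of size at least two, this covers every case and proves the theorem.

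The only real subtlety — and the step I would scrutinize most carefully — is checking that the case distinction in the theorem statement (``$H$ is a minor of the \banner that is different from $P_5$'' versus ``otherwise'') coincides exactly with the hypotheses of the cited theorems, i.e., that $\mathcal{B}$ really equals $\mathcal{Q}$ and that $\mathcal{A}$ is precisely the set of graphs handled by the single-exponential algorithms of~\cite{monster2}. This is a finite check for $\mathcal{A}$ (enumerate the connected minors of the \banner) and a definitional identity for $\mathcal{B}$ (unravel the definition of $\mathcal{Q}$ as $\{P_5\}$ together with all connected non-minors of the \banner, and note $P_5$ is itself not a minor of the \banner, so $\mathcal{Q}$ is exactly the complement of $\mathcal{A}$ within connected graphs of size $\geq 2$). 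No new mathematical content is needed beyond these verifications and the already-established results.
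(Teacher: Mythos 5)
Your overall plan is sound and matches the paper's: the theorem is assembled from the lower bounds proved here (Theorem~\ref{temporis} for the single-exponential regime, Theorem~\ref{hardminor} for the superexponential one) together with the algorithms from~\cite{monster2} and~\cite{SODA-arXiv,BasteST20-SODA}. But there is a factual error in your verification step that you should fix, because it shows a misreading of why the theorem is stated the way it is.

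You claim that the connected minors of the \banner of size at least two ``are exactly'' the nine graphs $P_2, P_3, P_4, C_3, C_4, \paw, \claw, \chair, \banner$, and later write ``note $P_5$ is itself not a minor of the banner.'' Both assertions are false: $P_5$ \emph{is} a minor of the \banner, indeed a \emph{subgraph}. The \banner is a $C_4$ (say $1\mbox{-}2\mbox{-}3\mbox{-}4\mbox{-}1$) with a pendent vertex $5$ attached to $1$; the walk $5,1,2,3,4$ uses four edges of the \banner and is a $P_5$. This is precisely why the theorem's first item carries the proviso ``that is different from $P_5$'': the clause is load-bearing, not vacuous. Without it, $P_5$ would be a minor of the \banner and would wrongly land in the $2^{\Theta(\tw)}$ regime, whereas by Theorem~\ref{hardminor} (since $P_5 \in \mathcal{Q}$ by definition) $\textsc{$\{P_5\}$-M-Deletion}$ requires $2^{\Omega(\tw\log\tw)}$ time under the \ETH. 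Your final conclusion that $\mathcal{B} = \mathcal{Q}$ still happens to be correct, because $\mathcal{Q}$ explicitly adjoins $P_5$ to the non-minors of the \banner, and $\mathcal{A}$ explicitly subtracts $P_5$ from the minors, so the two sets are complementary regardless of whether $P_5$ is or is not a minor of the \banner; but your stated reason for this complementarity is wrong, and the list of nine graphs constituting $\mathcal{A}$ is correct only because of the explicit exclusion of $P_5$, not because $P_5$ fails to appear among the minors. Correct this, and the rest of the proposal goes through as a faithful reconstruction of the paper's argument.
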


Note that the graphs $H$ described in the first item can be equivalently characterized as those that can be obtained by the \chair or the \banner by contracting edges. In the above statement, we use the $\Theta$-notation to indicate that these algorithms are {\sl optimal} under the \ETH. This dichotomy is depicted in Figure~\ref{shifting}, containing all  connected graphs $H$ with $2 \leq |V(H)| \leq 5$; note that if $|V(H)| \geq 6$, then $H$ is not a minor of the \banner, and therefore the second item above applies.
    See~\cite{monster2} for a discussion about the role played by the \banner in this dichotomy.

    The single-exponential algorithms given in~\cite{monster2} also apply to \textsc{$\{H \}$-TM-Deletion} for which, in addition, we provide
     a single-exponential algorithm when $H=K_{1,i}$ for every $i \geq 1$. By Theorem~\ref{hardminor}, a single exponential-algorithm for  \textsc{$\{K_{1,i}\}$-M-Deletion} when $i \geq 4$ is unlikely to exist.
  To the best of our knowledge, this is the first example of a collection $\Fcal$ for which the complexity of \textsc{$\Fcal$-M-Deletion} and \textsc{$\Fcal$-TM-Deletion} differ.

\medskip
\noindent
\textbf{Organization of the paper}. In Section~\ref{banality} we give some preliminaries.
In Section~\ref{struggle} we present the single-exponential lower bound for any connected $\Fcal$, and in Section~\ref{secsupexp} the superexponential lower bounds.
We conclude the article in Section~\ref{upheaval} with some open questions for further research.

%
%
%
%
%


\section{Preliminaries}
\label{banality}




In this section we provide some preliminaries to be used in the following sections.


\bigskip
\noindent
\textbf{Sets, integers, and functions.}
We denote by $\Nbb$ the set of every non-negative integer and we
set $\Nbb^+=\Nbb\setminus\{0\}$.
Given two integers $p$ and $q$, the set $\intv{p,q}$
refers to the set of every integer $r$ such that $p \leq r \leq q$.
Moreover, for each integer $p \geq 1$, we set $\Nbb_{\geq p} = \Bbb{N}\setminus\intv{0,p-1}$.
In the set $\intv{1,k}\times \intv{1,k}$, a \emph{row} is a set $\{i\} \times \intv{1,k}$ and a \emph{column} is a set $\intv{1,k} \times \{i\}$ for some $i\in \intv{1,k}$.

We use $\emptyset$ to denote the empty set and
$\varnothing$ to denote the empty function, i.e., the unique subset of $\emptyset\times\emptyset$.
Given a function $f:A\to B$ and a set $S$, we define $f|_{S}=\{(x,f(x))\mid x\in S\cap A\}$.
Moreover if $S \subseteq A$, we set $f(S) = \bigcup_{s \in S} \{f(s)\}$.
Given a set $S$, we denote by ${S \choose 2}$ the set containing
every subset of $S$ that has cardinality two. We also denote by $2^{S}$ the set of all the subsets of $S$.
If ${\cal S}$ is a collection of objects where the operation $\cup$ is defined, then we  denote $\cupall{\cal S}=\bigcup_{X\in{\cal S}}X$.

Let $p\in\Bbb{N}$ with $p\geq 2$, let $f:\Bbb{N}^{p}\rightarrow\Bbb{N}$, and let  $g:\Bbb{N}^{p-1}\rightarrow\Bbb{N}$.
We say that $f(x_{1},\ldots,x_{p})=\Ocal_{x_p}(g(x_{1},\ldots,x_{p-1}))$ if there is a function   $h:\Bbb{N}\rightarrow\Bbb{N}$ such that  \\ $f(x_{1},\ldots,x_{p})=\Ocal(h(x_p)\cdot g(x_{1},\ldots,x_{p-1}))$.

\bigskip
\noindent
\textbf{Graphs}.  All the graphs that we consider in this paper are undirected, finite, and without  loops or multiple edges.
We use standard graph-theoretic notation, and we refer the reader to~\cite{Die10} for any undefined terminology. Given a graph $G$, we denote by $V(G)$ the set of vertices of  $G$ and by $E(G)$ the set of the edges of $G$.
We call $|V(G)|$ {\em the size} of $G$. A graph is {\em the empty} graph if its size is zero.
We also denote by $\leaf(G)$ the set of the vertices of $G$ that have degree exactly
one in the case where $|V(G)| \geq 2$, and $\leaf(G) = V(G)$ if $|V(G)| = 1$.
If $G$ is a tree (i.e., a connected acyclic graph) then $\leaf(G)$ is the set of the {\em leaves}
of $G$.
A {\em vertex labeling} of $G$ is some injection $\rho: V(G)\to\Bbb{N}^{+}$. 
{Given a vertex $v \in V(G)$, we define the \emph{neighborhood} of
  $v$ as $N_G(v) = \{u \mid u \in V(G), \{u,v\} \in E(G)\}$ and the \emph{closed neighborhood}
  of $v$ as $N_G[v] = N_G(v) \cup \{v\}$.} If $X\subseteq V(G)$, then we write $N_{G}(X)=(\bigcup_{v\in X}N_{G}(v))\setminus X$.
The {\em degree} of a vertex $v$ in $G$ is defined as $\degs{G}{v}=|N_{G}(v)|$. A graph is called {\em subcubic}
if all its vertices have degree at most three.

A \emph{subgraph} $H = (V_H,E_H)$ of a graph $G$ is a graph such that $V_H \subseteq V(G)$ and $E_H \subseteq E(G) \cap {V_H \choose 2}$.
If $S \subseteq V(G)$, the subgraph of $G$ \emph{induced by} $S$, denoted $G[S]$, is the graph $(S, E(G) \cap {S \choose 2})$.
We also define $G \gm S$ to be the subgraph of $G$ induced by $V(G) \sm S$.
If $S \subseteq E(G)$, we denote by $G \gm S$ the graph $(V(G), E(G) \sm S)$.
%

If $s,t\in V(G)$ are two distinct vertices, an {\em $(s,t)$-path} of $G$
is any connected subgraph $P$ of $G$ with maximum degree two and where $s,t\in \leaf(P)$, and a \emph{path} is an $(s,t)$-path for some vertices $s$ and $t$.
We finally denote by ${\cal P}(G)$ the set of all paths of $G$.
{Given $P \in \Pcal(G)$, we say that $v \in V(P)$ is an \emph{internal vertex} of $P$ if $\degs{P}{v} = 2$.}
{Given an integer $i$ and a graph $G$, we say that $G$ is $i$-connected if for each $\{u,v\} \in {V(G) \choose 2}$,
  there exists a set $\Pcal' \subseteq \Pcal(G)$ of $(u,v)$-paths of $G$ such that
  $|\Pcal'| = i$ and
  for each $P_1,P_2 \in \Pcal'$ such that $P_1 \not = P_2$, $V(P_1) \cap V(P_2) = \{u,v\}$.}

We denote by $K_{r}$, $P_r$, and $C_r$ the complete graph, the path, and the cycle on $r$ vertices, respectively, and
by $K_{r_{1},r_{2}}$ the complete bipartite graph where the one part has $r_{1}$ vertices and the other $r_{2}$.




\bigskip
\noindent
\textbf{Block-cut trees.}
A connected graph $G$ is \emph{biconnected} if for any $v \in V(G)$, $G \gm \{v\}$ is connected.
Notice that $K_{2}$ is the only  biconnected graph that it is not $2$-connected and that $K_1$ is not biconnected.
A \emph{block} of a graph $G$ is a maximal biconnected subgraph of $G$.
We name $\block(G)$ the set of all blocks of $G$ and we name $\cut(G)$ the set of all cut vertices of $G$.
If $G$ is connected, we define the \emph{block-cut tree} of $G$ to be the tree $\bct(G) = (V,E)$ such that
\begin{itemize}
\item  $V = \block(G) \cup \cut(G)$ and
\item $E = \{\{B,v\} \mid  B \in \block(G), v \in \cut(G) \cap V(B)\}$.
\end{itemize}
Note that  $\leaf(\bct(G)) \subseteq \block(G)$. It is worth mentioning that the block-cut tree of a graph can be computed in linear time using depth-first search~\cite{HopcroftT73}.
Basic properties of block-cut trees can be found, for instance, in~\cite{BondyM08} (where they are called \emph{block trees}).

\medskip
Let $\Fcal$ be a  set of connected graphs on at least two vertices. 
Given $H \in \Fcal$ and $B \in \leaf(\bct(H))$, we say that $(H,B)$ is an \emph{essential pair} if for each $H' \in \Fcal$ and each $B' \in \leaf(\bct(H'))$, $|E(B)| \leq |E(B')|$.
Given an essential pair $(H,B)$ of $\Fcal$, we define the \emph{first vertex} of $(H,B)$ to be, if it exists, the only cut vertex of $H$ contained in $V(B)$, or an arbitrarily chosen vertex of $V(B)$ otherwise.
We define the \emph{second vertex} of $(H,B)$ to be an arbitrarily chosen vertex of $V(B)$ that is a  neighbor in $H[B]$ of the first vertex of $(H,B)$.
Note that such a vertex always exists, as a block has at least two vertices.
Given an essential pair $(H,B)$ of $\Fcal$, we assume that the choices for the first and second vertices of $(H,B)$ are fixed.

Moreover, given an essential pair $(H,B)$ of $\Fcal$, we define the \emph{core} of $(H,B)$ to be the graph $H \gm (V(B) \sm \{a\})$ where $a$ is the first vertex of $(H,B)$.
Note that $a$ is a vertex of the core of $(H,B)$.


%

\bigskip
\noindent
\textbf{Minors and topological minors.}
Given two graphs $H$ and $G$ and two functions $\phi: V(H)\to V(G)$ and $\sigma: E(H)\to{\cal P}(G)$, we say that $(\phi,\sigma)$ is {\em a topological minor model of $H$ in $G$}
if
\begin{itemize}
\item for every $\{x,y\}\in E(H),$ $\sigma(\{x,y\})$ is an $(\phi(x),\phi(y))$-path 
  in $G$ and
\item if $P_{1},P_{2}$ are two distinct paths in $\sigma(E(H))$, then none of the internal vertices of $P_{1}$
  is a vertex of $P_{2}$.
\end{itemize}

The {\em branch} vertices of
$(\phi,\sigma)$ are the vertices in  $\phi(V(H))$, while the {\em subdivision} vertices of $(\phi,\sigma)$ are the internal vertices of the paths in $\sigma(E(H))$.

We say that $G$ contains $H$ as a \emph{topological minor}, denoted by $H\preceq_{\sf tm} G$, if there
is a topological minor model  $(\phi,\sigma)$ of $H$ in $G$.


Given two graphs $H$ and $G$ and a function $\phi: V(H)\to 2^{V(G)}$, we say that $\phi$ is {\em a minor model of $H$ in $G$}
if
\begin{itemize}
\item for every $x,y \in V(H)$ such that $x \not = y$, $\phi(x) \cap \phi(y) = \emptyset$,
\item for every $x \in V(H)$, $G[\phi(x)]$ is a connected non-empty graph and
\item for every $\{x,y\} \in E(H)$, there exist  $x' \in \phi(x)$ and $y' \in \phi(y)$ such that $\{x',y'\} \in E(G)$.
\end{itemize}

We say that $G$ contains $H$ as a \emph{minor}, denoted by $H\prem G$, if there
is a minor model  $\phi$ of $H$ in $G$.

\medskip
\noindent
\textbf{Graph separators and (topological) minors.}
Let $G$ be a graph and $S\subseteq V(G)$.
Then for each connected component $C$ of $G\gm S$, we define the \emph{cut-component} of the triple $(C,G,S)$ to be the graph
whose vertex set is $V(C) \cup S$ and whose edge set is
$E(G[V(C) \cup S])$.

\begin{lemma}
  \label{reifying}
  Let $i\geq 2$ be an integer, let $H$ be an $i$-connected graph,
  let $G$ be a graph, and
  let $S \subseteq V(G)$ such that $|S| \leq i-1$.
  If $H$ is a topological minor (resp. a minor) of $G$, then there exists a connected component $G'$ of $G\gm S$ such that
  $H$ is a topological minor (resp. a minor) of the cut-component of $(G',G,S)$.
\end{lemma}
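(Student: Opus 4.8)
The plan is to exploit the high connectivity of $H$ to show that a model of $H$ in $G$ cannot "spread out" across the separator $S$. Suppose first that $H$ is a topological minor of $G$, witnessed by a topological minor model $(\phi,\sigma)$. Consider the subgraph $M$ of $G$ formed by the union of all branch vertices $\phi(V(H))$ together with all the paths in $\sigma(E(H))$; this $M$ is connected (indeed it is an $H$-subdivision, hence connected since $H$ is connected). The key observation is that $M \gm S$ has at most $|S| \le i-1$ fewer vertices than $M$, and I want to argue that $M \gm S$ is still connected, so that $M$ lives inside a single cut-component. This is where the $i$-connectivity of $H$ enters: if deleting the at most $i-1$ vertices of $S$ disconnected $M$, then these vertices would project (via the model) to a vertex cut of $H$ of size at most $i-1$, contradicting that $H$ is $i$-connected.

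More precisely, for the topological minor case I would reason as follows. Each vertex of $S$ that lies on $M$ is either a branch vertex $\phi(x)$ for some $x \in V(H)$, or an internal (subdivision) vertex of exactly one path $\sigma(e)$ for some $e \in E(H)$ — these alternatives are exhaustive and mutually exclusive by the definition of topological minor model. I would contract, inside $M$, each path $\sigma(e)$ to a single edge, obtaining a copy of $H$ sitting inside (a minor of) $M$; more useful is to track which connected component of $M \gm S$ each branch vertex ends up in. Define, for a connected component $G'$ of $G \gm S$, the set $X_{G'} = \{x \in V(H) : \phi(x) \in V(G')\}$ together with any $x$ whose incident model-paths all avoid the "far" components. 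If two branch vertices $\phi(x), \phi(y)$ with $\{x,y\} \in E(H)$ landed in different components of $G \gm S$, then the path $\sigma(\{x,y\})$ must pass through $S$; removing from $V(H)$ the (at most $i-1$) vertices of $H$ "responsible" for the vertices of $S \cap V(M)$ — namely those $x$ with $\phi(x) \in S$, and for each subdivision vertex in $S$ one endpoint of the corresponding edge — would disconnect $H$, contradicting $i$-connectivity (here I use $|S| \le i-1$ so that at most $i-1$ vertices of $H$ are removed, and $H$ being $i$-connected means it has no vertex cut of size $\le i-1$, provided $|V(H)| > i$, which holds as $H$ is $i$-connected and not complete-too-small; the degenerate small cases can be checked directly). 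Hence all of $\phi(V(H))$ lies in one component $G'$, and then all model-paths $\sigma(e)$, being paths in $G$ between vertices of $V(G') \cup S$, lie inside the cut-component of $(G', G, S)$. Thus $(\phi,\sigma)$ is a topological minor model of $H$ in that cut-component.

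For the minor case the argument is parallel and slightly cleaner: given a minor model $\phi : V(H) \to 2^{V(G)}$, let $M = G\left[\bigcup_{x \in V(H)} \phi(x)\right]$, which is connected since $H$ is connected and each branch set $G[\phi(x)]$ is connected. Again I claim $M \gm S$ is connected, so all branch sets, minus their intersections with $S$, lie in one cut-component; but care is needed because a branch set $\phi(x)$ could itself be split by $S$. To handle this, I would restrict attention to the components of $G \gm S$ and show that for the component $G'$ containing, say, the branch set $\phi(x_0)$ of some fixed $x_0$, every $x \in V(H)$ has $\phi(x) \cap V(G') \neq \emptyset$; otherwise the set $\{x : \phi(x) \subseteq S \cup (\text{components other than }G')\}$ would, after deleting from $H$ the $\le i-1$ vertices $x$ with $\phi(x) \cap S \neq \emptyset$, be separated in $H$ from $x_0$, again contradicting $i$-connectivity. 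Then define $\phi'(x) = \phi(x) \cap (V(G') \cup S)$; one checks $\phi'$ is still a minor model of $H$ (each $\phi'(x)$ is nonempty, the induced subgraphs remain connected because any path within $\phi(x)$ between two surviving vertices that leaves $V(G') \cup S$ would have to re-enter through $S$... — this connectivity-of-branch-set point is the spot that needs the most care, and I would argue it by noting $G[\phi(x)] \gm S$ has its components each attached to $S$, so $G[\phi(x) \cap (V(G')\cup S)]$ is connected after adding back the $S$-vertices; alternatively, take a spanning tree of $G[\phi(x)]$ and trim), and the edges between branch sets are preserved inside the cut-component. I expect the main obstacle to be exactly this last point — cleanly showing that truncating the branch sets to one side of the separator keeps them connected and keeps the model valid — together with pinning down the correspondence between vertices of $S$ and a vertex cut of $H$ in the topological case; both are "morally obvious" but require a careful, case-aware write-up.
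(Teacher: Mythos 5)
The paper's proof is shorter and avoids the ``projection to a vertex cut of $H$'' bookkeeping you flagged: it shows only that no two branch vertices land in different components of $G\gm S$. If $\phi(x_1)\in V(G_1)$ and $\phi(x_2)\in V(G_2)$, the $i$ internally vertex-disjoint $(x_1,x_2)$-paths of $H$ lift through the model to $i$ internally vertex-disjoint $(\phi(x_1),\phi(x_2))$-paths in $G$, which is impossible since $S$ separates $\phi(x_1)$ from $\phi(x_2)$ and $|S|\leq i-1$. The paper then asserts that all branch vertices lying in $V(G')\cup S$ forces the model into the cut-component. Your route is genuinely different: you aim for the stronger claim that $M\gm S$ is connected, via a small preimage cut of $H$. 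That stronger claim is false, and the obstacle you anticipated is real: several subdivision vertices of a single path $\sigma(e)$ can all fall in $S$, so the preimage of $S$ in $H$ need not be a cut.

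In fact the lemma is false as stated once $|S|\geq 2$, and both your plan and the paper's proof have the gap at the same final step. Take $H=K_4$ (so $i=3$), let $G$ have vertex set $\{a_1,\dots,a_4,s_1,s_2,u\}$ and edges $a_1a_3,\,a_1a_4,\,a_2a_3,\,a_2a_4,\,a_3a_4,\,a_1s_1,\,s_1u,\,us_2,\,s_2a_2$, and set $S=\{s_1,s_2\}$. Setting $\phi(j)=a_j$, the five direct edges plus the path $a_1 s_1 u s_2 a_2$ give a topological minor model of $K_4$ in $G$ with all branch vertices in $V(G')\cup S$ for the component $G'$ on $\{a_1,\dots,a_4\}$; yet $K_4$ is neither a topological minor nor a minor of the cut-component $G[\{a_1,\dots,a_4,s_1,s_2\}]$ (there is no $a_1$--$a_2$ path there avoiding $a_3,a_4$), nor of $G[\{u,s_1,s_2\}]\cong P_3$. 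This does not affect the paper, since Lemma~\ref{reifying} is invoked only from Lemmas~\ref{captured} and~\ref{floppies} with $S$ a single cut vertex, and when $|S|=1$ a model path cannot leave $V(G')\cup S$ without revisiting the unique vertex of $S$, so both your argument and the paper's close correctly. When writing this up, you should either add $|S|\leq 1$ as a hypothesis, or weaken the conclusion to ``$\phi(V(H))\subseteq V(G')\cup S$ for some component $G'$,'' which is what is actually true and all that is used downstream.
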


\begin{proof}
  We prove the lemma for the topological minor version, and the minor version can be proved with similar arguments. Let $i$, $H$, $G$, and $S$ be defined as in the statement of the lemma.
  Assume that $H \pretp G$ and let
  $(\phi,\sigma)$ be a topological minor model of $H$ in $G$. If $S$ is not a separator of $G$, then the statement is trivial, as in that case the cut-component of $(G\setminus S,G,S)$ is  $G$. Suppose henceforth that $S$ is a separator of $G$, and assume for contradiction that there exist  two connected components $G_1$ and $G_2$ of $G\gm S$ and two distinct vertices $x_1$ and $x_2$ of $H$ such that
  $\phi(x_1) \in V(G_1)$ and $\phi(x_2) \in V(G_2)$.
  Then, as $H$ is $i$-connected, there should be $i$ internally vertex-disjoint paths from $\phi(x_1)$ to $\phi(x_2)$ in $G$.
  As $S$ is a separator of size at most $i-1$, this is not possible.
  Thus, there exists a connected component $G'$ of $G\gm S$ such that for each $x \in V(H)$, $\phi(x) \in V(G') \cup S$.
  This implies that $H$ is a topological minor of the cut-component of $(G',G,S)$.
\end{proof}

\begin{lemma}
  \label{captured}
  Let $G$ be a connected graph, let $v$ be a cut vertex of $G$, and let $V$ be the vertex set of a connected component of $G \gm \{v\}$.
  If $H$ is a connected graph such that $H \pretp G$ and for each leaf $B$ of $\mbox{\rm \bct}(H)$, $B \not \pretp G[V \cup \{v\}]$,
  then $H \pretp G \gm V$. 
\end{lemma}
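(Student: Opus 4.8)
The plan is to take a topological minor model $(\phi,\sigma)$ of $H$ in $G$ and argue that it can be ``pushed out'' of the component $V$, so that it lives entirely in $G \gm V$. First I would set $S = \{v\}$, which is a separator of $G$ of size $1$, and consider the connected components of $G \gm \{v\}$; write $G_V = G[V \cup \{v\}]$ for the cut-component of $(G[V], G, \{v\})$, and let $G'$ be the graph obtained from $G$ by deleting $V$ (so $G' = G \gm V$, which still contains $v$ and is connected since $G$ is connected and $v$ is a cut vertex). The key claim will be that no branch vertex of $(\phi,\sigma)$ can lie in $V$: indeed, since $H$ is connected, once one branch vertex lies in $V$, either all branch vertices lie in $V \cup \{v\}$, or the model must cross the cut vertex $v$.

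The heart of the argument is to rule out the possibility that $H$ genuinely ``uses'' $V$. I would argue as follows. Suppose some branch vertex $\phi(x)$ lies in $V$. Consider the subtree of $\bct(H)$ and, in particular, a leaf block $B$ of $\bct(H)$; the intuition is that a leaf block of $H$ is attached to the rest of $H$ through at most one cut vertex, so if the model of $H$ dips into $V$, then the model of some leaf block $B$ must sit entirely inside $G_V = G[V \cup \{v\}]$, contradicting the hypothesis that $B \not\pretp G[V \cup \{v\}]$. To make this precise: if all branch vertices $\phi(x)$ with $x \in V(B)$ lie in $V \cup \{v\}$, and moreover all subdivision vertices of the paths $\sigma(e)$ for $e \in E(B)$ lie in $V \cup \{v\}$, then the restriction of $(\phi,\sigma)$ to $B$ is a topological minor model of $B$ in $G_V$, a contradiction. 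So it suffices to show that, choosing $B$ appropriately, its model does lie inside $G_V$. Here I would use that $B$ is biconnected (at least $2$-connected, unless $B = K_2$): if two branch vertices of $B$'s model were in different components of $G \gm \{v\}$, or one were in $V$ and another outside $V \cup \{v\}$, then by Lemma~\ref{reifying} applied with $i = 2$ and $S = \{v\}$ (when $B$ is $2$-connected) the model of $B$ would have to be confined to a single cut-component of $(\cdot, G, \{v\})$ — and since one of its branch vertices is in $V$, that cut-component is $G_V$. The case $B = K_2$ is handled directly: a $K_2$-model is just an edge, which either lies in $V \cup \{v\}$ or is disjoint from $V$.

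More carefully, I would organize the case analysis around which leaf block's model is forced into $G_V$. Starting from a branch vertex $\phi(x_0) \in V$, walk through $\bct(H)$: the vertex $x_0$ lies in some block, hence in some leaf block $B_0$ or one can follow the tree structure to a leaf block $B$ such that the entire model of $B$ is ``on the $V$-side.'' Formally, let $Z = \{x \in V(H) : \phi(x) \in V\}$; this is nonempty. If $Z = V(H)$ we are immediately done since then some leaf block has all its branch vertices in $V$, and by $2$-connectivity (or the $K_2$ case) its subdivision paths stay in $G_V$ too. Otherwise $\{v\}$ separates $\phi(Z)$ from the rest of $\phi(V(H))$ in $G$; since $H$ is connected, $\{v\}$ is ``realized'' as a cut vertex of the image, and the preimage structure forces $H$ to have a separation of order at most $1$ respecting $Z$ — i.e., there is a cut vertex $c$ of $H$ (with $\phi(c) = v$) separating $Z$ from $V(H) \setminus (Z \cup \{c\})$, or $Z \cup \{c\}$ induces a union of blocks hanging off $c$. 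In $\bct(H)$, the set of blocks contained in $H[Z \cup \{c\}]$ forms a subtree containing a leaf of $\bct(H)$, which is the desired leaf block $B$; all branch vertices of $B$ lie in $Z \cup \{c\}$, hence in $V \cup \{v\}$, and by the separator argument above its subdivision paths lie in $G_V$, giving $B \pretp G[V \cup \{v\}]$ — the required contradiction. Hence $Z = \emptyset$, so all branch vertices lie in $(V(G) \setminus V) \cup \{v\}$, and then (again since $H$ is connected and $\{v\}$ separates) all subdivision vertices lie there too, so $(\phi,\sigma)$ is a model of $H$ in $G \gm V$, i.e. $H \pretp G \gm V$.

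\textbf{Main obstacle.} I expect the delicate point to be the bookkeeping in the middle case $\emptyset \ne Z \subsetneq V(H)$: translating ``$\{v\}$ separates $\phi(Z)$ from the rest in $G$'' into the clean statement ``there is a leaf block $B$ of $\bct(H)$ whose branch vertices are all in $Z \cup \{c\}$'' requires invoking the block-cut tree structure carefully, together with the fact that subdivision vertices of a path $\sigma(e)$ for $e \in E(B)$ cannot escape to the other side of $v$ without the path $\sigma(e)$ passing through $v$ twice — which a path cannot do. Lemma~\ref{reifying} with $i=2$ is exactly the tool that packages this for a single $2$-connected block; the remaining work is just propagating it from one block to the relevant leaf block of $\bct(H)$, plus separately handling $K_2$ leaf blocks.
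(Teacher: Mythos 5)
Your overall strategy closely parallels the paper's (show that the model of every leaf block avoids the $V$-side, then propagate to the rest of the model), but there is a genuine gap in your handling of the $K_2$ leaf-block case.

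You assert that ``a $K_2$-model is just an edge, which either lies in $V \cup \{v\}$ or is disjoint from $V$.'' Both halves of this claim are wrong. A topological-minor model of $K_2$ is a \emph{path} $\sigma(e)$ in $G$, which may have many subdivision vertices, and such a path can start at a branch vertex inside $V$, pass through $v$ as an internal vertex, and end at a branch vertex outside $V \cup \{v\}$ --- a possibility your dichotomy does not cover. This is not a hypothetical corner case: in your middle case $\emptyset \neq Z \subsetneq V(H)$, if $v$ is a subdivision vertex rather than a branch vertex (so there is no $c$ with $\phi(c)=v$), then exactly one path $\sigma(e^*)$ has $v$ as an internal vertex, $e^*$ is forced to be a bridge of $H$, and its endpoints map to the two sides of $v$. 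If that bridge happens to be a leaf block of $\bct(H)$, your proposed contradiction does not materialize under your stated $K_2$ dichotomy, and the separation-through-a-cut-vertex-$c$-of-$H$ picture you describe breaks down.

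The missing observation, which the paper makes at the very start of its proof, is that the $K_2$ leaf-block case cannot arise at all: since $V$ is the nonempty vertex set of a connected component of $G \gm \{v\}$ and $G$ is connected, $G[V\cup\{v\}]$ contains at least one edge, so $K_2 \pretp G[V\cup\{v\}]$ always holds. Hence the hypothesis ``$B \not\pretp G[V\cup\{v\}]$ for every leaf $B$ of $\bct(H)$'' already forbids $H$ from having any $K_2$ leaf block. Once this is in place, every leaf block is $2$-connected, Lemma~\ref{reifying} with $i=2$ applies uniformly, and your worrying subcase (where $v$ is a subdivision vertex and no $c$ maps to it) is automatically excluded. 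Beyond this gap, your proof differs from the paper's mostly in style: the paper argues forward (leaf-block models avoid $V$; for any other vertex $x$ of $H$, two internally disjoint paths from leaf-block vertices to $x$ force $\phi(x)\notin V$; then subdivision vertices are handled the same way), whereas you argue by contradiction via the set $Z$ of branch vertices mapped into $V$. The forward version sidesteps the delicate separator bookkeeping you flagged as the ``main obstacle,'' precisely because it never needs to locate a vertex of $H$ mapping onto $v$.
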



\begin{proof}
  Let $G$, $v$, $V$, and $H$ be defined as in the statement of the lemma.
  Let $B \in \leaf(\bct(H))$.
  If $B$ is a single edge, then the condition $B \not \pretp G[V \cup \{v\}]$ implies that $V = \es$.
  But $V$ is the vertex set of a connected component of $G \gm \{v\}$ and so $V \not = \es$.
  This implies that the case $B$ is a single edge cannot occur.
  If $B$ is not a simple edge, then by definition $B$ is $2$-connected
  and then, by Lemma~\ref{reifying}, $B \pretp G \gm V$.
  This implies that there is a topological minor model $(\phi,\sigma)$ of $H$ in $G$  such that for each $B \in \leaf(\bct(H))$ and for each $b \in B$,
  $\phi(b)\not \in V$.

  We show now that for each $x \in V(H)$, $\phi(x)\not \in V$.
  If $ V(H) \sm (\bigcup_{B \in \leaf(\bct(H))} V(B)) = \es$ then the result is already proved.
  Otherwise, let $x \in V(H) \sm (\bigcup_{B \in \leaf(\bct(H))} V(B))$.
  By definition of the block-cut tree, there exist $b_1$ and $b_2$ in $\bigcup_{B \in \leaf(\bct(H))} V(B)$ such that $x$ lies on a $(b_1,b_2)$-path $P$ of $\Pcal(H)$.
  Let $P_i$ be the $(b_i,x)$-subpath of $P$ for each $i \in \{1,2\}$.
  By definition of $P$, we have that $V(P_1) \cap V(P_2) = \{x\}$.
  This implies that there exists a $(\phi(b_1),\phi(x))$-path $P'_1$ and a $(\phi(b_2),\phi(x))$-path $P'_2$ in $\Pcal(G)$ such that $V(P'_1) \cap V(P'_2) = \{\phi(x)\}$.
  Then, as $v$ is a cut vertex of $G$, it follows that
  $\phi(x) \not \in V$.
  Thus, for each $x \in V(H)$,  $\phi(x) \not \in V$.
  Let $\{x,y\}$ be an edge of $E(H)$.
  As $\sigma(\{x,y\})$ is a simple $(\phi(x),\phi(y))$-path, both $\phi(x)$ and $\phi(y)$ are not in $V$ and $v$ is a cut vertex of $G$, we have, with the same argumentation that before that,
  for each $z \in V(\sigma(\{x,y\})$, $z \not \in V$.
  This concludes the proof.
\end{proof}

Using the same kind of argumentation with minors instead of topological minors, we also obtain the following lemma.
\begin{lemma}
  \label{floppies}
  Let $G$ be a connected graph, let $v$ be a cut vertex of $G$, and let $V$ be the vertex set of a connected component of $G \gm \{v\}$.
  If $H$ is a graph such that $H \prem G$ and for each leaf $B$ of $\bct(H)$, $B \not \prem G[V \cup \{v\}]$,
  then $H \prem G \gm V$.
\end{lemma}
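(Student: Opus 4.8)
The plan is to mirror the proof of Lemma~\ref{captured} essentially verbatim, replacing every occurrence of a topological minor model by a minor model and every occurrence of the relation $\pretp$ by $\prem$, and checking that each inference step remains valid for branch sets rather than branch vertices. So let $G$, $v$, $V$, and $H$ be as in the statement, and fix a minor model $\phi\colon V(H)\to 2^{V(G)}$ of $H$ in $G$.

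First I would handle the leaves of $\bct(H)$. Take $B\in\leaf(\bct(H))$. If $B$ is a single edge, then $B\not\prem G[V\cup\{v\}]$ forces $V=\es$, which contradicts that $V$ is the vertex set of a connected component of $G\gm\{v\}$; hence this case does not occur. If $B$ is not a single edge, then $B$ is $2$-connected, so by Lemma~\ref{floppies}'s hypothesis and Lemma~\ref{reifying} (minor version, with $i=2$ and separator $\{v\}$ of size $i-1=1$) we get $B\prem G\gm V$. The point to be careful about here is that Lemma~\ref{reifying} only guarantees that $B$ is a minor of \emph{some} cut-component of $(\,\cdot\,,G,\{v\})$; the hypothesis $B\not\prem G[V\cup\{v\}]$ is exactly what rules out the cut-component containing $V$, so $B$ is a minor of the union of the remaining components together with $v$, i.e.\ of $G\gm V$. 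From this I would argue, as in Lemma~\ref{captured}, that one may choose the minor model $\phi$ so that for every $B\in\leaf(\bct(H))$ and every $b\in V(B)$, $\phi(b)\cap V=\es$; the reason is that the branch sets of the vertices of a leaf block $B$ form a connected subconfiguration witnessing $B\prem G$, and since $v$ is a cut vertex, a connected subgraph of $G$ that is a minor model of the $2$-connected $B$ and avoids being confined to $G[V\cup\{v\}]$ can be re-routed to avoid $V$ altogether.

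Next I would propagate this to all of $V(H)$. If $V(H)\sm\bigcup_{B\in\leaf(\bct(H))}V(B)=\es$ we are done; otherwise take $x$ in this difference. By the structure of the block-cut tree, $x$ lies on a $(b_1,b_2)$-path $P$ of $H$ with $b_1,b_2\in\bigcup_{B\in\leaf(\bct(H))}V(B)$, and writing $P_i$ for the $(b_i,x)$-subpath we have $V(P_1)\cap V(P_2)=\{x\}$. Contracting each $P_i$ in the model gives, in $G$, a connected subgraph meeting $\phi(b_1)$ and $\phi(x)$ and a second connected subgraph meeting $\phi(b_2)$ and $\phi(x)$, sharing only vertices inside $\phi(x)$; since $\phi(b_1)$ and $\phi(b_2)$ avoid $V$ and $\phi(x)$ is connected, if $\phi(x)$ met $V$ then $v$ would have to appear on both sides, and one checks this contradicts $v$ being a cut vertex (the two "arms" would have to pass through $v$ and yet the branch set $\phi(x)$ stays connected and touches $V$, forcing a cycle through $v$ incompatible with the cut-vertex property — this is the only place the argument needs a line of genuine checking). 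Hence $\phi(x)\cap V=\es$ for all $x\in V(H)$. Finally, for each edge $\{x,y\}\in E(H)$ we only use vertices of $\phi(x)\cup\phi(y)$ and one edge between them, all of which lie outside $V$, and there is no subdivision path to worry about in the minor setting; so the restriction of $\phi$ to $G\gm V$ is a valid minor model of $H$, giving $H\prem G\gm V$. The main obstacle, as flagged, is the re-routing argument ensuring the branch sets can be taken disjoint from $V$ — everything else is a transcription of Lemma~\ref{captured}.
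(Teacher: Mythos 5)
Your proposal correctly reproduces the easy parts (leaf blocks cannot be single edges; the leaf‐block branch sets can be arranged to avoid $V$; the final edge check is trivial once all branch sets avoid $V$), and you rightly flag the propagation step as the one requiring genuine checking. Unfortunately, the line of checking you sketch there is wrong, and this is a real gap, not just an omitted detail.

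Concretely, in the propagation step you correctly deduce that if $\phi(x)\cap V\neq\emptyset$ then both ``arms'' contain $v$, hence $v\in\phi(x)$. But from here you claim a contradiction with ``the cut-vertex property,'' invoking a forced cycle through $v$. There is no such contradiction: a cut vertex may of course lie on cycles, and more to the point, nothing in the definition of a minor model prevents a branch set from containing $v$ and simultaneously intersecting $V$ and other components of $G\setminus\{v\}$. This is exactly the feature that distinguishes the minor case from the topological-minor case of Lemma~\ref{captured}: there, $\phi(x)$ is a single vertex, so $v\in\{\phi(x)\}$ and $\phi(x)\in V$ are incompatible (as $v\notin V$), whereas here $\phi(x)$ is a connected \emph{set} that may straddle $v$. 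So ``$v\in\phi(x)$'' is the \emph{conclusion}, not a contradiction, and your argument never establishes $\phi(x)\cap V=\emptyset$.

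The missing idea is a pruning argument rather than a parity/cut argument. Observe that at most one branch set, say $\phi(z)$, can contain $v$. First one shows $W:=\{y\in V(H):\phi(y)\subseteq V\}$ is empty: if $W\neq\emptyset$, then $N_H(W)\subseteq\{z\}$ (any $H$-edge leaving $W$ must be realized by a $G$-edge leaving $V$, which can only go to $v\in\phi(z)$), so $W$ is a union of components of $H\setminus\{z\}$, each of which contains a leaf block $B$ of $\bct(H)$; the branch sets of $V(B)\setminus\{z\}$ then lie in $V$ and, after intersecting $\phi(z)$ with $V\cup\{v\}$ (which stays connected because $v$ cuts off $V$), one exhibits $B\prem G[V\cup\{v\}]$, contradicting the hypothesis. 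Once $W=\emptyset$, one replaces $\phi(z)$ by $\phi(z)\setminus V$; this set is still connected (again because $v$ separates $V$), branch sets stay disjoint, and for every edge $\{z,y\}\in E(H)$ the realizing $G$-edge cannot have its $\phi(z)$-endpoint in $V$ (else its $\phi(y)$-endpoint would lie in $V$, forcing $\phi(y)\subseteq V$ and $y\in W$). The pruned model is thus a minor model of $H$ in $G\setminus V$. Note that the paper itself disposes of Lemma~\ref{floppies} with a one-line ``same argumentation'' remark; your instinct to be suspicious of the transcription was right, but the repair is this pruning of the unique branch set through $v$, not the cycle argument you proposed.
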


In the above two lemmas, we have required graph $H$ to be connected so that $\bct(H)$ is well-defined, but we could relax this requirement, and replace in both statements ``for each leaf $B$ of $\bct(H)$'' with ``for each connected component $H'$ of $H$ and each leaf $B$ of $\bct(H')$''.


%
%


\bigskip
\noindent
\textbf{Graph collections.}
Let ${\cal F}$  be a collection of graphs. From now on instead of ``collection of graphs''  we use the shortcut ``collection''.
If ${\cal F}$ is a non-empty finite collection and all its graphs are  non-empty, then we say that ${\cal F}$
is a {\em proper collection}. For any proper collection ${\cal F}$, we  define
${\sf size}({\cal F})=\max\{\{|V(H)|\mid H\in \cal F\}\cup\{|{\cal F}|\}\}$.
Note that if the size of ${\cal F}$ is bounded, then
the size of the graphs in ${\cal F}$ is also bounded.
We say that ${\cal F}$ is a {\em  planar collection} (resp. {\em planar subcubic collection}) if it is proper and
{\sl at least one} of the graphs in ${\cal F}$ is  planar (resp. planar and subcubic).
We say that ${\cal F}$ is a {\em connected collection} if it is proper and
{\sl all} the graphs in ${\cal F}$ are connected and of size at least two.
We  say that $\Fcal$ is a {\em (topological) minor antichain}
if no two of its elements are comparable via the  (topological)  minor relation.

Let $\Fcal$ be a proper collection. We extend the (topological) minor relation to $\Fcal$ such that, given a graph $G$,
$\Fcal \preceq_{\sf tm} G$ (resp. $\Fcal \preceq_{\sf m} G$) if and only if there exists a graph $H \in \Fcal$ such that $H \preceq_{\sf tm} G$ (resp. $H \preceq_{\sf m} G$).


\bigskip
\noindent\textbf{Tree and path decompositions.} A \emph{tree decomposition} of a graph $G$ is a pair ${\cal D}=(T,{\cal X})$, where $T$ is a tree
and ${\cal X}=\{X_{t}\mid t\in V(T)\}$ is a collection of subsets of $V(G)$
such that:
\begin{itemize}
\item $\bigcup_{t \in V(T)} X_t = V(G)$,
\item for every edge $\{u,v\} \in E$, there is a $t \in V(T)$ such that $\{u, v\} \subseteq X_t$, and
\item for each $\{x,y,z\} \subseteq V(T)$ such that $z$ lies on the unique path between $x$ and $y$ in $T$,  $X_x \cap X_y \subseteq X_z$.
\end{itemize}
We call the vertices of $T$ {\em nodes} of ${\cal D}$ and the sets in ${\cal X}$ {\em bags} of ${\cal D}$. The \emph{width} of a  tree decomposition ${\cal D}=(T,{\cal X})$ is $\max_{t \in V(T)} |X_t| - 1$.
The \emph{treewidth} of a graph $G$, denoted by $\tw(G)$, is the smallest integer $w$ such that there exists a tree decomposition of $G$ of width at most $w$.
For each $t \in V(T)$, we denote by $E_t$ the set $E(G[X_t])$.

A \emph{path decomposition} of a graph $G$ is a tree decomposition ${\cal D}=(T,{\cal X})$ of $G$ such that $T$ is a path, and the  \emph{pathwidth} of a graph $G$, denoted by $\pw(G)$, is the smallest integer $w$ such that there exists a path decomposition of $G$ of width at most $w$. Note that, by definition, for every graph $G$ it holds that $\pw(G) \geq \tw(G)$.

\bigskip
\noindent\textbf{Parameterized complexity.} We refer the reader to~\cite{DF13,CyganFKLMPPS15} for basic background on parameterized complexity, and we recall here only some very basic definitions.
A \emph{parameterized problem} is a language $L \subseteq \Sigma^* \times \mathbb{N}$.
Integer $k$ is called the \emph{parameter} of an instance $I=(x,k) \in \Sigma^* \times \mathbb{N}$.
A parameterized problem is \emph{fixed-parameter tractable} ({\sf FPT}) if there exists an algorithm $\Acal$, a computable function $f$, and a constant $c$ such that given an instance $I=(x,k)$,
$\Acal$ (called an {\sf FPT} \emph{algorithm}) correctly decides whether $I \in L$ in time bounded by $f(k) \cdot |I|^c$.


\bigskip
\noindent
\textbf{Definition of the problems.} Let ${\cal F}$ be a proper collection.
We define the parameter ${\bf tm}_{\cal F}$ as the function that maps graphs to non-negative integers as follows:
\begin{eqnarray}
  \label{overlaid}
  {\bf tm}_{\cal F}(G) & = & \min\{|S|\mid  S\subseteq V(G)\wedge   \Fcal \npreceq_{\sf tm}  G\setminus S   \}.
\end{eqnarray}
The parameter ${\bf m}_{\cal F}$ is defined analogously, just by replacing $\Fcal \npreceq_{\sf tm}  G\setminus S$ with $\Fcal \npreceq_{\sf m}  G\setminus S$.
The main objective of this paper is to study the problem of computing the parameters
${\bf tm}_{\cal F}$ and ${\bf m}_{\cal F}$  for graphs of bounded treewidth under several instantiations of
the collection ${\cal F}$. The corresponding decision problems are formally defined as follows.
\medskip\medskip\medskip

\hspace{-.6cm}\begin{minipage}{6.7cm}
  \paraprobl{\textsc{$\Fcal$-TM-Deletion}}
  {A graph $G$ and an integer $k\in \Nbb$.}
  {Is ${\bf tm}_\Fcal(G)\leq k$?}
  {The treewidth of $G$.}{6.7}
\end{minipage}~~~~~
\begin{minipage}{6.7cm}
  \paraprobl{\textsc{$\Fcal$-M-Deletion}}
  {A graph $G$ and an integer $k\in \Nbb$.}
  {Is ${\bf m}_\Fcal(G)\leq k$?}
  {The treewidth of $G$.}{6.7}
\end{minipage}
\medskip\medskip

Note that in both above problems,
we can always assume that ${\cal F}$ is an antichain
with respect to the considered relation.  Indeed,  if ${\cal F}$
contains two graphs $H_{1}$ and $H_{2}$ where $H_{1}\preceq_{\sf tm} H_{2}$, then
${\bf tm}_{\cal F}(G)={\bf tm}_{{\cal F}'}(G)$ where ${\cal F}'={\cal F}\setminus\{H_{2}\}$ (similarly
for the minor relation).

Throughout the article, we let $n$ and $\tw$ be the number of vertices and the treewidth of the input graph of the considered problem, respectively.

\section{Single-exponential lower bound for any connected $\Fcal$}
\label{struggle}



In this section we prove the following result.

\begin{theorem}
  \label{temporis}
  Let $\Fcal$ be a connected collection.
  Neither \pbtm nor \pbm can be solved in time $2^{\smallo(\tw)} \cdot n^{\Ocal(1)}$
  unless the \ETH fails.
\end{theorem}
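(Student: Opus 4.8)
The plan is to reduce from \textsc{Planar Vertex Cover}, which by the standard sparsification-based chain of reductions cannot be solved in time $2^{\smallo(n)}\polyn$ on $n$-vertex planar graphs unless the \ETH fails, and moreover admits no $2^{\smallo(\sqrt{n})}\polyn$ algorithm — but for our purposes the linear lower bound suffices, combined with the fact that the treewidth of a planar graph on $n$ vertices is $\Ocal(\sqrt{n})$ is \emph{not} what we want; rather, we keep treewidth linearly related to the instance size via a gadget so that $2^{\smallo(\tw)}$ would contradict $2^{\smallo(n)}$ hardness of \textsc{Vertex Cover} on general (or subcubic planar) graphs. Concretely, let $\Fcal$ be a connected collection and let $(H,B)$ be an essential pair of $\Fcal$ with first vertex $a$, second vertex $b$, and core $C$ (all defined in the Preliminaries); write $\ell=|E(B)|\geq 1$. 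The idea is: given an instance $(G,k)$ of \textsc{Vertex Cover} with $G$ of maximum degree $3$ (which remains \ETH-hard), build a graph $G'$ by replacing each edge $\{u,v\}$ of $G$ with a copy of the block $B$ where one chosen vertex plays the role of an ``edge gadget'' attached to the vertex-gadgets representing $u$ and $v$, and attach to suitable spots enough pendant copies of the core so that the only way to destroy all $(topological)$ minor models of $H$ is to hit, for each original edge, at least one of its two endpoints. The precise gadgetry follows Bonnet et al.\ in spirit but is much lighter here since we only need a single-exponential bound.

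The key steps, in order, are: (1) fix the essential pair $(H,B)$ and analyse how a copy of $B$ — being either a single edge or a $2$-connected graph — can appear as a (topological) minor inside the constructed graph; invoke Lemma~\ref{captured} and Lemma~\ref{floppies} to argue that if none of the ``leaf blocks'' of $H$ embeds into a certain pendant component, then $H$ cannot embed into that component either, so the whole analysis localises to the edge-gadgets. (2) Construct $G'$: for each vertex $u$ of $G$ create a vertex-selector gadget, for each edge $\{u,v\}$ create an edge-gadget consisting of a copy of $B$ glued appropriately to the selector gadgets of $u$ and $v$, and hang on a bounded number of pendant copies of $\core(H,B)$ at the cut vertex $a$ so that completing just one more leaf-block $B$ there already yields a full model of $H$. (3) Show the treewidth of $G'$ is $\Ocal(1)\cdot\tw(G)+\Ocal(1)=\Ocal(\tw(G))$, and since $\tw(G)=\Ocal(n)$ trivially while $|V(G')|=\Ocal(|E(G)|)=\Ocal(n)$, a $2^{\smallo(\tw(G'))}\polyn$ algorithm would give a $2^{\smallo(n)}\polyn$ algorithm for \textsc{Vertex Cover}, contradicting the \ETH. (4) Prove the equivalence: $G$ has a vertex cover of size $\leq k$ iff $G'$ has a set $S$ of size $\leq k$ (plus a fixed additive constant absorbed into the edge-gadget accounting) with $\Fcal\npreceq G'\setminus S$ — the forward direction deletes the selector vertices corresponding to the cover, killing every edge-gadget's copy of $B$; the backward direction uses that $\ell$ is minimum over all leaf blocks of all graphs in $\Fcal$, so an edge-gadget copy of $B$ is genuinely the cheapest obstruction, forcing $S$ to behave like a vertex cover.

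The main obstacle is step (4), and within it the backward direction: one must rule out that a clever solution $S$ destroys the $H$-models ``globally'' — e.g., by deleting a single vertex shared by many edge-gadgets or by cutting the pendant core copies instead of the edge copies — rather than behaving like a vertex cover. This is exactly where the essential-pair machinery earns its keep: because $(H,B)$ minimises $|E(B)|$ among all leaf blocks of all members of $\Fcal$, and because the pendant cores are attached so that they are ``one leaf-block short'' of a full copy of $H$, the cheapest way to break the $H$-model routed through a given edge-gadget is to delete a vertex of that gadget, and making all gadgets broken at minimum total cost coincides with choosing a vertex cover; Lemma~\ref{captured}/Lemma~\ref{floppies} formalise that no ``long-range'' model survives once all pendant cores are intact and all edge-blocks are hit. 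I would also take care, when $B=K_2$ (the case $\ell=1$, i.e.\ $\Fcal$ contains an edge, forcing $\Fcal=\{K_2\}$ after antichain reduction), to note that the construction degenerates gracefully to the textbook reduction showing \textsc{Vertex Cover} has no $2^{\smallo(\tw)}\polyn$ algorithm, and that planarity of $G'$ is preserved throughout because all gadgets are planar and are attached along cut vertices, yielding the ``even if $G$ is planar'' strengthening recorded in the introduction.
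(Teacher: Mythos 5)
Your plan matches the paper's proof of Theorem~\ref{temporis} in all essentials: reduce from \textsc{Vertex Cover} parameterized by treewidth, fix an essential pair $(H,B)$, replace each edge of $G$ by a fresh copy of the leaf block $B$ (identifying its two endpoints with the first and second vertices $a,b$) and attach to each original vertex one pendant copy of the core $A=\core(H,B)$, then invoke Lemma~\ref{captured} (resp.\ Lemma~\ref{floppies}) in the reverse direction. Two small points where the paper is tighter than your sketch: it reduces from \emph{general} \textsc{Vertex Cover} parameterized by treewidth (the planar strengthening is Corollary~\ref{patterns}, which does require all of $\Fcal$ to be planar), so no max-degree or sparsification step is needed, and $\tw(G')=\max\{\tw(G),\tw(H)\}$ is shown directly rather than via a size-to-treewidth detour; and in the deletion-solution-to-cover direction there is no additive slack and no ``cheapest obstruction'' optimization argument — one simply observes that for each edge $\{v,v'\}$ the subgraph $A^v\cup B^e$ is an intact copy of $H$, so $S$ must hit it, and mapping each $s\in S$ to the unique $v\in V(G)$ whose gadget ($A^v$, or $B^e\setminus\{v,v'\}$ with $v<v'$) contains it yields a vertex cover of size $\le|S|$.
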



\begin{proof}
  Let  $\Fcal$ be a connected collection. 
  We present a reduction from {\sc Vertex Cover} to \pbtm, both parameterized by the treewidth of the input graph, and then we explain the changes to be made to prove the lower bound for  \pbm.  {\sc Vertex Cover} cannot be solved in time $2^{\smallo(w)} \cdot n^{\Ocal(1)}$ unless the \ETH fails~\cite{ImpagliazzoP01} (in fact, it cannot be solved even in time $2^{\smallo(n)}$), where $w$ is the treewidth of the input graph. It is worth mentioning that our reduction bears some similarity with the classical reduction of Yannakakis~\cite{Yannakakis78} for general vertex-deletion problems.


  Without loss of generality, we can assume that $\Fcal$ is a topological minor antichain. First we select an essential pair $(H,B)$ of $\Fcal$.
  Let $a$ be the first vertex of $(H,B)$, $b$ be the second vertex of $(H,B)$, and $A$ be the core of $(H,B)$.
  For convenience, we also refer to $a$ and $b$ as vertices of the copies of $A$.

  Let $G$ be the input graph of the {\sc Vertex Cover} problem and let $<$ be an arbitrary total order on $V(G)$.
  We build a graph $G'$ starting from $G' = (V(G),\emptyset)$.
  For each vertex $v$ of $G$, we add a copy of $A$, which we call $A^v$, and we identify the vertices $v$ and $a$.
  For each edge $e = \{v,v'\} \in E(G)$ with $v < v'$, we remove $e$, we add a copy of $B$, which we call $B^e$, and we identify the vertices $v$ and $a$ and the vertices $v'$ and $b$.
  This concludes the construction of $G'$.
  Note that $|V(G')| = |V(G)| \cdot |V(A)| + |E(G)| \cdot |V(B) \sm \{a,b\}|$ and that $\tw(G') = \max\{\tw(G),\tw(H)\}$. For completeness, we provide a proof of the latter fact.

  For each $v \in V(G)$, we define
  $\mathcal{D}^v = (T^v,\mathcal{X}^v)$ to be a tree decomposition of $A^v$ and we fix $r_v \in V(T^v)$ such that $X^v_{r_v} \in \mathcal{X}^v$ contains the copy of $a$ in $A^v$.
  For each $e \in E(G)$, we define
  $\mathcal{D}^{e} = (T^{e},\mathcal{X}^{e})$ to be a tree decomposition of $B^{e}$ and we fix $r_{e} \in V(T^{e})$ such that $X^{e}_{r_{e}} \in \mathcal{X}^{e}$ contains the copy of $a$ and $b$ in $A^v$. We know that this bag exists as $\{a,b\} \in E(H)$.
  Let $\mathcal{D}^G = (T^G,\mathcal{X}^G)$ be a tree decomposition of $G$.
  We can then define a tree decomposition of $G'$ as follows.
  Start from
  $\mathcal{D} = (T,\mathcal{X})$ where
  $T$ is the union of $T^G$, of each $T^v$, $v \in V(G)$, and each $T^{e}$, $e \in E(G)$, and
  $\mathcal{X}$ is the union of $\mathcal{X}^G$, of each $\mathcal{X}^v$, $v \in V(G)$, and each $\mathcal{X}^{e}$, $e \in E(G)$.
  Then for each $v \in V(G)$, we arbitrarily choose $t_v \in V(T^G)$ such that $v \in \mathcal{X}^G_{t_v}$ and connect $t_v$ and $r_v$ in $T$.
  Then for each $e \in E(G)$, we arbitrarily choose $t_{e} \in V(T^G)$ such that $e \subseteq \mathcal{X}^G_{t_v}$ and connect $t_{e}$ and $r_{e}$ in $T$.
  This concludes the construction of a tree decomposition of $G'$.
  As for each $v \in V(G)$ and each $e \in E(G)$, the tree decompositions of $A^v$ and of $B^{e}$ are just smaller parts of a tree decomposition of $H$, we obtain that
  each bag of $\mathcal{D}$ is of size at most $\tw(G)$ if it comes from $\mathcal{X}^G$, or of size at most $\tw(H)$ if it comes from $\mathcal{X} \setminus \mathcal{X}^G$.
  Thus $\tw(G') = \max\{\tw(G),\tw(H)\}$.

  We claim that there exists a solution of size at most $k$ of {\sc Vertex Cover} in $G$ if and only if there is a solution of size at most $k$ of
  \pbtm
  in $G'$.

  In one direction, assume that $S$ is a solution of  
  \pbtm
  in $G'$ with $|S| \leq k$.
  By definition of the problem, for each $e=\{v,v'\} \in E(G)$ with $v < v'$,
  either  $B^e$  contains an element of $S$  or  $A^v$ contains an element of $S$.
  Let $S' =  \{v \in V(G) \mid \exists v' \in V(G) : v<v', e=\{v,v'\} \in E(G), (V(B^e) \sm \{v,v'\}) \cap S \not = \es\}
  \cup \{v \in V(G) \mid V(A^v) \cap S \not = \es\}
  $.
  Then $S'$ is a solution of  {\sc Vertex Cover} in $G$ and $|S'| \leq |S| \leq k$.

  In the other direction, assume that we have a solution $S$ of size at most $k$ of {\sc Vertex Cover} in $G$.
  We want to prove that $S$ is also a solution of 
  \pbtm
  in $G'$.
  For this, we fix an arbitrary $H' \in \Fcal$ and we show that $H'$ is not a topological minor of $G' \gm S$.
  First note that the connected components of $G' \gm S$ are either of the shape $A^v \gm \{v\}$ if $v \in S$, $B^e \gm e$ if $e \subseteq S$, or the union of $A^v$ with zero, one, or more graphs $B^{\{v,v'\}}\sm \{v'\}$ such that $\{v,v'\} \in E(G)$ if $v \in V(G) \sm S$.
  As $\Fcal$ is a topological minor antichain,
  for any $v \in V(G)$, $H'  \not \pretp A^v \gm \{v\}$ and for any $e \in E(G)$, $H' \not \pretp B^e \gm e$.
  Moreover, let $v \in V(G) \sm S$ and let $K$ be the connected component of $G' \gm S$ containing $v$.
  $K$ is the union of $A^v$ and of every $B^{\{v,v'\}}\gm \{v'\}$ such that $\{v,v'\} \in E(G)$.
  As, for each $v' \in V(G)$ such that $\{v,v'\} \in E(G)$, $v'$ is not an isolated vertex in $B^{\{v,v'\}}$, by definition of $B$, for any $B' \in \leaf(\bct(H'))$, $|E(B^{\{v,v'\}}\gm \{v'\})| < |E(B')|$.
  This implies that for each leaf $B'$ of $\bct(H')$ and for each $\{v,v'\} \in E(G)$,  $B' \not \pretp B^{\{v,v'\}}\gm \{v'\}$.
  It follows  by definition of $\Fcal$ that $H' \not \pretp A^v$.
  This implies by Lemma~\ref{captured} that $H'$ is not a topological minor of $K$. Moreover, as $H'$ is connected by hypothesis, it follows that that $H'$ is not a topological minor of $G' \gm S$ either. This concludes the proof for the topological minor version.

  Finally, note that the same proof applies to \pbm as well, just by replacing
  \pbtm with \pbm,
  topological minor with minor,
  $\pretp$ with $\prem$, and
  Lemma~\ref{captured} with  Lemma~\ref{floppies}.
\end{proof}

From Theorem~\ref{temporis} we can easily get the following corollary on planar graphs.

\begin{corollary}\label{patterns}
  Let $\Fcal$ be a connected planar collection.
  Neither \pbtm nor \pbm can be solved on planar graphs in time $2^{\smallo(\tw)} \cdot n^{\Ocal(1)}$
  unless the \ETH fails.
\end{corollary}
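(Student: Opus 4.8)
The plan is to derive Corollary~\ref{patterns} as a straightforward consequence of Theorem~\ref{temporis} by re-examining the reduction for the special case where the input graph is planar. The key observation is that the reduction from \textsc{Vertex Cover} to \pbtm\ (and to \pbm) constructed in the proof of Theorem~\ref{temporis} preserves planarity whenever the collection $\Fcal$ is a \emph{planar} collection. So the first step is to recall that \textsc{Vertex Cover} restricted to planar graphs still admits the lower bound $2^{\smallo(\tw)} \cdot n^{\Ocal(1)}$ under the \ETH~\cite{ImpagliazzoP01} (indeed, \textsc{Planar Vertex Cover} cannot be solved in time $2^{\smallo(\sqrt{n})}$, which suffices here since $\tw = \Ocal(\sqrt n)$ on planar graphs).

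Next I would argue that the graph $G'$ built in the reduction is planar when $G$ is planar and $\Fcal$ is a planar collection. Since $\Fcal$ is planar, at least one graph in $\Fcal$ is planar, and in the reduction we should choose the essential pair $(H,B)$ so that $H$ is a planar graph of $\Fcal$; then the core $A$ of $(H,B)$ and the block $B$ are planar, and moreover each has the two distinguished vertices $a$ (resp. $a,b$) lying together on a common face (this holds because $a$ is a cut vertex of $B$ within $H$, or otherwise $a$ can be chosen freely, and $\{a,b\}\in E(H[B])$, so in a planar embedding of the biconnected block $B$ the edge $ab$ lies on a face). The construction of $G'$ replaces each vertex $v$ of $G$ by a copy $A^v$ glued at the single vertex $a=v$, and replaces each edge $e=\{v,v'\}$ by a copy $B^e$ glued at two vertices $a=v$, $b=v'$. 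Gluing a planar gadget at a single vertex never destroys planarity; gluing $B^e$ along the edge slot of $e$ preserves planarity provided $a$ and $b$ lie on a common face of $B^e$, which we arranged. One has to be slightly careful that the $A^v$ gadgets attached at $v$ and the various $B^e$ gadgets attached at $v$ can all be routed into distinct faces incident to $v$ in the embedding of $G$; since $v$ has as many incident faces as incident edges (in a $2$-connected planar graph) and we may assume $G$ connected, this is fine—or, more robustly, one argues that attaching pendant-ish gadgets at a vertex or subdividing-and-decorating an edge are planarity-preserving operations. Hence $G'$ is planar.

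The remaining ingredients are already in place: the equivalence ``$G$ has a vertex cover of size $\leq k$ iff $G'$ has an $\Fcal$-(topological-)minor deletion set of size $\leq k$'' is exactly what was proved for Theorem~\ref{temporis}, and the treewidth bound $\tw(G') = \max\{\tw(G),\tw(H)\} = \tw(G) + \Ocal(1)$ is also from that proof. Therefore a $2^{\smallo(\tw)} \cdot n^{\Ocal(1)}$ algorithm for \pbtm\ (or \pbm) on planar graphs would yield, via this polynomial-time planarity-preserving reduction, a $2^{\smallo(w)} \cdot n^{\Ocal(1)}$ algorithm for \textsc{Vertex Cover} on planar graphs, contradicting the \ETH. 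The main (very minor) obstacle is purely the planarity bookkeeping for the gadget gluing—making sure the distinguished vertices $a$ and $b$ of $B$ sit on a common face and that the gadgets attached at each vertex of $G$ can be embedded without crossings; everything else is inherited verbatim from Theorem~\ref{temporis}.

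\begin{proof}[Proof of Corollary~\ref{patterns}]
  Since $\Fcal$ is a planar collection, it contains a planar graph; when running the reduction in the proof of Theorem~\ref{temporis}, we additionally require that the essential pair $(H,B)$ is chosen so that $H$ is planar. We may also assume that $G$, the input graph of \textsc{Vertex Cover}, is connected and planar, fixing a planar embedding of it. As $B$ is a block of $H$, it is biconnected and planar, and in a planar embedding of $H[B]$ the edge joining the first vertex $a$ and the second vertex $b$ of $(H,B)$ lies on a face; consequently $a$ and $b$ lie on a common face of (an embedding of) the copy $B^e$, and $a$ lies on the outer face of (an embedding of) the core $A^v$.

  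Now build $G'$ as in the proof of Theorem~\ref{temporis}. For each $v \in V(G)$, attaching the planar gadget $A^v$ to $G'$ by identifying its vertex $a$ with $v$ is a planarity-preserving operation, since it amounts to placing a planar graph inside one of the faces incident to $v$. For each edge $e = \{v,v'\}$ of $G$, we delete $e$ and insert the planar gadget $B^e$ identifying $a$ with $v$ and $b$ with $v'$; since $a$ and $b$ lie on a common face of $B^e$, this operation can be performed inside the face that previously contained the (deleted) edge $e$, without introducing crossings. Since $G$ is connected and planar, each vertex of $G$ has at least as many incident faces as incident edges, so all gadgets attached at a given vertex can be routed into distinct incident faces. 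Hence $G'$ is planar.

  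By the proof of Theorem~\ref{temporis}, $G$ has a vertex cover of size at most $k$ if and only if ${\bf tm}_\Fcal(G') \leq k$ (and likewise ${\bf m}_\Fcal(G') \leq k$), and $\tw(G') = \max\{\tw(G), \tw(H)\} = \tw(G) + \Ocal(1)$; moreover $G'$ has polynomial size and is computable in polynomial time. Since \textsc{Vertex Cover} on planar graphs cannot be solved in time $2^{\smallo(w)} \cdot n^{\Ocal(1)}$ unless the \ETH fails~\cite{ImpagliazzoP01}, neither can \pbtm\ nor \pbm\ on planar graphs, which completes the proof.
\end{proof}
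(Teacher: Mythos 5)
Your proof follows the same overall strategy as the paper (reduce from \textsc{Planar Vertex Cover}, reuse the construction from Theorem~\ref{temporis}, observe that $G'$ is planar), and your detailed discussion of why gluing the gadgets $A^v$ and $B^e$ preserves planarity is a nice elaboration of what the paper dismisses as ``easily checked''. However, there is a genuine gap at the very first step: you write that ``we additionally require that the essential pair $(H,B)$ is chosen so that $H$ is planar.'' This is not always possible. An essential pair is, by definition, one whose block $B$ has the minimum number of edges among \emph{all} leaf blocks of \emph{all} graphs in $\Fcal$, and that minimum may well be attained only by a nonplanar graph. For instance, if $\Fcal$ consists of $C_3$ together with a copy of $K_5$ to which a pendant edge has been attached, then the unique essential pair comes from the (nonplanar) second graph, whose core is nonplanar, so $G'$ would not be planar; yet $\Fcal$ is a connected planar collection. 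Since the reverse direction of Theorem~\ref{temporis}'s proof crucially uses the minimality of $|E(B)|$ over all of $\Fcal$ (to conclude $|E(B^{\{v,v'\}}\setminus\{v'\})| < |E(B')|$ for every leaf block $B'$ of every $H' \in \Fcal$), you cannot simply restrict the choice of $(H,B)$ without justification.

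The missing idea is the paper's one-line observation: because the input graph is planar, no nonplanar graph can occur in it as a (topological) minor, so one may assume without loss of generality that every graph in $\Fcal$ is planar. Concretely, on planar inputs \textsc{$\Fcal$-(T)M-Deletion} is the same problem as \textsc{$\Fcal'$-(T)M-Deletion} where $\Fcal'$ is the set of planar members of $\Fcal$; this $\Fcal'$ is nonempty by hypothesis, is still a connected collection, and its essential pair is automatically planar. Once you add this step, the rest of your argument (including the embedding details for $A^v$ and $B^e$) goes through exactly as written.
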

\begin{proof}
  We can assume that all the graphs in $\Fcal$ are planar, since an input planar graph $G$ does not contain any nonplanar graph as a (topological) minor. We reduce from
  {\sc Planar Vertex Cover} to \pbtm on planar graphs, both parameterized by the treewidth of the input graph, and the construction of $G'$ is the same as above. Note that since all the graphs in $\Fcal$ are planar, so is the essential pair $(H,B)$, and therefore the graph $G'$ is easily checked to be planar. Since {\sc Planar Vertex Cover} cannot be solved in time $2^{\smallo(w)} \cdot n^{\Ocal(1)}$ unless the \ETH fails~\cite{ImpagliazzoP01,Lic82}, where $w$ is the treewidth of the input graph, the result follows. Finally, the changes to be made for the minor version are the same as those in the proof of Theorem~\ref{temporis}.
\end{proof}

\section{Superexponential lower bounds}
\label{secsupexp}

Let $\Ccal$ be the set of all connected graphs that contain a block with at least five edges,
let $\mathcal{Q}$ be the set containing $P_5$ and all connected graphs that are {\sl not} minors of the \banner,
and let $\Scal = \{K_{1,s} \mid s \geq 4 \}$.
In this section,
we prove the following theorems. Note that, by definition, it holds that $\Ccal \subseteq \mathcal{Q}$, but we consider both sets because we will prove a stronger result for the set $\Ccal$ (Theorem~\ref{hardgene}, which applies to every subset of $\Ccal$) than for the set $\mathcal{Q}$ (Theorems~\ref{hardminor} and~\ref{hardtminor}, which apply to families containing a single graph $H$).




\begin{theorem}
  \label{hardgene}
  Let $\Fcal$ be a finite non-empty subset of $\Ccal$.
  Unless the \ETH fails, neither \textsc{$\Fcal$-M-Deletion} nor \textsc{$\Fcal$-TM-Deletion}  can be solved in time $2^{\smallo(\tw \log \tw)}\cdot n^{\Ocal(1)}$.
\end{theorem}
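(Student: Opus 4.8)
The plan is to prove Theorem~\ref{hardgene} by a reduction from the {\sc $k\times k$ Permutation Independent Set} problem of Lokshtanov et al.~\cite{permuclique}, which cannot be solved in time $2^{o(k\log k)}$ unless the \ETH fails. Fix a finite non-empty $\Fcal\subseteq\Ccal$ and let $(H,B)$ be an essential pair of $\Fcal$; by definition of $\Ccal$, the essential block $B$ has at least five edges. The key structural feature we will exploit is that a block with at least five edges is sufficiently ``rich'' to encode, via attachment points, a choice among $\Theta(k)$ possibilities: we will build for each row of the $k\times k$ grid a gadget with $\Theta(k)$ potential attachment sites for a copy of the core $A$ of $(H,B)$ (where $A = H \gm (V(B)\sm\{a\})$), arranged so that a deletion set $S$ of the reduced graph $G'$ of size roughly $2k$ (or a suitable linear function of $k$) must ``select'' exactly one site per row, and these selections must form a permutation in order to simultaneously destroy all occurrences of every $H'\in\Fcal$ as a (topological) minor. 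Crucially, the treewidth of $G'$ must be bounded by $\Ocal(k)$, so that a $2^{o(\tw\log\tw)}$ algorithm would yield a $2^{o(k\log k)}$ algorithm for the source problem, the desired contradiction.

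First I would recall the framework of Section~\ref{supexpgencons} (referenced but not yet shown in the excerpt) and the role of the gadgets built from the essential pair: the ``block gadget'' realizing a copy of $B$ between two designated vertices $a,b$, and the ``core gadget'' realizing a copy of $A$ hanging off a single vertex. Then I would describe the central gadget, a path-like or grid-like arrangement where deleting a vertex ``kills'' the block structure locally; the fact that $|E(B)|\ge 5$ guarantees $B$ is $2$-connected unless $B=K_2$, but since $|E(B)|\ge 5 > 1$ it is genuinely $2$-connected, so Lemma~\ref{reifying}, Lemma~\ref{captured}, and Lemma~\ref{floppies} apply verbatim to localize any (topological) minor model of $H'$ within a single cut-component. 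This localization is what lets us argue that if the selected sites do not form a permutation, then in some row (or column) two leaves $B'$ of $\bct(H')$ survive within a single block gadget, producing an unwanted model of $H'$; conversely, a permutation certifies that every surviving component has all its leaf-blocks of $\bct(H')$ too small to host $B'$, so Lemma~\ref{captured}/Lemma~\ref{floppies} give $H'\not\preceq G'\gm S$.

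The two directions of correctness are then: (i) from a permutation solution of {\sc $k\times k$ Permutation Independent Set}, construct $S$ by deleting the corresponding $\Theta(k)$ attachment vertices and verify, using the essential-pair machinery and the block-counting inequality ($|E(\text{residual block})| < |E(B')|$ for every leaf block $B'$ of every $H'\in\Fcal$), that no $H'\in\Fcal$ survives as a (topological) minor; (ii) from a deletion set $S$ of the prescribed size in $G'$, extract a permutation — here one argues that the budget forces exactly one deletion per row-gadget, that this deletion must be at a ``valid'' site, and that if two rows picked the same column then a copy of $H$ (using its essential block spanning the two row gadgets at that column) would survive, contradicting that $S$ is a solution. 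The same construction serves both the minor and topological-minor versions, since all gadgets are built from subdivisions/copies of subgraphs of $H$ and the localization lemmas have both variants. The main obstacle will be the design of the row/column gadgets so that the treewidth stays $\Ocal(k)$ while still forcing the ``exactly one per row, all distinct'' behavior against the budget — i.e., simultaneously achieving the packing/covering rigidity and the bounded-width requirement; this is precisely the delicate point in the Bonnet et al.~\cite{BonnetBKM-IPEC17} style of argument, and adapting it to an arbitrary essential block with $\ge 5$ edges (rather than a concrete small graph) is what requires the careful block-cut-tree bookkeeping set up in the preliminaries.
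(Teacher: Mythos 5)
Your high-level plan (reduce from \textsc{$k\times k$ Permutation Independent Set}, exploit the existence of a block with at least five edges, keep the treewidth linear in $k$, invoke the framework of Section~\ref{supexpgencons}) is the right starting point, but there are two concrete errors that make the proposal break down as written.

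First, the decomposition of $H$ you rely on is wrong. You take the essential pair $(H,B)$ of $\Fcal$ and assert that ``by definition of $\Ccal$, the essential block $B$ has at least five edges.'' This is false: the essential pair minimizes $|E(B)|$ over all leaf blocks of all graphs in $\Fcal$, and a graph in $\Ccal$ is only required to have \emph{some} block with at least five edges --- its smallest leaf block may well be a single edge, a $C_3$, or a $C_4$. (For instance, $H$ could be a $C_5$ with a pendant triangle; the essential leaf block is then the triangle, with three edges.) The paper's proof does not use essential pairs at all in this theorem: it instead introduces the $5$-edges leaf block cut $(X,Y,B,v)$, which iteratively strips away leaf blocks with at most four edges until a block $B$ with $|E(B)|\geq 5$ becomes a ``leaf'' in the residual sense, and then splits $H$ into $H_X=H[X]$ and $H_Y^-$ (a copy of $H[Y]$ with one edge of $B$ removed). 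The essential-pair machinery and the ``core'' $A$ are tools of Theorem~\ref{temporis} (the single-exponential bound); importing them here is a conflation of two different arguments. Likewise, the localization Lemmas~\ref{reifying}, \ref{captured}, \ref{floppies} are the engine of Theorem~\ref{temporis}, whereas the proof of Theorem~\ref{hardgene} is driven by Lemma~\ref{colu}/\ref{colu2}, Lemma~\ref{reimitation}, Property~\ref{new-wealthierm}/\ref{new-wealthiertm}, and Lemma~\ref{lemmasol}.

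Second, the forward direction as you sketch it (``residual blocks are all smaller than the leaf blocks $B'$ of every $H'\in\Fcal$, hence $H'$ is excluded'') does not go through. After deleting $S_P$, the surviving component $Z$ contains a full copy of $H_X$ and $k$ copies of $H_Y^-$, so it certainly contains leaf blocks that are as large as (indeed, equal to) some blocks of $H$ itself; a naive per-block size comparison cannot rule out $H'\preceq_{\sf m} Z$. The paper instead defines the block edge size function $\besf_H$ and the total order $\prec$ on these functions, picks $H\in\Fcal$ to minimize $\besf_H$ under $\prec$, and shows $\besf_Z\prec\besf_H$ with witness $b=|E(B)|$ (because $|E(H_Y^-)|<|E(H_Y)|$ and all the newly created blocks have only four edges, so the count of edges living in blocks of size $\geq b$ strictly drops). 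Since $\besf_H\preceq\besf_{H'}$ for all $H'\in\Fcal$, one concludes $H'\not\preceq_{\sf m}Z$ for every $H'\in\Fcal$. This global $\besf$-comparison --- not a leaf-block-size argument --- is what makes the forward direction correct for the whole family $\Fcal$, and it is the piece your plan is missing. (As a smaller point, the deletion budget in the framework is $\ell=\Theta(k^2 m)$ rather than ``roughly $2k$''; this matters because the tightness of $\ell$ relative to the columns $C^e_j$ is exactly what makes Lemma~\ref{reimitation} pin down one surviving $B^e_{i,j}$ per column.)
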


\begin{theorem}
  \label{hardminor}
  Let $H \in \mathcal{Q}$.
  Unless the \ETH fails,  \textsc{$\{H\}$-M-Deletion} cannot be solved in time $2^{\smallo(\tw \log \tw)}\cdot n^{\Ocal(1)}$.
\end{theorem}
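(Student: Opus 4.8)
The plan is to reduce from the {\sc $k \times k$ Permutation Independent Set} problem of Lokshtanov et al.~\cite{permuclique}, following the framework announced in Section~\ref{secsupexp} and the ideas of Bonnet et al.~\cite{BonnetBKM-IPEC17}. Recall that in {\sc $k \times k$ Permutation Independent Set} we are given a graph on vertex set $\intv{1,k} \times \intv{1,k}$ and must decide whether there is an independent set that picks exactly one vertex from each row and one vertex from each column; this problem has no $2^{\smallo(k \log k)}$ algorithm under the \ETH. The target running time $2^{\smallo(\tw \log \tw)}\cdot n^{\Ocal(1)}$ then follows provided our construction produces a graph of treewidth $\Ocal(k)$ and size polynomial in $k$, since a faster algorithm would yield a $2^{\smallo(k \log k)}$ algorithm for the source problem.

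First I would fix an essential pair $(H,B)$ of $\{H\}$ together with its first vertex $a$, second vertex $b$, and core $A$, exactly as in Section~\ref{banality}; these will be the ``gadget building blocks'' playing the role that single vertices and edges played in the proof of Theorem~\ref{temporis}. The backbone of the construction is a grid-like structure of $\Ocal(k)$ ``columns'' of bounded-width gadgets, each capable of encoding, in its deletion behaviour, a choice of an integer in $\intv{1,k}$; the $\log k$ factor in the lower bound comes from the fact that each such gadget forces roughly $k$ distinct ``states'' while contributing only $\Ocal(1)$ to the width (so a DP table has $k^{\Omega(\tw)}$ relevant states). Copies of the core $A$ are attached along a long path to serve as ``always present'' structure forcing the solution to behave consistently, while copies of the block $B$ (or of paths realizing $H$ as a topological minor) are inserted between gadget layers so that an $H$-model can be completed across a layer precisely when the encoded row/column choice is inconsistent with an independent set, i.e.\ when two chosen vertices are adjacent. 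Since $H \in \mathcal{Q}$ means $H$ is either $P_5$ or is not a minor of the \banner, $H$ either is long and thin (the $P_5$ case, handled by path-routing arguments) or contains a sufficiently rich obstruction (a block with $\geq 5$ edges, or a large star, or $K_{2,3}$, etc., captured via Lemmas~\ref{reifying}, \ref{captured}, \ref{floppies}) that can only be created when a ``collision'' happens. The correctness proof then splits into the two directions: a permutation independent set yields a deletion set of the prescribed size hitting every potential $H$-model (using the block-cut-tree lemmas to argue that the surviving components cannot contain $H$), and conversely a small deletion set, by a counting/pigeonhole argument over the $\Ocal(k)$ gadgets, must ``select'' one vertex per row and column, and the absence of an $H$-model forces these selections to be pairwise non-adjacent.

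The main obstacle I expect is the long and delicate case analysis needed to build the collision gadget uniformly for every $H \in \mathcal{Q}$: unlike the clean single essential-pair argument of Theorem~\ref{temporis}, here one must ensure both that a collision \emph{creates} an $H$-(topological-)minor and that the absence of a collision \emph{prevents} it, and this depends on fine structural features of $H$ — whether its heaviest leaf-block is a cycle, a theta-like graph, a star, or contains a vertex of degree $\geq 3$ off a cycle, and so on. This is exactly why (as the excerpt warns) the proof must descend through a sequence of lemmas handling each structural type separately (the \texttt{cycletwocut}--\texttt{crickbistm} range), and why $K_{1,i}$ must be excluded in the topological-minor version (Theorem~\ref{hardtminor}) but not here: for minors, a high-degree star can be forced through branch-vertex contractions that are unavailable for topological minors. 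Controlling the treewidth of the final graph at $\Ocal(k)$ throughout all these variants, while keeping the gadgets rigid enough for the reduction to be tight, is the technically demanding part; the \ETH bookkeeping itself is routine once the gadget is in place.
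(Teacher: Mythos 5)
You correctly identify the source problem ({\sc $k\times k$ Permutation Independent Set} from Lokshtanov et al.), the need for a graph of treewidth $\Ocal(k)$, the Bonnet et al.\ framework, and the fact that the proof must descend through a case analysis over structural types of $H$ (exactly what the paper does via Lemmas~\ref{cycletwocut} through~\ref{crickbisminor}). You also correctly flag where the minor/topological-minor distinction bites (stars for topological minors).

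However, there is a genuine gap in the gadget design. You propose to reuse the essential pair $(H,B)$, first/second vertex $a,b$, and core $A$ from Section~\ref{banality}, attaching copies of $A$ ``along a long path'' and inserting copies of $B$ ``between gadget layers.'' That machinery is what drives the proof of Theorem~\ref{temporis}, and it only yields a single-exponential lower bound: there is no mechanism there to make a gadget encode one of $k$ choices in a budget-tight way. The paper's actual superexponential construction is built quite differently: each cell gadget $B^e_{i,j}$ is a disjoint union of $n_h$ copies of the clique $K_{h-1}$ (with $h=|V(H)|$) plus two special vertices $a^e_{i,j},b^e_{i,j}$ and a few extra vertices; $k$ such cells are made pairwise complete to form a column $C^e_j$; and the budget $\ell$ is set so that exactly $k-1$ cells per column must be deleted entirely. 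The clique/complete-bipartite structure is what makes a ``collision'' (two surviving cells in a column, or both endpoints of an edge of $G$ surviving) produce $K_h$ and hence $H$ as a minor (Lemmas~\ref{colu},~\ref{colu2},~\ref{reimitation}). The consistency between consecutive edge-gadgets is enforced not by copies of $A$ but by separator gadgets $J^e$ augmented with lemma-specific subgraphs of $H$ (copies of $H^{-}$, $R_a,R_b,R_c,R_r$, $H_x,H_y$, pendent vertices, etc.), chosen so that an inconsistent pair $b^e_{i,j}\notin S$, $a^{\sigma(e)}_{i',j}\notin S$ completes an $H$-model through $J^{\sigma(e)}$ (Properties~\ref{new-wealthierm},~\ref{new-wealthiertm} and Lemma~\ref{lemmasol}). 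In short, the vertices $a^e_{i,j},b^e_{i,j}$ in the framework are unrelated to the ``first/second vertex'' of an essential pair, and without the $K_{h-1}$-clique cells and the tight budget counting, your reduction has no way to force the one-per-row/column selection or to read off collisions as $H$-minors. The high-level ``$k^{\Omega(\tw)}$ DP-states'' heuristic you invoke is not what the proof actually establishes; it is simply a reduction to a $2^{\Omega(k\log k)}$-hard problem with a graph of pathwidth $\Ocal(k)$.
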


\begin{theorem}
  \label{hardtminor}
  Let $H \in \mathcal{Q}\setminus \Scal$.
  Unless the \ETH fails,  \textsc{$\{H\}$-TM-Deletion} cannot be solved in time $2^{\smallo(\tw \log \tw)}\cdot n^{\Ocal(1)}$.
\end{theorem}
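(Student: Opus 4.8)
The plan is to reduce from the \textsc{$k\times k$ Permutation Independent Set} problem using the general framework of Section~\ref{supexpgencons}; since that problem admits no $2^{\smallo(k \log k)}\polyn$ algorithm under the \ETH~\cite{permuclique} and the reductions built from the framework produce instances of treewidth $\Ocal(k)$, any correct such reduction yields the theorem. First I would dispose of the easy part: $\Ccal \subseteq \mathcal{Q}$ and $\Ccal \cap \Scal = \emptyset$ (a star has only $K_2$ blocks), so Theorem~\ref{hardgene} already proves the statement when $H \in \Ccal$, and it remains to handle $H \in \mathcal{Q}\setminus(\Ccal\cup\Scal)$. For such an $H$ I would record three structural facts that the gadgets will exploit: every block of $H$ is a $K_2$, a $C_3$, or a $C_4$ (otherwise $H$ has a block with at least five edges, hence lies in $\Ccal$); $H$ is not a subdivided star, i.e.\ $H\neq K_{1,s}$ for $s\geq 4$; and either $H = P_5$ or $H$ is not a minor of the \banner, which combined with the block condition forces $H$ to be ``large'' (it has at least six vertices, or it is one of a short explicit list of small graphs such as $P_5$, the \bull, the \cricket, and the \butterfly).

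The real work, and the reason the proof is long, is a case analysis on the shape of $H$ allowed by these facts, each case providing its own gadgets for the framework. I expect the cases to be organized as: $H$ has a long pendant path ending at a leaf (this subsumes $P_5$, all longer paths, and more generally ``broom-like'' graphs); $H$ has a branch vertex incident with several long legs (spider- and caterpillar-like graphs); $H$ has a leaf block that is a triangle carrying a pendant tree (\bull- and \cricket-like graphs); and $H$ has a leaf block that is a $C_4$ carrying a pendant tree. In each case I would use the notions of essential pair, first and second vertex, and core from Section~\ref{banality} to fix a distinguished pair of ``attachment vertices'' of $H$, and then build row-gadgets, column-gadgets, and one ``conflict'' gadget per edge of the \textsc{Permutation Independent Set} instance so that: (i) the whole graph has treewidth $\Ocal(k)$, by bundling gadgets into $\Ocal(k)$ bags exactly as in the proof of Theorem~\ref{temporis}; (ii) from a yes-instance one obtains a deletion set of the prescribed budget that kills every topological-minor model of $H$, the key point being --- via Lemmas~\ref{reifying} and~\ref{captured} --- that any surviving copy of $H$ would have to route one of its leaf blocks (or its long pendant path) through a sub-gadget provably too small to contain it; (iii) from a no-instance, or whenever the budget is exceeded, a topological-minor model of $H$ survives, obtained by threading the long path / spider legs / cycle of $H$ through the gadgets so as to encode a forbidden column assignment. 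Composing such a reduction with a hypothetical $2^{\smallo(\tw \log \tw)}\polyn$ algorithm would solve \textsc{Permutation Independent Set} in time $2^{\smallo(k \log k)}\polyn$, contradicting the \ETH.

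The hard part will be verifying (ii) and (iii) for the \emph{topological}-minor relation, uniformly over the cases. For ordinary minors one may contract, so creating a copy of $H$ inside a gadget is comparatively easy; for topological minors one must realize every branch vertex of $H$ with precisely the right degree and join branch vertices by internally vertex-disjoint paths of the correct lengths, while at the same time proving that the intended ``$H$-free'' configurations admit no topological-minor model at all, which requires tracking subdivision vertices carefully. Doing this with a common gadget template across the triangle-block, $C_4$-block, and tree cases is the delicate point, and it is exactly here that the hypothesis $H\notin\Scal$ is used: for $H = K_{1,s}$ every leg has length one, so the ``long path / long leg / long cycle'' mechanism driving step (iii) collapses and the reduction genuinely breaks --- in line with the single-exponential algorithm for \textsc{$\{K_{1,i}\}$-TM-Deletion} mentioned in the introduction (indeed $K_{1,s}\pretp G$ iff $G$ has a vertex of degree at least $s$, a property maintainable by a $2^{\Ocal(\tw)}\polyn$ dynamic program). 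Carrying out the above for each case is the content of Lemmas~\ref{cycletwocut} through~\ref{crickbistm}.
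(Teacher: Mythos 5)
Your overall strategy is correct and matches the paper's: reduce from {\sc $k\times k$ Permutation Independent Set} via the framework of Section~\ref{supexpgencons}, dispose of $\Ccal$ via Theorem~\ref{hardgene}, note that the remaining $H$ have all blocks being $K_2$, $C_3$, or $C_4$, and then case-analyze the structure of $H$ across Lemmas~\ref{cycletwocut}--\ref{crickbistm}. However, three concrete points of your description do not match what the paper actually does, and one of them is a genuine gap.

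First, your case organization (``long pendant path'', ``branch vertex with long legs'', ``triangle leaf block with pendant tree'', ``$C_4$ leaf block with pendant tree'') is not the paper's. The paper's cases are governed by the \emph{number of cut vertices of $H$} and whether they lie on cycles: at least three cut vertices not in one block (Lemma~\ref{threecut}, which covers $P_5$), exactly two cut vertices in one cycle block (Lemma~\ref{cycletwocut}), exactly two cut vertices joined by an edge block with both/one/neither side containing a cycle (Lemmas~\ref{cyclestartwocut}, \ref{commander}, \ref{twocut}/\ref{twocuttm}), and exactly one cut vertex with two or more cycles (Lemma~\ref{butternutbis}) or exactly one cycle (Lemmas~\ref{crickbisminor}/\ref{crickbistm}). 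Your proposed taxonomy would not obviously be exhaustive; the paper's cut-vertex count is.

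Second, and this is a genuine error: you propose to ``use the notions of essential pair, first and second vertex, and core from Section~\ref{banality}'' and to invoke Lemmas~\ref{reifying} and~\ref{captured}. That machinery is used in the proof of the \emph{single-exponential} lower bound (Theorem~\ref{temporis}), not here. The superexponential lemmas instead work by splitting $H$ at a chosen cut vertex or cut edge into pieces ($R_a,R_b,R_c,R_r$ in Lemma~\ref{threecut}, or $H_x,H_y$ in Lemma~\ref{commander}, etc.) and attaching these pieces as $J$-extra and $B$-extra vertices to the $\Fcal$-TM-framework; the soundness argument then goes through Lemma~\ref{reimitation} (which forces the solution to leave exactly one $B^e_{i,j}$ per column and to be disjoint from the $J^e$'s) and Property~\ref{new-wealthiertm}, not through Lemmas~\ref{reifying}/\ref{captured}. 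If you tried to use essential pairs and cores as attachment vertices the framework's tight-budget counting in Lemma~\ref{reimitation} would not apply as stated.

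Third, your explanation of why $\Scal$ must be excluded is imprecise. It is not that the ``long path / long leg'' mechanism collapses; rather, in Lemma~\ref{twocut} the gadget produces a connected component $Z$ (cf.\ Figure~\ref{reoffending}) with $s_x-1$ pendants at one framework vertex and $s_y-1$ at another, and a surviving $H$-model merges the two attachment vertices by contraction. For $H=K_{1,s}$ that contraction is essential: $Z$ contains $K_{1,s}$ as a minor but not as a topological minor, because no single vertex of $Z$ has degree $s$. This is precisely why Lemma~\ref{twocuttm} restricts to trees with \emph{exactly two} cut vertices. Your intuition that $K_{1,s}\pretp G$ is just a degree condition and hence single-exponential is right, but the reason the reduction fails is this contraction issue, not a leg-length issue.
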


Note that if $H$ is a connected  graph such that $H \notin \mathcal{Q}$ (resp. $H \notin \mathcal{Q} \setminus \Scal$), then \textsc{$\{H\}$-M-Deletion} (resp. \textsc{$\{H\}$-TM-Deletion}) can be solved in time $2^{\Ocal(\tw)}\cdot n^{\Ocal(1)}$ by the single-exponential algorithms presented in~\cite{monster2}. On the other hand, if $H$ is a connected  (resp. planar subcubic) graph, then \textsc{$\{H\}$-M-Deletion} and \textsc{$\{H\}$-TM-Deletion}) can be solved in time $2^{\Ocal(\tw \log \tw)}\cdot n^{\Ocal(1)}$  by the algorithms presented in~\cite{monster1,SODA-arXiv,BasteST20-SODA}. In particular, note that these results altogether settle completely the asymptotic complexity of \textsc{$\{H\}$-M-Deletion} when $H$ is a connected graph; see Figure~\ref{shifting} for an illustration.



We first provide in Section~\ref{supexpgencons} a general framework that will be used in every reduction 
and then we explain how to modify this framework depending on the family $\Fcal$ we are considering.

\subsection{The general construction}
\label{supexpgencons}

In order to prove Theorem~\ref{hardgene}, Theorem~\ref{hardminor}, and Theorem~\ref{hardtminor}, we will provide reductions from the following problem, which is closely related to the \textsc{$k\times k$ Permutation Clique} problem defined by Lokshtanov et al.~\cite{permuclique}.

\paraprobl
{\sc $k\times k$ Permutation Independent Set}
{An integer $k$ and a graph $G$ with vertex set $\intv{1,k} \times \intv{1,k}$.}
{Is there an independent set of size $k$ in G with exactly one element from each row and exactly one element from each column?}
{$k$.}{14.3}

\begin{theorem}[Lokshtanov et al.~\cite{permuclique}]
  \label{borrowing}
  The {\sc $k\times k$ Permutation Independent Set} problem cannot be solved in time $2^{\smallo(k \log k)}$ unless the \ETH fails.
\end{theorem}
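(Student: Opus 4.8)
\medskip
\noindent
The plan is to reduce from 3-\textsc{Coloring} on graphs with $N$ vertices and $O(N)$ edges, a problem that under the \ETH cannot be solved in time $2^{\smallo(N)}$ (this follows from the \ETH via the Sparsification Lemma and the standard linear reduction $3$-\textsc{SAT}~$\to$~$3$-\textsc{Coloring}, which produces a graph with $O(n)$ vertices and edges). The guiding idea is to compress an $N$-vertex instance into a grid of side $k=\Theta(N/\log N)$, so that a hypothetical $2^{\smallo(k\log k)}$-time algorithm would run in time $2^{\smallo((N/\log N)\cdot\log N)}=2^{\smallo(N)}$, contradicting the \ETH. This is the usual ``slightly superexponential'' bucketing trick.

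First I would bucket. Partition $V(H)$ into $k$ buckets $V_1,\dots,V_k$, each of size at most $\lceil N/k\rceil=\Theta(\log k)$, choosing the constant in $k=\Theta(N/\log N)$ so that $3^{\lceil N/k\rceil}\le k$; hence the proper $3$-colorings of each $H[V_i]$ can be injectively indexed by elements of $\intv{1,k}$ (if some $H[V_i]$ admits no proper $3$-coloring, the instance is trivially negative). The graph $G$ we build has vertex set $\intv{1,k}\times\intv{1,k}$, where choosing vertex $(i,j)$ in row $i$ is meant to encode ``bucket $V_i$ receives its $j$-th proper coloring''. A selection of one vertex per row then encodes a coloring of $H$ that is proper inside every bucket; to also forbid the edges of $H$ that go across buckets, make $(i,j)$ and $(i',j')$ adjacent whenever $i\neq i'$ and there is an edge $\{u,v\}\in E(H)$ with $u\in V_i$, $v\in V_{i'}$ such that colorings $j$ and $j'$ assign $u$ and $v$ the same color. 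Since every constraint of $3$-\textsc{Coloring} is binary, this is genuinely a set of pairwise adjacencies, and a set of $k$ pairwise non-adjacent vertices with one vertex per row corresponds exactly to a proper $3$-coloring of $H$. This already yields \ETH-hardness for the ``one vertex per row'' variant on a $k\times k$ grid.

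The real work is to make the column constraint harmless: in general the chosen colorings of the $k$ buckets repeat, and a repeated column index violates the permutation requirement. The plan here is to insert a routing gadget: blow up each column into a small block of fresh columns and add dummy rows, together with carefully chosen adjacencies, so that (i)~any permutation solution is forced, on the ``real'' rows, to pick a block exactly as the bucket-coloring encoding above prescribes, with the precise column inside a block being a free degree of freedom, and (ii)~the dummy rows (whose vertices are made non-adjacent to everything relevant) can always absorb the remaining columns into a permutation, independently of the real choices; one checks that this blow-up multiplies $k$ by only a constant, so $k$ stays $\Theta(N/\log N)$. The delicate point, and what I expect to be the main obstacle, is designing these adjacencies so that the block structure is \emph{rigid} enough that a permutation is forced to decode to a genuine per-bucket coloring (no mixing of blocks), yet \emph{flexible} enough that every proper coloring of $H$, no matter how many buckets share a color, lifts to some permutation; this rigidity-versus-flexibility tension is precisely what makes the permutation versions of these $k\times k$ problems subtler than the plain ``one per row'' ones. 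Finally, to conclude: the reduction runs in time polynomial in $N$ and outputs an instance with parameter $k=\Theta(N/\log N)$, so a $2^{\smallo(k\log k)}$-time algorithm for \textsc{$k\times k$ Permutation Independent Set} would give a $2^{\smallo(N)}$-time algorithm for $3$-\textsc{Coloring}, contradicting the \ETH.
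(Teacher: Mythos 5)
This theorem is not proved in the paper you are reading: the theorem header attributes it to Lokshtanov et al.~\cite{permuclique} and the article uses it as a black box, so there is no in-paper argument to compare against. What follows assesses your sketch against what the cited result actually requires.

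The first half of your sketch is sound: reducing from sparse $3$-\textsc{Coloring} on $N$ vertices, bucketing $V(H)$ into $k=\Theta(N/\log N)$ groups of size $\Theta(\log k)$ so that the proper $3$-colorings of each group inject into $\intv{1,k}$, and joining $(i,j)$ to $(i',j')$ precisely when the two encoded partial colorings conflict, correctly gives an \ETH-tight $2^{o(k\log k)}$ lower bound for the ``one vertex per row'' variant, and the arithmetic $k\log k=\Theta(N)$ is right. The gap is exactly where you flag it, and the fix you gesture at does not work. A constant blow-up of columns (replacing each column $j$ by a block of $c$ fresh copies) plus dummy rows cannot give both completeness and soundness: a proper $3$-coloring of $H$ may assign the \emph{same} encoded index $j$ to all $k$ groups (nothing in the bucketing prevents this), in which case the block for $j$ would need $c\geq k$ fresh columns just to seat the real rows, blowing the grid up to $\Theta(k^2)\times\Theta(k^2)$ and destroying the bound $k\log k=\Theta(N)$. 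Nor does relabeling columns per row by a random bijection repair this as a many-one reduction: a random tuple of $k$ bijections sends an arbitrary $(a_1,\ldots,a_k)\in\intv{1,k}^k$ to pairwise distinct values only with probability $k!/k^k=2^{-\Theta(k)}$, so a single instance does not suffice and an explicit small covering family would have to be constructed. In short, the routing gadget is asserted rather than built, the concrete version you sketch provably fails, and enforcing the column permutation while preserving both directions of the equivalence is precisely the nontrivial technical content of Lokshtanov, Marx, and Saurabh's argument, handled there by a substantially more careful encoding. As written, your proposal establishes hardness of the non-permutation $k\times k$ \textsc{Independent Set}, not of the permutation variant that the theorem states.
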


Let $\Fcal$ be a finite family of non-empty graphs.
The framework we are going to present  follows the ideas of the
construction given by Bonnet et al.~\cite{BonnetBKM-IPEC17}.
This framework mostly depends on $h := \min_{H \in \Fcal}|V(H)|$ but also on an integer $t_\Fcal$ whose value will be defined later.
Let $(G,k)$ be an instance of {\sc $k\times k$ Permutation Independent Set}.
As we are asking for an independent set that contains exactly one vertex in each row, we will assume without loss of generality that, for each pair $(i,j)$, $(i,j')$ in $V(G)$ with $j \neq j'$, $\{(i,j),(i,j')\} \in E(G)$.
We proceed to construct a graph $F$ that contains one gadget for each edge of the graph  $G$.
These gadgets are arranged in a cyclic way, separated by some other gadgets ensuring the consistency of the selected solution.

Formally, we first define the graph $K := K_{h-1}$.
For each $e \in E(G)$, and each $(i,j) \in \intv{1,k}^2$, we define the graph $B^e_{i,j}$ to be the disjoint union of $n_h$ copies of $K$, for some integer $n_h$, whose value will be $2$ in the minor case and ${h \choose 2}$ in the topological minor case, two new vertices  $a^e_{i,j}$ and  $b^e_{i,j}$, and
 $t_\Fcal$ other new vertices called \emph{$B$-extra vertices}. 
The graph $B^e_{i,j}$ is depicted in Figure~\ref{new-renegades}.

\begin{figure}[htb]
  \centering
  \scalebox{1}{\begin{tikzpicture}[scale=1]

      \node at (1,0) {$K$};
      \node at (2,0) {$K$};
      \node at (0,1.3) {\footnotesize{$a^{e}_{i,j}$}};
      \node at (3,1.3) {\footnotesize{$b^e_{i,j}$}};
      \vertex{0,1};
      \vertex{1,1};
      \vertex{2,1};
      \vertex{3,1};

      \draw (-0.5,-0.5) -- (3.5,-0.5) -- (3.5,2) -- (-0.5,2) -- (-0.5,-0.5);

    \end{tikzpicture}}
  \caption{The graph $B^e_{i,j}$ for $e\in E(G)$ and $(i,j) \in \intv{1,k}^2$ when $n_h=2$ and $t_\Fcal = 2$.}

  \label{new-renegades}
\end{figure}

Informally, the graph $B^e_{i,j}$, for every $e \in E(G)$, will play in $F$ the  role of the vertex $(i,j)$ in $G$.
For each $e \in E(G)$ and each $j \in \intv{1,k}$, we define the graph $C^e_j$ obtained from the disjoint union of every $B^e_{i,j}$, $i \in \intv{1,k}$, such that two graphs $B^e_{i_1,j}$ and $B^e_{i_2,j}$, $i_1 \not = i_2$, are {\em complete} to each other, that is, for every $i_1 \not = i_2$, if $v_1 \in V( B^e_{i_1,j})$ and $v_2 \in V(B^e_{i_2,j})$, then $\{v_1,v_2\} \in {E(C^e_{j})}$.
Informally, for a fixed $j \in \intv{1,k}$, the graph $C^e_j$, for every  $e \in E(G)$, corresponds to the column $j$ of $G$.
For every $e \in E(G)$, we also define the \emph{gadget graph} $D^e$ obtained from the disjoint union of every $C^e_{j}$, $j \in \intv{1,k}$, by adding, if $e = \{(i,j),(i',j')\}$, every edge $\{v_1,v_2\}$ such that $v_1 \in V(B^e_{i,j})$ and $v_2 \in V(B^e_{i',j'})$.
The graph $D^e$ is depicted in Figure~\ref{flattered}.

\begin{figure}[htb]
  \centering
  \scalebox{1}{\begin{tikzpicture}[scale=1]

      \draw[line width=3pt] (2,2) -- (4,4);
      \foreach \x in {1,2,3}
      {
        \node[circle] at (2*\x,7.5) {$C^e_{\x}$};

        \draw[line width=3pt] (2*\x,2) -- (2*\x,6);

        \draw[line width=3pt] (2*\x,2) -- (2*\x-0.6,2.6) -- (2*\x-0.6,5.4) -- (2*\x,6);
        \foreach \y in {1,2,3}
        {
          \node[circle,fill = white, minimum size=7.5mm] at (2*\x,2*\y) {$$};
          \node[circle] at (2*\x,2*\y) {$B^e_{\y,\x}$};
        }
      }
      \foreach \x in {1,2,3,4}
      {
        \draw[dotted] (2*\x-1,1)  -- (2*\x-1,8);
      }
      \draw[dotted] (1,7)  -- (7,7);
      \draw[dotted] (1,8)  -- (7,8);
      \draw[dotted] (1,1)  -- (7,1);

    \end{tikzpicture}}
  \caption{The gadget graph $D^e$ for $ e = \{(1,1),(2,2)\} \in E(G)$ where $k = 3$. A bold edge means that two graphs $B$ are connected in a complete bipartite way.}
  \label{flattered}
\end{figure}
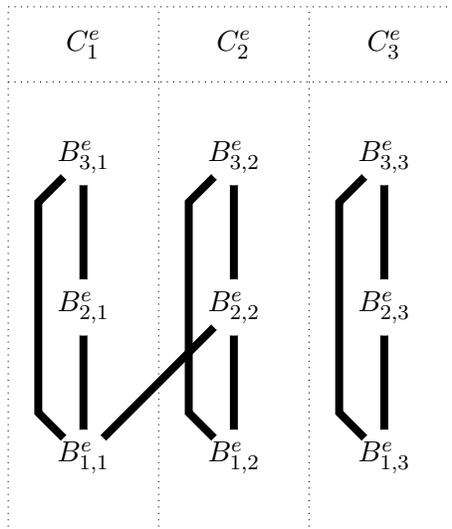

Informally, the graph $D^e$, $e\in E(G)$, encodes the edge $e$ of the graph $G$.
For every $e \in E(G)$, we also define $J^e$ such that $V(J^e) = \{c^e_{j} \mid j \in \intv{1,k}\} \cup \{r^e_{i} \mid i \in \intv{1,k}\}$ is a set of new vertices and $E(J^e) = \es$. It will be helpful to associate the $c^e_{j}$'s  with ``columns'' and the $r^{e}_{i}$'s with ``rows''.
Note that, in the following, $J^e$ may be enhanced, by adding vertices called \emph{$J$-extra vertices}, whose number depends on the family $\Fcal$ we are working with, but will always be linear in $k$.
The graphs $J^e$, $e \in E(G)$, are the \emph{separator gadgets} that will ensure the consistency of the selected solution.
Finally, the graph $F$ is obtained from the disjoint union of every $D^e$, $e \in E(G)$, and every $J^e$, $e \in E(G)$.
Moreover, we  fix a cyclic permutation $\sigma$ of the elements of $E(G)$, agreeing that  $\sigma^{-1}(e)$  and $\sigma(e)$ is the edge before and after $e$, respectively, in this cyclic ordering.
For each $e \in E(G)$, and each $(i,j) \in \intv{1,k}^2$, we add to $F$ the edges
$$
\{b^{\sigma^{-1}(e)}_{i,j},c^e_j\},
\{b^{\sigma^{-1}(e)}_{i,j},r^e_i\}, \{r^e_i,a^e_{i,j}\}, \mbox{~and~}
\{c^e_j,a_{i,j}^{e}\}.$$
This concludes the definition of the framework graph $F$, which is depicted in Figure~\ref{rejecting} (a similar figure appears in~\cite{BonnetBKM-IPEC17}).

\begin{figure}[htb]
  \centering
  \scalebox{1}{\begin{tikzpicture}[scale=1]

      \draw[rounded corners=4mm] (0,0) -- (10,0) -- (10,1) -- (0,1) -- (0,0);

      \node[circle, draw, fill=white] at (0,0) {$J^{e_1}$};
      \node[circle, draw, fill=white] at (2,0) {$D^{e_1}$};

      \node[circle, draw, fill=white] at (4,0) {$J^{e_2}$};
      \node[circle, draw, fill=white] at (6,0) {$D^{e_2}$};
      \node[circle, draw, fill=white] at (8,0) {$J^{e_3}$};
      \node[circle, draw, fill=white] at (10,0) {$D^{e_3}$};

    \end{tikzpicture}}
  \caption{
    The shape of the framework graph $F$ assuming that $k = 3$,  $G$ contains only the three edges $e_1$, $e_2$, and $e_3$, and $\sigma$ is the cyclic permutation $(e_1, e_2,e_3)$.}
  \label{rejecting}
\end{figure}
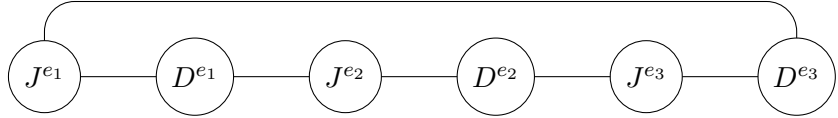
Note that in later constructions a gadget graph $D^e$, $e \in E(G)$, will only be connected to the (enhanced) separator gadgets $J^e$ and $J^{\sigma(e)}$ in a way that will be specified later and that depends on the family $\Fcal$.

Let $\ell := (n_h(h-1)+2+t_\Fcal)(k-1)k{m}$, where $m=|E(G)|$.
Note that $z = (n_h(h-1)+2+t_\Fcal)$ is the number of vertices of a $B^e_{i,j}$, $i,j \in \intv{1,k}$, $e \in E(G)$ and that
$\ell$ is the budget needed to select the vertex set of exactly $k-1$ graphs $B^e_{i,j}$, $i \in \intv{1,k}$ in each graph $C^e_j$, $j \in \intv{1,k}$, $e \in E(G)$.
The pair $(F,\ell)$ is called
the \emph{$\Fcal$-M-framework} of $(G,k)$ when $n_h = 2$, and
the \emph{$\Fcal$-TM-framework} of $(G,k)$ when $n_h = {h \choose 2}$.
When the value $n_h$ is not relevant, the pair $(F,\ell)$ is simply called the \emph{$\Fcal$-framework} of $(G,k)$.
For convenience, we always assume some prespecified permutation $\sigma$ associated with the graph $F$.

For each family $\Fcal$, given an input $(G,k)$ of {\sc $k\times k$ Permutation Independent Set}, we will consider $(F,\ell)$, the $\Fcal$-framework of $(G,k)$, and create another pair $(F_\Fcal,\ell)$, called the \emph{enhanced $\Fcal$-framework}, where $F_\Fcal$ is a graph obtained from $F$ by adding some new vertices and edges.
The added vertices  will be $B$-extra vertices 
{or} {$J$-extra vertices}. 
The added edges will  be either inside some (enhanced) $B_{i,j}^{e}$, or from some $D^{e}$ to the (enhanced) $J^{e}$ and $J^{\sigma(e)}$. More formally, the additional edges will be from the set
 $$\Big(\bigcup_{e \in E(G),\atop  i,j\in \intv{1,k}} V(B^e_{i,j}) \times V(B^e_{i,j})\Big) \cup \Big(\bigcup_{e \in E(G)} V(D^e) \times V(J^e) \Big)\cup \Big(\bigcup_{e \in E(G)} V(D^e) \times V(J^{\sigma(e)})\Big),$$
 by interpreting ordered pairs as edges.
Note that $F$ will always be a subgraph of $F_\Fcal$.
We will claim that there exists a solution of {\sc $k\times k$ Permutation Independent Set} on $(G,k)$ if and only if there exists a solution of \textsc{$\Fcal$-M-Deletion} (resp. \textsc{$\Fcal$-TM-Deletion}) on {$(F_\Fcal,\ell)$}.
In order to do this, we first prove a generic lemma, namely Lemma~\ref{reimitation}, and then provide a property, namely Property~\ref{new-wealthierm} (resp. Property~\ref{new-wealthiertm}), which we will prove for each family $\Fcal$ depending on the enhanced $\Fcal$-framework $F_\Fcal$

Let us now provide an upper bound on the treewidth (in fact, the pathwidth) of $F_\Fcal$.
Let $\{e_1,\ldots, e_m\} = E(G)$ such that for each $i \in \intv{1,m}$, $\sigma(e_i) = e_{i+1}$ with the convention that $e_{m+1} = e_1$.
First note that, for each $e \in E(G)$, the set $V(J^e) \cup V(J^{\sigma(e)})$ disconnects the vertex set $V(D^e)$ from the rest of $F_\Fcal$.
Moreover, if $e = \{(i,j),(i',j')\}$, then the bags
\begin{itemize}
\item[]
  $V(J^{e_1}) \cup V(J^e) \cup V(J^{\sigma(e)}) \cup V(C^e_1) \cup
  V(B^e_{i,j}) \cup V(B^e_{i',j'})$,
\item[]
  $V(J^{e_1}) \cup V(J^e) \cup V(J^{\sigma(e)}) \cup V(C^e_2) \cup
  V(B^e_{i,j}) \cup V(B^e_{i',j'})$, $\ldots$,
\item[]
  $V(J^{e_1}) \cup V(J^e) \cup V(J^{\sigma(e)}) \cup V(C^e_k) \cup
  V(B^e_{i,j}) \cup V(B^e_{i',j'})$
\end{itemize}
form a path decomposition of $G[V(J^e) \cup V(J^{\sigma(e)}) \cup V(D^e)]$ of width $(n_h(h-1)+2+t_\Fcal)(k+1)-1 + 3\cdot |V(J^{e_1})|$ (using the fact that $|V(J^{e})| = |V(J^{e_1})|$ for each $e \in E(G)$).
Let denote by $P^e$ this decomposition.
By concatenating the path decompositions $P^{e_1}$, $P^{e_2}$, $\ldots$, and $P^{e_m}$, we obtain a path decomposition of $F_\Fcal$ whose width, by using the fact that $|V(J^{e_1})| = \Ocal(k)$, is linear in $k$.

We start by proving that for each column $C_j^e$, $e \in E(G)$, $j \in \intv{1,k}$, containing a minimum number of vertices of the solution, all the remaining vertices belong to the same row of this column.

%

\begin{lemma}
  \label{colu}
  Let $\Fcal$ be a family of graphs
  and let $(F,\ell)$ be the $\Fcal$-TM-framework of an input $(G,k)$ of {\sc $k\times k$ Permutation Independent Set}.
  Let  $S$ be a solution of  \textsc{$\Fcal$-TM-Deletion} on $(F,\ell)$
  and let $e \in E(G)$ and $j \in \intv{1,k}$  such that
  the quantity $|V(C^e_j) \sm S|$ is maximized, i.e.,
  for all $e' \in E(G)$ and $j' \in \intv{1,k}$\  $|V(C^e_j) \sm S|\geq |V(C^{e'}_{j'}) \sm S|$.
 Then there exists $i \in \intv{1,k}$ such that $V(C^e_j) \sm S \subseteq V(B^e_{i,j})$.
\end{lemma}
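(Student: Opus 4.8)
The plan is to argue by contradiction. Suppose that no single block contains $V(C^e_j)\sm S$; then there are two distinct indices $i_1,i_2\in\intv{1,k}$ with $V(B^e_{i_1,j})\sm S\neq\es$ and $V(B^e_{i_2,j})\sm S\neq\es$. I will produce a copy of $K_h$ inside $F\sm S$, where $h=\min_{H\in\Fcal}|V(H)|$. Since every $H\in\Fcal$ has at least $h$ vertices and at least one $H^\ast\in\Fcal$ has exactly $h$ vertices, $H^\ast$ is a subgraph of $K_h$, hence $H^\ast\pretp F\sm S$, contradicting that $S$ is a solution of \textsc{$\Fcal$-TM-Deletion} on $(F,\ell)$.

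\emph{Step 1 (counting).} Write $z=n_h(h-1)+2+t_\Fcal=|V(B^{e'}_{i',j'})|$, so that $|V(C^{e'}_{j'})|=zk$ for every $e'\in E(G)$ and $j'\in\intv{1,k}$, as a column consists only of its $k$ blocks. The $km$ columns are pairwise vertex-disjoint, so $\sum_{e',j'}|S\cap V(C^{e'}_{j'})|\le|S|\le\ell=z(k-1)km$. As $C^e_j$ maximises $|V(C^{e'}_{j'})\sm S|$, it minimises $|S\cap V(C^{e'}_{j'})|$, hence $|S\cap V(C^e_j)|\le z(k-1)$ and therefore $|V(C^e_j)\sm S|\ge zk-z(k-1)=z$.

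\emph{Step 2 (the easy case: a saturated block).} Recall that each $B^e_{i,j}$ is the disjoint union of $n_h={h\choose2}$ copies of $K=K_{h-1}$ together with the $2+t_\Fcal$ vertices $a^e_{i,j}$, $b^e_{i,j}$ and the $B$-extra vertices, which are isolated inside the block, and that any two distinct blocks of $C^e_j$ are complete to each other. Call $B^e_{i,j}$ \emph{saturated} if $S$ avoids all $h-1$ vertices of at least one of its copies of $K$. If some block of $C^e_j$ is saturated, pick such a block $B^e_{i,j}$ and a surviving vertex $v\in V(C^e_j)\sm S$ lying in a different block, which exists since the survivors span at least two blocks. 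Then $\{v\}$ together with the surviving copy of $K_{h-1}$ induces a $K_h$ in $F\sm S$, and we are done. So we may assume that no block of $C^e_j$ is saturated; equivalently, $S$ meets each of the ${h\choose2}$ copies of $K$ in each of the $k$ blocks, whence $|S\cap V(B^e_{i,j})|\ge{h\choose2}$ and $|V(B^e_{i,j})\sm S|\le z-{h\choose2}$ for every $i\in\intv{1,k}$.

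\emph{Step 3 (the main case: no saturated block).} This is the crux of the proof. For each block $B^e_{i,j}$ carrying survivors, let $\mu_i$ be the largest number of surviving vertices inside one copy of $K$; treating each surviving isolated vertex of the block as a clique of size one, a maximum surviving clique of $B^e_{i,j}$ has $\max\{\mu_i,1\}$ vertices, with $\mu_i\le h-2$. Since distinct blocks of $C^e_j$ are complete to each other, the union of one maximum surviving clique taken from each block that has survivors is a clique of $F\sm S$. Combining the lower bound $|V(C^e_j)\sm S|\ge z$ from Step 1 with the per-block upper bound $|V(B^e_{i,j})\sm S|\le n_h\mu_i+2+t_\Fcal$, and, when this clique falls short of $h$ vertices by a controlled amount, extending it by the vertex $c^e_j$ (adjacent to every $a^e_{i,j}$) or by a vertex $r^e_i$ (adjacent to every $a^e_{i,j}$) after a subcase analysis according to whether these vertices lie in $S$, one forces a clique on at least $h$ vertices in $F\sm S$, again a contradiction. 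This is precisely the step in which the choice $n_h={h\choose2}$ and the eventual value of $t_\Fcal$ are used, and it is the delicate part of the argument; the counting in Step 1 and the saturated-block case in Step 2 are routine.
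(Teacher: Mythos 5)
Your Steps 1 and 2 are fine and match the paper: the counting argument showing $|V(C^e_j)\sm S|\geq z$ is exactly the paper's first claim, and your ``saturated block'' case is essentially the paper's observation that if $U_M$ is large enough ($>{h\choose2}(h-2)+2+t_\Fcal$, by pigeonhole on the ${h\choose2}$ copies of $K$), then $U_M$ already carries a surviving $K_{h-1}$, which together with a survivor from another block forms $K_h$.

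The problem is Step 3, which is not a proof but a sketch, and the sketch points in the wrong direction. First, you commit to producing a literal $K_h$ \emph{subgraph} of $F\sm S$; but in the regime where $U_M$ is intermediate (between $h$ and ${h\choose2}(h-2)+2+t_\Fcal$), such a subgraph need not exist. The paper instead produces $K_h$ as a \emph{topological minor}: it takes $Q$ to be any $h$ survivors inside a single block $B^e_{i,j}$ (no clique structure on $Q$ is required or assumed) and $Z$ to be any ${h\choose2}$ survivors outside that block (available since $|V(C^e_j)\sm S|\geq z$ and $|U_M|\leq z-{h\choose2}$), and then uses the bipartite-completeness between $Q$ and $Z$ to route each of the ${h\choose2}$ edges of $K_h$ through a distinct vertex of $Z$. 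This is exactly where $n_h={h\choose2}$ is used, and you don't employ it at all. Second, the auxiliary vertices $c^e_j$ and $r^e_i$ that you propose to use for extending a clique are only adjacent to the $a^e_{i,j}$ and $b^{\sigma^{-1}(e)}_{i,j}$ vertices; they are \emph{not} adjacent to the copies of $K$ inside the blocks, so they cannot in general be appended to a surviving clique. Moreover, Lemma~\ref{reimitation} establishes afterwards that $S$ avoids $J^e$ entirely; within the proof of this lemma you cannot assume anything about whether $c^e_j,r^e_i\in S$, which makes your planned ``subcase analysis according to whether these vertices lie in $S$'' unanchored. The essential missing idea is the topological-minor model of $K_h$ with $Q$ as branch vertices and $Z$ as subdivision vertices; you also still need the ``few survivors per block'' case (when $|U_M|\leq h-1$, pick one survivor from each of at least $h$ non-empty blocks to get $K_h$ as an actual subgraph), which your outline does not isolate.
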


\begin{proof}
We set $z=({h \choose 2}(h-1)+2+t_\Fcal)$ and observe that $|S|\leq \ell=z(k-1)km$.
  Let  $h = \min_{H \in \Fcal}|V(H)|$, and note that we can always assume that $h \geq 2$ as otherwise the problem is trivial. Choose $e \in E(G)$ and $j \in \intv{1,k}$  so that
  the quantity $|V(C^e_j) \sm S|$ is maximized.
  In order to prove the lemma, we show that
the assumption that
   there exist $i_1,i_2 \in \intv{1,k}$, with $i_1 \not = i_2$ such that $(V(C^e_j) \sm S) \cap V(B^e_{i_1,j}) \not = \es$ and $(V(C^e_j) \sm S) \cap V(B^e_{i_2,j}) \not = \es$ implies that  $F \gm S$ contains $K_h$  as a topological minor, for any value of $t_\Fcal \in \Nbb$.

%

  We claim that $|V(C^e_j) \sm S|\geq z$. Indeed, if $|V(C^e_j) \sm S|<z$,  by the maximality in the choice of $e$ and $j$, it follows that
  $|V(C^{e'}_{j'}) \sm S|\leq z-1$, for all $e'\in E(G)$ and $j'\in[1,k]$.
 This implies that $|S\cap V(C_{j'}^{e'})|\geq  |V(C_{j'}^{e'})|-(z-1)=kz-z+1=z(k-1)+1$  for all $e'\in E(G)$ and $j'\in[1,k]$.
 As there are $m$ choices for $e'$ and $k$ choices for $i'$ we have that
$$|S|\geq \bigcup_{e'\in E(G)\atop i'\in[1,k]}|S\cap V(C_{j'}^{e'})|\geq (z(k-1)+1)km>z(k-1)km=\ell,$$
  a contradiction. We just proved that
  \begin{eqnarray}
  |V(C^e_j) \sm S| & \geq  & {h \choose 2}(h-1)+2+t_\Fcal.\label{sdafdfds}
  \end{eqnarray}

We next pick $U_M$ as a set $V(B^e_{i,j}) \sm S$, $i \in \intv{1,k}$, with the maximum number of elements.  We  claim that, if $|U_M| \leq h-1$, then $F\gm S$ contains $K_{h}$ as a topological minor. We fist observe  that at least $h$ of the sets in ${\cal Z}=\{V(B^e_{i,j}) \sm S\mid i \in \intv{1,k\}}$ are non-empty. To verify this, suppose to the contrary that the set  ${\cal Z}'$, consisting of  the non-empty elements of ${\cal Z}$, has cardinality at  most $h-1$.
By the maximality of the choice of $U_M$, we obtain that each set in ${\cal Z}'$
has at most $|U_M|$ elements. We then observe that
 $|V(C^e_j) \sm S|=|\cupall {\cal Z}'|=(h-1)\cdot |U_M|\leq (h-1)^2$, a contradiction to~\eqref{sdafdfds},
 as $ (h-1)^2<{h \choose 2}(h-1)+2+t_\Fcal$.
We just proved that $|{\cal Z}'|\geq h$. By picking one vertex from each set in ${\cal Z}'$, we conclude that $F \gm S$ contains a clique of size $h$ as a subgraph and the claim follows.\medskip

From now on, we assume that  $h \leq |U_M|$. In fact, we claim that
if $|U_M| \leq {h \choose 2}(h-1)+2+t_\Fcal - {h \choose 2}$, then $F\gm S$ contains $K_{h}$ as a topological minor.
For this,
 let $Q$ be any set of $h$ vertices of $U_M$ and let $Z$ be a set of ${h \choose 2}$ vertices of $V(C^e_j) \sm (S \cup U_M)$ (this set exists because of~\eqref{sdafdfds}).
  As each vertex of $Z$ is a neighbor of each vertex of $Q$, we obtain that $F[Q \cup Z]$, which is a subgraph of $F\setminus S$, contains $K_h$ as a topological minor.\medskip

According to the previous claim, we can assume that $|U_M| > {h \choose 2}(h-1)+2+t_\Fcal - {h \choose 2}$ or, equivalently
  \begin{eqnarray}
  {h \choose 2}(h-2)+2+t_\Fcal < |U_{M}|.\label{sddsfads}
  \end{eqnarray}

Let $a$ be the size of the largest clique $K_a$ in $F[U_M]$.
We claim that $a\geq h-1$. For this, observe that if the biggest clique in $F[U_M]$ has size at most $h-2$, then $|U_{M}|\leq  {h \choose 2}(h-2)+2+t_\Fcal$, contradicting~\eqref{sddsfads}. \medskip

We just derived that $F[U_M]$ contains a clique $K_{h-1}$.  By our initial assumption, we have
that  $U_A=(V(C^e_j) \sm S) \cap V(B^e_{i',j}) \not = \es$ for some $i'\neq i$.
By combining $K_a$ with any vertex of $U_A$, we obtain $K_h$ as a subgraph of $F \gm S$,
%
and the lemma follows.
\end{proof}

Note that Lemma~\ref{colu} is also valid for the minor version with the same $\Fcal$-TM-framework.
However, for our future constructions, we need this statement to hold for the (smaller)  $\Fcal$-M-framework as well, where $n_{h}=2$.
%


%
\begin{lemma}
  \label{colu2}
  Let $\Fcal$ be a family of graphs
  and let $(F,\ell)$ be the $\Fcal$-M-framework of an input $(G,k)$ of {\sc $k\times k$ Permutation Independent Set}.
  Let  $S$ be a solution of  \textsc{$\Fcal$-M-Deletion} on $(F,\ell)$
  and let $e \in E(G)$ and $j \in \intv{1,k}$  such that
  the quantity $|V(C^e_j) \sm S|$ is maximized, i.e.,
  for all $e' \in E(G)$ and $j' \in \intv{1,k}$\  $|V(C^e_j) \sm S|\geq |V(C^{e'}_{j'}) \sm S|$.
 Then there exists $i \in \intv{1,k}$ such that $V(C^e_j) \sm S \subseteq V(B^e_{i,j})$.
\end{lemma}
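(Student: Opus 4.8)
The idea is to mimic the proof of Lemma~\ref{colu}, replacing its topological-minor arguments — which crucially exploit the presence of ${h \choose 2}$ copies of $K_{h-1}$ inside each $B^e_{i,j}$ in order to route subdivision vertices — by minor-theoretic arguments based on contractions, so that only $n_h = 2$ copies are needed. As in Lemma~\ref{colu}, put $h = \min_{H \in \Fcal}|V(H)|$ (assuming $h \geq 2$, since otherwise the problem is trivial) and $z = 2(h-1)+2+t_\Fcal$, so that $|V(B^e_{i,j})| = z$ for all $e,i,j$ and $\ell = z(k-1)km$. Fix $H_0 \in \Fcal$ with $|V(H_0)| = h$; since $H_0$ is a subgraph of $K_h$, it suffices to show that if $V(C^e_j) \gm S$ is not contained in a single $V(B^e_{i,j})$, then $K_h \prem F \gm S$, contradicting that $S$ is a solution of \textsc{$\Fcal$-M-Deletion} on $(F,\ell)$.

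First, the counting argument of Lemma~\ref{colu} applies verbatim (using that the columns $C^{e'}_{j'}$ are pairwise vertex-disjoint and that $|S| \leq \ell$), and yields $|V(C^e_j) \gm S| \geq z$. Now let $V_1,\dots,V_p$ be the nonempty sets among the $V(B^e_{i,j}) \gm S$, $i \in \intv{1,k}$, indexed so that $|V_1| \geq \dots \geq |V_p|$, and assume towards a contradiction that $p \geq 2$. Recall that in $F$ any two of the sets $V_1,\dots,V_p$ are complete to each other, and that each $V_i$ induces the disjoint union of at most two cliques, each of size at most $h-1$ (the survivors of the two copies of $K_{h-1}$ inside the corresponding $B^e_{i,j}$), together with at most $t_\Fcal + 2$ vertices that are isolated inside $V_i$ (survivors among $a^e_{i,j}$, $b^e_{i,j}$ and the $B$-extra vertices).

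We consider three cases. If $p \geq h$, then picking one vertex from each of $V_1,\dots,V_h$ gives a $K_h$ subgraph of $F \gm S$. If $p \leq h-1$ and $|V_1| \leq h-1$, then all parts have size below $h$, so $\sum_i \min(|V_i|,h) = \sum_i |V_i| \geq z \geq 2h$, and by the standard fact that a degree sequence summing to $2h$ with all entries at most $h$ is realizable by a loopless multigraph, there is a loopless multigraph $M$ on $\{1,\dots,p\}$ with exactly $h$ edges and $\deg_M(i) \leq |V_i|$ for all $i$; representing each edge $\{i,i'\}$ of $M$ by a pair $\{u,u'\}$ with $u \in V_i$ and $u' \in V_{i'}$, chosen pairwise disjoint over the edges of $M$, gives $h$ connected branch sets that are pairwise adjacent (each straddles two parts), hence $K_h \prem F \gm S$. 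Finally, if $p \leq h-1$ and $|V_1| \geq h$, there are two subcases. If one of the two copies of $K_{h-1}$ inside the $B^e_{i,j}$ corresponding to $V_1$ survives entirely in $V_1$, then this $K_{h-1}$ together with any vertex of $V_2$ (which is complete to $V_1$) forms a $K_h$ subgraph of $F \gm S$. Otherwise, let $c \leq h-2$ be the maximum number of vertices of $V_1$ surviving in one of these two copies; then $|V_1| \leq 2c + t_\Fcal + 2$, which together with $|V_1| \geq h$ and $\sum_i |V_i| = |V(C^e_j) \gm S| \geq z$ gives both $\sum_{i \geq 2} |V_i| \geq z - |V_1| \geq h-c$ and $|V_1| - c \geq h-c$. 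Take the $c$ survivors of a largest copy of $V_1$ as $c$ singleton branch sets (they induce $K_c$), and form $h-c$ further branch sets, each a pair made of one vertex of $V_1$ not in that copy and one vertex of $V_2 \cup \dots \cup V_p$; the two inequalities above ensure these pairs can be chosen disjoint. Each such pair straddles $V_1$ and some $V_i$ with $i \geq 2$, hence is connected and is adjacent to every other branch set. Altogether this produces $h$ connected, pairwise adjacent branch sets, so again $K_h \prem F \gm S$. In every case we reach the announced contradiction, so $V(C^e_j) \gm S \subseteq V(B^e_{i,j})$ for some $i \in \intv{1,k}$.

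The step I expect to be the crux — and the only place where the argument genuinely differs from Lemma~\ref{colu} — is the last subcase, where the survivors of the column $C^e_j$ are concentrated in a single gadget $V_1$. With $n_h = {h \choose 2}$ (the topological-minor framework) this situation is handled simply by extracting a full $K_{h-1}$ from $V_1$; with $n_h = 2$ such a clique need not be available, and one has to combine the partial clique structure of $V_1$ with cross-gadget contractions, carefully checking that the counting slack $\sum_{i \geq 2}|V_i| \geq h-c$ and $|V_1| - c \geq h-c$ leaves room to build the $h-c$ missing branch sets. I would also take care to state explicitly the multigraph-realizability fact used in the middle case.
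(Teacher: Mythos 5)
Your proof is correct, but it takes a genuinely different combinatorial route from the paper's. The paper works with the single quantity $|U_M|$ (the largest surviving gadget), defines a notion of \emph{transversal matching}, and splits into three regimes: for $|U_M| \leq h + t_\Fcal$ it extracts $h$ disjoint cross-gadget pairs via Tutte's perfect-matching criterion on the surviving multipartite structure; for $h + t_\Fcal < |U_M| < 2h + t_\Fcal$ it extracts a partial clique of size roughly $(|U_M|-t_\Fcal)/2$ from $U_M$ and pads with a transversal matching, running a parity-sensitive case split (Cases 1 and 2) to verify that the total reaches $h$; and for $|U_M| \geq 2h + t_\Fcal$ a full $K_{h-1}$ survives inside $U_M$. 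You instead split on the number $p$ of nonempty gadgets and the size $|V_1|$ of the largest, and your main substitution is replacing Tutte's criterion by a loopless-multigraph degree-sequence realizability argument in the case $p \leq h-1$, $|V_1| \leq h-1$: pick an $h$-edge loopless multigraph $M$ on $[p]$ with $\deg_M(i)\le |V_i|$ (possible precisely because every $|V_i|<h$ while $\sum|V_i|\ge 2h$, which is the Erd\H{o}s--Gallai-type condition for multigraphs), and let each edge of $M$ become a disjoint cross-gadget pair. This delivers exactly the $h$ pairs needed, rather than routing through a perfect matching, and makes the degree bound transparent. Your final subcase plays the role of the paper's Cases 1 and 2 — a partial clique of size $c$ plus $h-c$ cross-gadget pairs — but is indexed by $c$, the largest surviving piece of a $K_{h-1}$ copy, rather than by a parity split on $|U_M|-t_\Fcal$; the inequalities $\sum_{i\ge2}|V_i|\ge h-c$ and $|V_1|-c\ge h-c$ do follow from $c\le h-2$ and $|V_1|\ge h$ as you claim. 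The only thing I'd add for rigor is to state the loopless-multigraph realizability criterion explicitly (a nonnegative sequence with even sum $2h$ and maximum entry at most $h$ is the degree sequence of a loopless multigraph), which is standard but presently only gestured at.
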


\begin{proof}
  Let  $h = \min_{H \in \Fcal}|V(H)|$.
  Recall that $\ell = (2(h-1)+2+t_\Fcal)(k-1)km$.
  Again, in order to prove the lemma, we show that
the assumption that
   there exist $i_1,i_2 \in \intv{1,k}$, with $i_1 \not = i_2$ such that $(V(C^e_j) \sm S) \cap V(B^e_{i_1,j}) \not = \es$ and $(V(C^e_j) \sm S) \cap V(B^e_{i_2,j}) \not = \es$ implies that  $F \gm S$ contains $K_h$  as a  minor, for any value of $t_\Fcal \in \Nbb$.

  Let us fix the value of $t_\Fcal \in \Nbb$.
  Let $e \in E(G)$ and $j \in \intv{1,k}$ be such that  $|V(C^e_j) \sm S|$ is maximized.\medskip


  We claim that $|V(C^e_j) \sm S|\geq 2h+t_\Fcal$. Indeed, if $|V(C^e_j) \sm S|<2h+t_\Fcal$,  by the maximality in the choice of $e$ and $j$, it follows that
  $|V(C^{e'}_{j'}) \sm S|\leq 2h+t_\Fcal-1$, for all $e'\in E(G)$ and $j'\in[1,k]$.
 This implies that $|S\cap V(C_{j'}^{e'})|\geq  |V(C_{j'}^{e'})|-(2h+t_\Fcal-1)=k(2(h-1)+t_\Fcal+2)-(2h+t_\Fcal-1)=(2h+t_{\Fcal})(k-1)+1$  for all $e'\in E(G)$ and $j'\in[1,k]$.
 As there are $m$ choices for $e'$ and $k$ choices for $i'$ we have that
$$|S|\geq \bigcup_{e'\in E(G)\atop i'\in[1,k]}|S\cap V(C_{j'}^{e'})|\geq ((2h+t_{\Fcal})(k-1)+1)km>(2h+t_{\Fcal})(k-1)km=\ell,$$
  a contradiction. We just proved that
  \begin{eqnarray}
  |V(C^e_j) \sm S| & \geq  &2h+t_\Fcal.\label{sdafdfdss}
  \end{eqnarray}

  An edge $e\in E(C^e_j)$ is \emph{transversal} if
  there is no $i\in[1,k]$ such that both endpoints of $e$ belong to $B^e_{i,j}$.
%
  The important property of a transversal edge $e = \{v_1,v_2\}$ is that $N_{C^e_j}(\{v_1,v_2\}) = V(C^e_j) \sm \{v_1,v_2\}$.
  A \emph{transversal matching} of $C^e_j$ is a matching that contains only transversal edges.
  Note that if there exists a transversal matching $M$ of size $h$ over a set of  vertices $T \subseteq V(C^e_j)\setminus S$, then, by contracting every edge of $M$, it follows that $C^e_j[T]$, and therefore  $F\setminus  S$ as well, contains $K_h$ as a minor.

  Let $U_M$ be a set $V(B^e_{i,j}) \sm S$, $i \in \intv{1,k}$, with the maximum number of elements.
  If $|U_M| \leq h+t_\Fcal$, then, because of~\eqref{sdafdfdss}, the graph $C_j^e \gm S$ contains a transversal matching of size at least $h$.
  This can be seen, for instance, by considering the complete $k$-partite graph where each part contains the vertices in $V(B^e_{i,j}) \sm S$, for $i \in \intv{1,k}$, and noting that it admits a perfect matching (which defines a transversal matching in $C_j^e \gm S$ of size at least $h$) by applying Tutte's criterion~\cite{Die10} on the existence of a perfect matching in a general graph (recall that a graph $G$ contains a perfect matching if and only if there is no set $S \subseteq V(G)$ whose removal generates more than $|S|$ odd-sized components). Thus $C_j^e \gm S$, and therefore  $F\setminus  S$ as well, contains a clique of  $h$ vertices as a minor.

Assume now that $h+t_\Fcal < |U_M|$.
Let $a$ be the maximum size of a clique in
$C^e_j[U_M]$.  {As $h\geq 1$} we have that
$|U_M| \geq t_\Fcal+2$, therefore
\begin{eqnarray}
a & \geq & \left\lceil \frac{|U_M|-(t_\Fcal+2)}{2} \right\rceil.\label{ki6ykl}
\end{eqnarray}

  We claim that, if $|U_M|< 2h+t_\Fcal$, then $F\setminus S$ contains a clique of $h$ vertices as a minor. For this, we set $U_A := V(C^e_j) \sm (S \cup U_M)$ and
we distinguish two cases, depending on the parity of the quantity  $|U_M|-t_{\cal F}$.\medskip

\noindent{\sl Case 1:}  $|U_M|  = t_\Fcal + 2u$, with $h < 2u < 2h$. Then, from~\eqref{ki6ykl},
$a\geq (2u-2)/2$, therefore $C^e_j[U_M]$ contains a clique $K^*$ of size $u-1$ while the vertices
of  $C^e_j[U_M]$ that are not in $K^*$ are
$t_\Fcal+u+1$. Moreover, by~\eqref{sdafdfdss},
  $ |U_A| \geq  (2h+t_\Fcal) -(t_\Fcal + 2u) = 2h-2u$, so $C^e_j\gm (S \cup V(K^*))$ contains a transversal matching of size at least $q:=\min\{2h-2u, t_\Fcal+u+1\}$, which, when contracted,
  creates a  clique $K^+$ of size at least $q$ whose  vertices are connected with all $u-1$ vertices of $K^*$.
Also, using the inequality $h < 2u < 2h$, we  obtain
\begin{eqnarray*}
 (u-1) + (t_\Fcal+u+1) \geq 2u & >&  h\mbox{~and~}\\
 (u-1) + (2h-2u)  =   2h-(u+1)   & \geq & h,
 \end{eqnarray*}
 therefore, in any case, $(u-1)+q\geq h$.
  Thus, by taking $K^*$ with $K^+$ and all the edges between them, we deduce that $C^e_j\gm S$, and therefore $F\setminus S$ as well, contains a clique of size at least $h$ as a minor.

\medskip

\noindent{\sl Case 2:} $|U_M|  = t_\Fcal + 2u+1$, with $h < 2u+1 < 2h$. Then $C^e_j[U_M]$ contains a clique $K^*$ of size $u$ and $t_\Fcal+u+1$ vertices outside this clique $K^*$. Moreover, again by~\eqref{sdafdfdss},
  $|U_A| \geq 2h-2u-1$, so $C^e_j\gm (S \cup V(K^*))$ contains a transversal matching of size at least $q:=\min\{2h-2u-1, t_\Fcal+u+1\}$.
  On the other hand, using the inequality $h < 2u+1 < 2h$, we know that
  \vspace{-9mm}

 \begin{eqnarray*}
  u + (t_\Fcal+u+1) \geq 2u+1 & > & h \mbox{~and~} \\
  u + (2h-2u-1) = 2h-(u+1) & \geq & h.
  \end{eqnarray*}
  Thus $u+q\geq h$, and, as in the previous case,   we deduce that $C^e_j\gm S$, and therefore $F\setminus S$ as well,  contains a clique of size $h$ as a minor. The claim follows.\medskip

  Therefore, what remains is to examine the case where $|U_M| \geq 2h+t_\Fcal$. In this case, because of \eqref{ki6ykl},  $C^e_j[U_M]$  contains a clique of size $a\geq (2h-2)/{2}=h-1$. Combining this clique with any vertex in a set
$(V(C^e_j) \sm S) \cap V(B^e_{i',j})$ that, because of our initial assumption, is non-empty for some $i'\neq i$,
 we obtain $K_h$ as a subgraph of $F \gm S$, and the lemma follows.
\end{proof}

The purpose of Lemma~\ref{colu} and Lemma~\ref{colu2} is to obtain Lemma~\ref{reimitation} that states that for any solution $S$ of
 \textsc{$\Fcal$-M-Deletion} (resp. \textsc{$\Fcal$-TM-Deletion}) and for any
 $B^e_{i,j}$, $e \in E(G)$ and $(i,j) \in \intv{1,k}^2$, either $V(B^e_{i,j}) \cap S = \es$
 or {$V(B^e_{i,j}) \subseteq S$}. Moreover, there is exactly one $B^e_{i,j}$ such that
  $V(B^e_{i,j}) \cap S = \es$ in each column $C^e_j$,  $e \in E(G)$, $j \in \intv{1,k}$.

\begin{lemma}
  \label{reimitation}
  Let $\Fcal$ be a family of graphs
  and let $(F,\ell)$ be the $\Fcal$-M-framework (resp. $\Fcal$-TM-framework)  of an input $(G,k)$ of {\sc $k\times k$ Permutation Independent Set}.
  For every solution  $S$ of  \textsc{$\Fcal$-M-Deletion} (resp. \textsc{$\Fcal$-TM-Deletion}) on $(F,\ell)$,
  for every $e \in E(G)$ and  every $j \in \intv{1,k}$, there exists $i \in \intv{1,k}$ such that $V(C^e_j)\sm S = V(B^e_{i,j})$.
  Moreover, for every $e \in E(G)$, $V(J^e) \cap S = \es$.
\end{lemma}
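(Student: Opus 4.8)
The statement has two parts: first, that the non-deleted vertices of each column $C^e_j$ form exactly one whole brick $V(B^e_{i,j})$; and second, that no vertex of any separator gadget $J^e$ is deleted. The plan is to derive the first part by combining the budget count from Lemma~\ref{reimitation}'s predecessors (Lemmas~\ref{colu} and~\ref{colu2}) with a counting argument that forces every column to be ``tight'', and then to handle the second part by a separate counting observation showing that spending budget on $J$-vertices is strictly wasteful.

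First I would set $z := n_h(h-1)+2+t_\Fcal = |V(B^e_{i,j})|$, so $\ell = z(k-1)km$. Lemma~\ref{colu} (for the TM-framework) and Lemma~\ref{colu2} (for the M-framework) already tell us that in the column $C^{e_0}_{j_0}$ maximizing $|V(C^e_j)\sm S|$, the surviving vertices lie inside a single brick $B^{e_0}_{i,j_0}$; hence $|V(C^{e_0}_{j_0})\sm S|\le z$, so $|S\cap V(C^{e_0}_{j_0})|\ge (k-1)z$, and by maximality $|S\cap V(C^e_j)|\ge (k-1)z$ for \emph{every} $e\in E(G)$, $j\in\intv{1,k}$. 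Since the columns $C^e_j$ are pairwise vertex-disjoint and there are $km$ of them, summing gives $|S|\ge (k-1)zkm = \ell$, and since $|S|\le\ell$ by hypothesis, all these inequalities are equalities: $|S\cap V(C^e_j)| = (k-1)z$ exactly, for every $e$ and $j$. In particular $|V(C^e_j)\sm S| = z$ for \emph{every} column, so every column is itself a maximizer, and Lemma~\ref{colu}/\ref{colu2} applies to each of them: for every $e,j$ there is some $i$ with $V(C^e_j)\sm S\subseteq V(B^e_{i,j})$. Because $|V(C^e_j)\sm S| = z = |V(B^e_{i,j})|$, this containment is an equality, giving $V(C^e_j)\sm S = V(B^e_{i,j})$, which is the first assertion. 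As an immediate consequence, $V(B^{e'}_{i',j})\subseteq S$ for every $i'\ne i$, so each brick is either entirely in $S$ or entirely outside it.

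For the second part I would argue by contradiction: suppose some vertex of some $J^e$ lies in $S$. Then the budget actually spent on the columns satisfies $\sum_{e,j}|S\cap V(C^e_j)| \le |S| - 1 \le \ell - 1 = (k-1)zkm - 1$. But we just showed that, in order not to create $K_h$ as a (topological) minor inside some column, the argument of Lemma~\ref{colu}/\ref{colu2} forces $|V(C^e_j)\sm S|\le z$ — equivalently $|S\cap V(C^e_j)|\ge (k-1)z$ — for the maximizing column, and then by the same maximality propagation for all columns; the details are exactly as in the first paragraph but now with the strict deficit, which yields $\sum_{e,j}|S\cap V(C^e_j)| \ge (k-1)zkm = \ell > \ell - 1$, a contradiction. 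Hence $V(J^e)\cap S = \es$ for every $e$. Here one must be a little careful: the lower bound $|S\cap V(C^e_j)|\ge (k-1)z$ for the non-maximizing columns relies on the maximizing column being tight, so I would phrase it as: let $C^{e_0}_{j_0}$ maximize $|V(C^e_j)\sm S|$; Lemma~\ref{colu}/\ref{colu2} gives $|V(C^{e_0}_{j_0})\sm S|\le z$, hence $|V(C^e_j)\sm S|\le z$ for all $e,j$ by maximality, hence $|S\cap V(C^e_j)|\ge(k-1)z$ for all $e,j$, hence $|S|\ge |S|\cap\big(\bigcup_{e,j}V(C^e_j)\big)|\ge (k-1)zkm=\ell$; if in addition $S$ met some $J^e$ then this would be a strict inequality $|S|>\ell$, contradiction. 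This simultaneously re-derives the tightness used in the first part, so the two parts can be merged into one clean count.

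The main obstacle is purely bookkeeping: making sure the cardinality inequalities line up so that ``$|S|\le\ell$ plus the lower bounds forces equality everywhere,'' and in particular being careful that the bricks $B^e_{i,j}$ within a column are pairwise disjoint and that the columns $C^e_j$ (over all $e$ and $j$) are pairwise disjoint, so the sum $\sum_{e,j}|S\cap V(C^e_j)|$ is genuinely a lower bound for $|S|$ with no double-counting and with the $J$-vertices not belonging to any column. All of this is immediate from the construction in Section~\ref{supexpgencons} — the $J^e$ are disjoint from the $D^e$, and the $B^e_{i,j}$ partition $V(D^e)$ — so no genuinely new idea is needed beyond assembling Lemmas~\ref{colu} and~\ref{colu2} with the global budget count.
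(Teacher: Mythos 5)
Your proof is correct and follows the same strategy the paper uses: feed the maximizer into Lemma~\ref{colu}/\ref{colu2}, propagate via the budget, and read off both the ``one surviving brick per column'' conclusion and the emptiness of $S\cap V(J^e)$ from the fact that the count $\sum_{e,j}|S\cap V(C^e_j)|$ already exhausts $\ell$. The only difference is one of care: the paper's proof states flatly that Lemma~\ref{colu2}/\ref{colu} applies to \emph{every} column, whereas those lemmas are stated only for a column maximizing $|V(C^e_j)\sm S|$; your version makes explicit the intermediate step (maximality gives $|V(C^e_j)\sm S|\le z$ everywhere, budget tightness upgrades this to equality, and then every column is a legitimate maximizer), which is precisely the argument the paper tacitly relies on.
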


\begin{proof}
  Let $S$ be a solution of \textsc{$\Fcal$-M-Deletion} (resp. \textsc{$\Fcal$-TM-Deletion}) on
  $(F,(n_h(h-1) +2 + t_\Fcal)(k-1)km)$.
  By Lemma~\ref{colu2} (resp. Lemma~\ref{colu}), we know that for every $e \in E(G)$ and every $j \in \intv{1,k}$, there is some $i$ such that for every $i'\in[1,k]\setminus\{i\}$, $B_{i',j}^{e}\subseteq S$.
  As each $B_{i,j}^{e}$ has  $n_h(h-1) +2 + t_\Fcal$ vertices, we obtain that
  $|V(C^e_j) \cap S| \geq (n_h(h-1) +2 + t_\Fcal)(k-1)$.
  As there are exactly $m$ edges and $k$ columns, the budget is tight and we obtain that
  $|V(C^e_j) \cap S| = (n_h(h-1) +2 + t_\Fcal)(k-1)$.
  This implies that
  $|V(C^e_j) \sm S| = n_h(h-1) +2 + t_\Fcal$,
  corresponding to the size of a set $B^e_{i,j}$, for some $i \in \intv{1,k}$.
  Moreover, as all the vertices of $S$ are vertices in the sets $D^e$, we also have that for every $e \in E(G)$, $V(J^e) \cap S = \es$.
  The lemma follows.
\end{proof}

Using Lemma~\ref{reimitation}, for every edge $e\in E(G)$, it will be possible to make a correspondence between a permutation corresponding to a solution of {\sc $k\times k$ Permutation Independent Set} and the $k$ pairs $(i,j)$ in $\intv{1,k}^2$ for which  $V(B^e_{i,j}) \cap S = \es$. In order to ensure the consistency of the selected solution among the gadget graphs $D^e$, $e \in E(G)$,  we need to show that given  $(i,j) \in \intv{1,k}^2$, if there exists $e \in E(G)$ such that $V(B^e_{i,j}) \cap S = \es$, then for every $e' \in E(G)$, we have $V(B^{e'}_{i,j}) \cap S = \es$.
For this, we state two properties, namely Property~\ref{new-wealthierm} and Property~\ref{new-wealthiertm}, applying to the minor and topological minor version of the problem, respectively.
Then we prove Lemma~\ref{lemmasol}, stating that if the corresponding property  holds, then we indeed have the desired consistency for the corresponding problem, which allows to find a  solution of \textsc{$k\times k$ Permutation Independent Set}.

\begin{prope}
  \label{new-wealthierm}
  Let $\Fcal$ be a family of graphs
  and let $(F_\Fcal,\ell)$ be the enhanced $\Fcal$-M-framework of an input $(G,k)$ of {\sc $k\times k$ Permutation Independent Set}.
  Let $S$ be a solution
  of  \textsc{$\Fcal$-M-Deletion} on $(F_\Fcal,\ell)$.
  For every $e \in E(G)$, and for every $i,j \in \intv{1,k}$, if $b^e_{i,j} \not \in S$ then for every $i'\in \intv{1,k} \sm \{i\}$, we have  $a^{\sigma(e)}_{i',j} \in S$.
\end{prope}

\begin{prope}
  \label{new-wealthiertm}
  Let $\Fcal$ be a family of graphs
  and let $(F_\Fcal,\ell)$ be the enhanced $\Fcal$-TM-framework of an input $(G,k)$ of {\sc $k\times k$ Permutation Independent Set}.
  Let $S$ be a solution
  of  \textsc{$\Fcal$-TM-Deletion} on $(F_\Fcal,\ell)$.
  For every $e \in E(G)$, and for every $i,j \in \intv{1,k}$, if $b^e_{i,j} \not \in S$ then for every $i'\in \intv{1,k} \sm \{i\}$, we have  $a^{\sigma(e)}_{i',j} \in S$.
\end{prope}

The above properties state that the choices of the vertices $a_{i,j}^e, b_{i,j}^e$ are consistent through the  graph $F_\Fcal$.

\begin{lemma}
  \label{lemmasol}
  Let $\Fcal$ be a family of graphs, let $(F_\Fcal,\ell)$ be the enhanced $\Fcal$-M-framework (resp. enhanced $\Fcal$-TM-framework) of an input $(G,k)$ of {\sc $k\times k$ Permutation Independent Set}.
  If Property~\ref{new-wealthierm} (resp. Property~\ref{new-wealthiertm}) holds and there exists a solution $S$ of
  \textsc{$\Fcal$-M-Deletion} (resp. \textsc{$\Fcal$-TM-Deletion}) on $(F_\Fcal,\ell)$, 
  then, for any $e \in E(G)$, the set $T^e=\{(i,j) \mid V(B^e_{i,j}) \cap S = \es\}$ is a solution of \textsc{$k\times k$ Permutation Independent Set} on $(G,k)$. Moreover, for   any $e_1,e_2 \in E(G)$, $T^{e_1} = T^{e_2}$.
\end{lemma}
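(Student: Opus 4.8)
The plan is to read off from the solution $S$, for each edge $e\in E(G)$, a candidate function $\pi_e\colon\intv{1,k}\to\intv{1,k}$ recording which gadget $B^e_{i,j}$ survives in each column, then to use Lemma~\ref{reimitation} together with the hypothesized Property to show that all the $\pi_e$ coincide, and finally to exploit the complete bipartite edge gadgets inside the graphs $D^e$ to force the common function to describe an independent transversal of $G$. I do not expect a genuinely hard step; the statement is essentially bookkeeping on top of Lemma~\ref{reimitation}, the Property, and the edge gadgets, the only delicate points being the passage between the enhanced framework $F_\Fcal$ and the plain framework $F$, and the fact that ``one element per column'' comes for free from Lemma~\ref{reimitation} while ``one element per row'' needs the independence argument together with the row-completeness assumption on $G$.

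First I would reduce to the non-enhanced framework: since $F$ is a subgraph of $F_\Fcal$ and $|S|\le\ell$, the set $S$ is also a solution of \textsc{$\Fcal$-M-Deletion} (resp.\ \textsc{$\Fcal$-TM-Deletion}) on $(F,\ell)$, and as all the vertex sets $V(C^e_j)$, $V(B^e_{i,j})$, $V(J^e)$ lie in $V(F)$, Lemma~\ref{reimitation} applies directly: for every $e\in E(G)$ and every $j\in\intv{1,k}$ there is a unique $i$, call it $\pi_e(j)$, with $V(C^e_j)\sm S=V(B^e_{\pi_e(j),j})$, and $V(J^e)\cap S=\es$. Because the graphs $B^e_{i,j}$, $i\in\intv{1,k}$, are pairwise vertex-disjoint, non-empty, and partition $V(C^e_j)$, this is equivalent to saying that $V(B^e_{i,j})\cap S=\es$ holds exactly when $i=\pi_e(j)$; hence $\pi_e$ is a well-defined function and $T^e=\{(\pi_e(j),j)\mid j\in\intv{1,k}\}$ is a set with exactly one element in each column. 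Since $a^e_{i,j},b^e_{i,j}\in V(B^e_{i,j})$, this also yields the translations: $b^e_{i,j}\notin S$ iff $i=\pi_e(j)$, and $a^e_{i,j}\in S$ iff $i\neq\pi_e(j)$.

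Next I would establish the consistency of the $\pi_e$'s. Substituting the translations above into Property~\ref{new-wealthierm} (resp.\ Property~\ref{new-wealthiertm}), its content ``$b^e_{i,j}\notin S$ implies $a^{\sigma(e)}_{i',j}\in S$ for all $i'\ne i$'' reads ``$i=\pi_e(j)$ implies $i'\ne\pi_{\sigma(e)}(j)$ for all $i'\ne i$'', which forces $\pi_{\sigma(e)}(j)=\pi_e(j)$ for every $j$, that is, $\pi_{\sigma(e)}=\pi_e$. Since $\sigma$ is a cyclic permutation of $E(G)$, iterating this equality shows that all the maps $\pi_e$, $e\in E(G)$, coincide with a single map $\pi$, and in particular $T^{e_1}=T^{e_2}$ for all $e_1,e_2\in E(G)$; this settles the ``Moreover'' part.

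It then remains to verify that $T:=T^e$ is a solution of \textsc{$k\times k$ Permutation Independent Set}. I would argue by contradiction that $T$ is independent: if some edge $e^*=\{(i_1,j_1),(i_2,j_2)\}$ of $G$ had both endpoints in $T=T^{e^*}$, then, as $T^{e^*}$ has exactly one element per column, we would have $j_1\neq j_2$, so by the construction of $D^{e^*}$ every vertex of $B^{e^*}_{i_1,j_1}$ is adjacent in $F$ to every vertex of $B^{e^*}_{i_2,j_2}$, and both vertex sets avoid $S$. Setting $h=\min_{H\in\Fcal}|V(H)|$ (we may assume $h\ge 2$, the case $h=1$ being trivial), $B^{e^*}_{i_1,j_1}$ contains a copy of $K_{h-1}$ and $B^{e^*}_{i_2,j_2}$ is non-empty, so this copy together with one vertex of $B^{e^*}_{i_2,j_2}$ and the complete bipartite connection span a $K_h$ in $F\sm S$, hence in $F_\Fcal\sm S$; since every $H\in\Fcal$ with $|V(H)|=h$ is a (topological) minor of $K_h$, this contradicts $S$ being a solution on $(F_\Fcal,\ell)$. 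Thus $T$ is independent. Since $T$ has exactly one element per column and, by the standing assumption that $\{(i,j),(i,j')\}\in E(G)$ whenever $j\ne j'$, at most one element per row, and since $|T|=k$, it has exactly one element per row as well; hence $T=T^e$ is indeed a solution of \textsc{$k\times k$ Permutation Independent Set} on $(G,k)$.
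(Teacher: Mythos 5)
Your proposal is correct and follows essentially the same route as the paper: apply Lemma~\ref{reimitation} to get exactly one surviving block per column, use the Property to propagate the selection cyclically along $\sigma$, and then invoke the complete-bipartite edge gadget inside $D^{e^*}$ to produce a $K_h$ and rule out any edge with both endpoints in $T$. The only cosmetic differences are your explicit function notation $\pi_e$ and your spelled-out pigeonhole step for ``exactly one per row,'' which the paper leaves implicit.
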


\begin{proof}
  Let $S$ be a solution of \textsc{$\Fcal$-M-Deletion} (resp. \textsc{$\Fcal$-TM-Deletion})
  on $(F_\Fcal,\ell)$.
  Note that this implies that $S$ is also a solution of \textsc{$\Fcal$-M-Deletion} (resp. \textsc{$\Fcal$-TM-Deletion}) on $(F,\ell)$, the $\Fcal$-M-framework (resp.  $\Fcal$-TM-framework) of $(G,k)$, and so, Lemma~\ref{reimitation} can be applied.
  Let $\sigma$ be the cyclic permutation associated with  $F_\Fcal$.
  For each $e \in E(G)$, let $T^e =  \{(i,j) \mid V(B^e_{i,j}) \cap S = \es\}$.
  By Lemma~\ref{reimitation},  for each $e \in E(G)$, $T^e$ contains exactly one element from each column.
  We first show that for any $e \in E(G)$, $T^e = T^{\sigma(e)}$.
  Let $e \in E(G)$ and
  let $(i,j) \in T^e$.
  As  $(i,j) \in T^e$, we have that $b^e_{i,j} \not \in S$.
  By Property~\ref{new-wealthierm} (resp. Property~\ref{new-wealthiertm}), for each $i' \in \intv{1,k} \sm \{i\}$ it holds that $a^{\sigma(e)}_{i',j} \in S$, and thus $(i',j) \not \in T^{\sigma(e)}$.
  As  $T^{\sigma(e)}$ contains exactly one element from each column
  it follows that $(i,j)  \in T^{\sigma(e)}$. 
  As both $T^e$ and $T^{\sigma(e)}$ are of size exactly $k$, we obtain that $T^e = T^{\sigma(e)}$.

  By repeating the above argument iterating cyclically along the permutation $\sigma$, we obtain that for any $e_1,e_2 \in E(G)$, $T^{e_1} = T^{e_2}$.
  Let $(i,j),(i',j') \in T^{e_{1}}$.
  The existence of an edge $e = \{(i,j),(i',j')\}$ in $E(G)$ implies that in $D^e$, there is a vertex in $B^e_{i',j'}$ (in fact, any vertex of $B^e_{i',j'}$) that is fully connected to a copy of $K = K_{h-1}$  that is in $B^e_{i,j}$ (in fact, to all such copies), and so $D^e$ contains the clique $K_h$ as a subgraph.
  As $h = \min_{H \in \Fcal}|V(H)|$, this is not possible, and therefore {$T^{e}$} is an independent set in $G$ of size $k$.
  Moreover, by the construction of $G$, {$T^e$} contains at most one vertex per row and by Lemma~\ref{reimitation}, it contains exactly one vertex per column.
  The lemma follows.
\end{proof}

Given a solution $P$ of  {\sc $k\times k$ Permutation Independent Set} on $(G,k)$, we define $S_P = \{v \in V(F_\Fcal) \mid v \in B^e_{i,j} : e \in E(G), (i,j) \in \intv{1,k}^2\sm P\}$, where $(F_\Fcal,\ell)$ is the enhanced $\Fcal$-framework of $(G,k)$.
Note that $|S_P| = \ell$. In what follows we will prove that $S_{P}$
is a solution of
  \textsc{$\Fcal$-M-Deletion} (or \textsc{$\Fcal$-TM-Deletion}) on $(F_\Fcal,\ell)$ for each instantiation of   $F_{\cal F}$ that we will consider.

We now proceed to describe how to complete, starting from $F$, the construction of the  enhanced $\Fcal$-M-framework (or enhanced $\Fcal$-TM-framework)
$F_\Fcal$, depending on ${\cal F}$, towards proving Theorems~\ref{hardgene},~\ref{hardminor}, and~\ref{hardtminor}.

\subsection{The reduction for $\Fcal \subseteq \Ccal$}
\label{globred}

In order to prove Theorem~\ref{hardgene}, we will need some extra definitions.

Given a finite graph $H$, we define the \emph{block edge size function} $\besf_H: \Nbb \to \Nbb$ to be such that for any $x \in \Nbb$, $\besf_H(x)$ equals the number of edges of $H$ that are contained in a block with  at least $x$ edges.
Note that this function is a decreasing function and, as we only deal with finite graphs, for any finite graph $H$, there exists $x_0 \in \Nbb$ such that $\besf_H(x_0) = 0$ (notice that the minimum such $x_0$ is one more than the maximum number of edges of a block of $H$).
Given two block edge size functions $f$ and $g$, corresponding to two graphs, we say that $f \prec g$ if there exists an $x_0$, called a \emph{witness} of the inequality, such that $f(x_0) < g(x_0)$ and for each $x \geq x_0$, $f(x) \leq g(x)$.
It can be verified that $\prec$ is a {total} order on the set $\{\besf_H\mid H\mbox{~is a finite graph}\}$.
Note also that given two graphs $H$ and $H'$, if $\besf_H \prec \besf_{H'}$ then $H'$ cannot be a minor of $H$.
Intuitively, if one considers only the blocks of $H'$ with at least $x_0$ edges, then there are too many edges to  fit within the blocks of $H$ with at least $x_0$ edges, where $x_0$ is a witness of the inequality.

Given an integer $k$ and a graph $H$ that contains at least one block with at least $k$ edges, a \emph{$k$-edges leaf block cut} is a tuple $(X,Y,B,v)$, where

\begin{itemize}
\item $B$ is a block with at least $k$ edges,
\item $v\in V(B)$ is a cut vertex, or, in case $H$ is $2$-connected, any vertex of $H=B$,
\item $X$ and $Y$ are two subsets of $V(H)$ such that $X \cup Y = V(H)$ and $X \cap Y = \{v\}$,
\item $Y \setminus \{v\}$ is the vertex set of the connected component of $H \gm \{v\}$ that contains $V(B) \sm \{v\}$, and
\item  $H[Y]$ contains only one block with at least $k$ edges (which is precisely $B$).
\end{itemize}

Intuitively, $B$ would be a leaf of the block-cut tree of the graph $H'$  obtained from  $H$ by  iteratively removing every leaf block with  at most  $k-1$ edges from $\bct(H)$, $v$ is the unique remaining neighbor of $B$ in $\bct(H')$,  $Y$ consists of $B$, and $X$ is the rest of the remaining graph together with $v$.
Note that as long as $H$ contains at least one block with at least $k$ edges, there exists a $k$-edges leaf block cut.

We are now ready to prove Theorem~\ref{hardgene}.

\begin{proof}[Proof of Theorem~\ref{hardgene}]
  Let $\Fcal \subseteq \Ccal$ and let $(F,\ell)$ be the $\Fcal$-M-framework (resp. $\Fcal$-TM-framework) of an input $(G,k)$ of {\sc $k\times k$ Permutation Independent Set}, where $t_\Fcal = 0$.
  Let $H\in \Fcal$ be  {a} graph that minimizes $\besf_H$ with regard to the relation $\prec$ over all the graphs of $\Fcal$.
  Let $(X,Y,B,v)$ be a $5$-edge leaf block cut of $H$.
  Let $b = |{E}(B)|$, let $H_X = H[X]$, and let $H_Y = H[Y]$.
  Let $v'$ be a neighbor of $v$ in $B$ and 
  $H^-_Y$ be the graph obtained from $H_Y$ by removing $\{v, v'\}$.
  Note that $H_X$ and $H_Y^-$ are connected and $H_Y$ contains only one block with at least five edges, and this block is precisely $B$.

  We are now ready to describe the graph $F_\Fcal$. All the new vertices are $J$-extra vertices. Namely,
  starting from $F$, for each $e \in E(G)$, we add a copy of $H_X$, and we denote by $q^e$ the copy of $v$.
  For each $e \in E(G)$ and for each $i \in \intv{1,k}$, we add a copy of $H_Y^-$ where we identify $v$ and $q^e$, and $u$ and $r_i^e$.
  This completes the definition of $F_\Fcal$. We stress that in this construction there are no $B$-extra vertices, i.e., $t_{\cal F}=0$.

  Let $P$ be a solution of {\sc $k\times k$ Permutation Independent Set} on $(G,k)$.
  Then every connected component of $F_\Fcal \gm S_P$ is either a copy of the graph $K$, which is of size $h-1$ (recall that $h = \min_{H \in \Fcal}|V(H)|$), or
  the graph $Z$ depicted in  Figure~\ref{reallocatedstretch}. We claim that
   $\besf_Z \prec \besf_H$ with witness $b$. Indeed, since $|E(H_{Y}^{-})|<|E(H_{Y})|$, $|E(B)| \geq 5$, and the blocks of $Z$ that are not copies of $H_{X}$ or $H_{Y}^{-}$
   have four edges,  it follows that $\besf_Z(b)<\besf_{H}(b)$, and $\besf_Z(b')\leq \besf_{H}(b')$ for all $b'>b$. Therefore,  $\besf_Z \prec \besf_H$ with witness $b$.
   This in turn implies, because of the choice of $H$, that $\besf_Z \prec \besf_H'$ for each $H' \in \Fcal$. Therefore,  no $H'\in \Fcal$ is  a minor of the graph $Z$. Thus  $S_P$ is a solution of  \textsc{$\Fcal$-M-Deletion} (resp. \textsc{$\Fcal$-TM-Deletion}) of size $\ell$.

    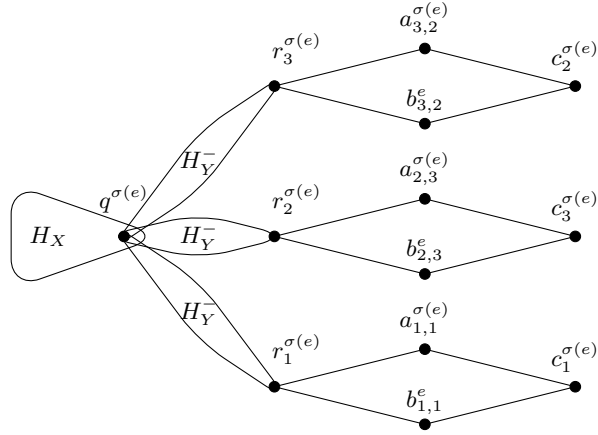
\begin{figure}[htb]
    \centering
    \scalebox{1}{\begin{tikzpicture}[scale=1]
        \draw[white] (0,2) -- (-2,4)  node[pos=0.5]{\textcolor{black}{\footnotesize{$H_Y^-$}}};
        \draw[white] (0,4) -- (-2,4)  node[pos=0.5]{\textcolor{black}{\footnotesize{$H_Y^-$}}};
        \draw[white] (0,6) -- (-2,4)  node[pos=0.5]{\textcolor{black}{\footnotesize{$H_Y^-$}}};

        \node at (0.3,2.5) {\footnotesize{$r^{\sigma(e)}_{1}$}};
        \node at (0.3,4.5) {\footnotesize{$r^{\sigma(e)}_{2}$}};
        \node at (0.3,6.5) {\footnotesize{$r^{\sigma(e)}_{3}$}};

        \node at (2,6.9) {\footnotesize{$a^{\sigma(e)}_{3,2}$}};
        \node at (2,5.8) {\footnotesize{$b^{e}_{3,2}$}};

        \node at (2,4.9) {\footnotesize{$a^{\sigma(e)}_{2,3}$}};
        \node at (2,3.8) {\footnotesize{$b^{e}_{2,3}$}};
        \node at (2,2.9) {\footnotesize{$a^{\sigma(e)}_{1,1}$}};
        \node at (2,1.8) {\footnotesize{$b^{e}_{1,1}$}};

        \node at (4,2.4) {\footnotesize{$c^{\sigma(e)}_{1}$}};
        \node at (4,4.4) {\footnotesize{$c^{\sigma(e)}_{3}$}};
        \node at (4,6.4) {\footnotesize{$c^{\sigma(e)}_{2}$}};

        \node at (-2,4.5) {\footnotesize{$q^{\sigma(e)}$}};

        \foreach \x in {2,4,6}
        {
          \vertex{2,\x+0.5};
          \vertex{2,\x-0.5};
          \vertex{4,\x};
          \draw (0,\x) -- (2, \x+0.5) -- (4,\x) -- (2, \x-0.5) -- (0,\x);
        }

        \vertex{0,2};
        \vertex{0,4};
        \vertex{0,6};

        \vertex{-2,4};


        \draw[rounded corners=4mm] (-2.3,4) -- (-1,4.3) -- (0.2,4) -- (-1,3.7) -- cycle;
        \draw[rounded corners=4mm] (-2.2,3.8) -- (-1,5.4) -- (0.2,6.2) -- (-1,4.6) -- cycle;
        \draw[rounded corners=4mm] (-2.2,4.2) -- (-1,3.4) -- (0.2,1.8) -- (-1,2.6) -- cycle;

        \draw[rounded corners=4mm] (-1.5,4) -- (-3.5,4.7) -- (-3.5,3.3) -- cycle;

        \node at (-3,4) {\footnotesize{$H_X$}};
      \end{tikzpicture}}
    \caption{A connected component $Z$ of $F_H \gm S$ that is not a copy of $K$, with $T^e = \{(1,1), (2,3), (3,2)\}$.
    }

    \label{reallocatedstretch}
  \end{figure}

  Assume now that $S$ is a solution of  \textsc{$\Fcal$-M-Deletion} (resp. \textsc{$\Fcal$-TM-Deletion}) on $F_{\cal F}$ of size $\ell$.
  Let $e \in E(G)$ and let $i,j \in \intv{1,k}$ such that $b^e_{i,j} \not \in S$.
  Let $i' \in \intv{1,k}$ such that $i \not = i'$.
  If $a^{\sigma(e)}_{i',j} \not \in S$, then, since by Lemma~\ref{reimitation} it holds that $S \cap V(J^{\sigma(e)}) = \es$, we have that
  the copy of $H_X$, the copy of $H_Y^-$ between $q^{\sigma(e)}$ and $r^{\sigma(e)}_{i'}$, and
  the path that starts at $q^{\sigma(e)}$, goes through the corresponding copy of $H_Y^-$  until $r^{\sigma(e)}_{i}$, and
  continues with the vertices $b^e_{i,j}, c^{\sigma(e)}_{j}, a^{\sigma(e)}_{i',j}$, and $r^{\sigma(e)}_{i'}$, induce a graph that contains $H$ as a topological minor.
  This implies that if  $a^{\sigma(e)}_{i',j} \not \in S$, $H$ is a topological minor of $F_\Fcal\gm S$.
  As this is not possible by definition of $S$, we have that  $a^{\sigma(e)}_{i',j} \in S$.
  Thus Property~\ref{new-wealthierm} (resp. Property~\ref{new-wealthiertm}) holds and the theorem follows from Lemma~\ref{lemmasol}.
\end{proof}

\subsection{The reduction for \textsc{$\{H\}$-Deletion}}

This section is dedicated to the proofs of Theorem~\ref{hardminor} and Theorem~\ref{hardtminor}, which we restate here for better readability.

\begin{T4}  Let $H \in \mathcal{Q}$.
  Unless the \ETH fails,  \textsc{$\{H\}$-M-Deletion} cannot be solved in time $2^{\smallo(\tw \log \tw)}\cdot n^{\Ocal(1)}$.
\end{T4}

\begin{T5}
  Let $H \in \mathcal{Q}\setminus \Scal$.
  Unless the \ETH fails,  \textsc{$\{H\}$-TM-Deletion} cannot be solved in time $2^{\smallo(\tw \log \tw)}\cdot n^{\Ocal(1)}$.
\end{T5}

Thus, we will focus on cases where the family $\Fcal$ contains only one graph $H$. We start with a number of lemmas, namely Lemma~\ref{cycletwocut} up to Lemma~\ref{crickbisminor}, in which we distinguish several cases according to properties of $H$ such that its  number of cut vertices and the presence of certain cycles and vertices of degree one. Altogether, these cases will cover all the possible graphs $H$ considered in
Theorem~\ref{hardminor} and Theorem~\ref{hardtminor}. The proofs of each of these lemmas are quite similar and follow the same structure. Namely, we first describe the graph $F_{\{H\}}$, and then we prove the equivalence between the existence of solutions of
{\sc $k\times k$ Permutation Independent Set} and \textsc{$\{H\}$-M-Deletion} (or \textsc{$\{H\}$-TM-Deletion}). In the reverse direction, we will prove that Property~\ref{new-wealthierm} and Property~\ref{new-wealthiertm} hold, and therefore we can apply  Lemma~\ref{lemmasol}.

Thanks to Theorem~\ref{hardgene}, we can assume that each block of $H$ contains at most four edges, i.e., each block of $H$ is an edge, a $C_3$, or a $C_4$. In this setting, we have $h = |V(H)|$.


\begin{lemma}
  \label{cycletwocut}
  Let $H$ be a connected graph such that the number of cycles (of size three or four)  in $H$ with at least
  two cut vertices is exactly one.
  Neither \textsc{$\{H\}$-M-Deletion} nor \textsc{$\{H\}$-TM-Deletion} can be solved in time $2^{\smallo(\tw \log \tw)}\cdot n^{\Ocal(1)}$  unless the \ETH fails.
\end{lemma}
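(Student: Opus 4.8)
The plan is to follow exactly the template set up in Section~\ref{supexpgencons}: starting from the $\{H\}$-framework $(F,\ell)$ of an instance $(G,k)$ of \textsc{$k\times k$ Permutation Independent Set}, we build an enhanced framework $F_{\{H\}}$ by attaching suitable copies of a fixed piece of $H$ to the separator gadgets $J^e$, and then invoke Lemma~\ref{lemmasol}. So the whole proof reduces to (i) specifying how to enhance the $J^e$'s, (ii) verifying that for every solution $P$ of \textsc{$k\times k$ Permutation Independent Set} the set $S_P$ is a solution of \textsc{$\{H\}$-M-Deletion} (resp.\ \textsc{$\{H\}$-TM-Deletion}) on $(F_{\{H\}},\ell)$, and (iii) verifying that any solution $S$ of size $\ell$ forces Property~\ref{new-wealthierm} (resp.\ Property~\ref{new-wealthiertm}).

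First I would isolate the unique cycle $C$ of $H$ (of size three or four) that contains at least two cut vertices of $H$. Pick two cut vertices $u,w$ of $H$ lying on $C$, and let $e_0=\{u,w\}$ be an edge of $C$ if $u,w$ are adjacent on $C$; otherwise (only possible if $C=C_4$) take the two-edge path of $C$ between them. Write $H^{uw}$ for the graph obtained from $H$ by deleting the edge(s) of $C$ between $u$ and $w$ that are ``internal'' to this choice, so that $H^{uw}$ is $H$ with the cycle $C$ broken at the designated spot; by construction $H^{uw}$ is still connected and, crucially, $H^{uw}$ does \emph{not} contain $H$ as a (topological) minor, because $H^{uw}$ has strictly fewer cycles than $H$ carrying two cut vertices (this is the point where the hypothesis ``exactly one such cycle'' is used: re-adding the missing edge is the only way to recreate $H$). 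The enhanced separator gadget is then obtained by, for each $e\in E(G)$, gluing a copy of (a split version of) $H^{uw}$ along its two special vertices onto the $r_i^{\sigma(e)}$'s and a new apex vertex $q^{\sigma(e)}$, exactly as in the proof of Theorem~\ref{hardgene} with $H_X,H_Y^-$ replaced by the two halves of $H^{uw}$ obtained by cutting $C$; all new vertices are $J$-extra vertices and $t_{\Fcal}=0$, so the pathwidth bound from Section~\ref{supexpgencons} still gives $\tw(F_{\{H\}})=\Ocal(k)$.

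The forward direction is routine: given a solution $P$, every connected component of $F_{\{H\}}\gm S_P$ is either a copy of $K=K_{h-1}$ or the analogue of the graph $Z$ of Figure~\ref{reallocatedstretch}, which is built from copies of the two halves of $H^{uw}$ plus $C_4$-blocks coming from the $a^e_{i,j},b^e_{i,j},c^e_j,r^e_i$ connections; since $Z$ contains no cycle through two cut vertices other than those already present in a single copy of $H^{uw}$, and $H^{uw}\not\succeq_{\sf m}$-contains $H$, we get $H\not\preceq_{\sf m} Z$ (and the same for $\preceq_{\sf tm}$), so $S_P$ is a valid deletion set of size $\ell$. For the reverse direction, suppose $S$ is a solution of size $\ell$; by Lemma~\ref{reimitation} we have $V(J^{\sigma(e)})\cap S=\es$ and $V(C^e_j)\sm S=V(B^e_{i,j})$ for a unique $i$. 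If for some $e,i,j$ with $b^e_{i,j}\notin S$ there were $i'\neq i$ with $a^{\sigma(e)}_{i',j}\notin S$, then the copy of $H^{uw}$ sitting in $J^{\sigma(e)}$ together with the path $q^{\sigma(e)}\to r^{\sigma(e)}_i\to b^e_{i,j}\to c^{\sigma(e)}_j\to a^{\sigma(e)}_{i',j}\to r^{\sigma(e)}_{i'}$ supplies precisely the missing edge(s) of the cycle $C$, reconstructing a (topological) minor model of $H$ in $F_{\{H\}}\gm S$ --- contradiction. Hence $a^{\sigma(e)}_{i',j}\in S$, i.e.\ Property~\ref{new-wealthierm} (resp.\ Property~\ref{new-wealthiertm}) holds, and Lemma~\ref{lemmasol} finishes the proof.

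The main obstacle I anticipate is item (ii)/(iii) done \emph{carefully}: one must check that the split of $H$ at the cycle $C$ can really be realized as a subgraph of a separator gadget whose only attachment points to the rest of $F$ are the $q^{\sigma(e)}$ and the $r^{\sigma(e)}_i$'s, and, conversely, that \emph{no other} part of $F_{\{H\}}\gm S$ (for instance a long alternating walk through many $C_4$-blocks, or through several copies of the halves of $H^{uw}$ across distinct $e$'s) can host $H$ as a (topological) minor. This is where the blanket assumption, inherited from Theorem~\ref{hardgene}, that every block of $H$ has at most four edges is essential, and where the ``exactly one cycle with two cut vertices'' hypothesis does the real work: it guarantees that breaking $C$ at $\{u,w\}$ destroys $H$ irreparably within any gadget, while the extra path through the $B^e_{i,j}$-layer is the unique way to heal it --- which is exactly the consistency mechanism the framework needs. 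A secondary but purely bookkeeping issue is handling the case $C=C_4$ with $u,w$ non-adjacent on $C$, where ``the missing edge(s)'' is a two-edge path and the healing path through $b^e_{i,j},c^{\sigma(e)}_j,a^{\sigma(e)}_{i',j}$ must be routed accordingly; this changes no idea but requires drawing the corresponding figure.
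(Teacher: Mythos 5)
Your plan is the right one in broad strokes (follow Section~\ref{supexpgencons}, break the distinguished cycle so that only the consistency path can restore it, then apply Lemma~\ref{lemmasol}), but two concrete choices you make diverge from the paper's proof and one of them is a genuine gap.

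First, the paper does \emph{not} insist on deleting the edge (or path) between two \emph{cut vertices}. It lets $B$ be the unique block with at least three edges and two cut vertices, picks an \emph{arbitrary} edge $\{v,v'\}$ of $B$, and sets $H^-=H\gm\{v,v'\}$. Your variant — deleting the two-edge path of $C_4$ when the two cut vertices are antipodal — actually breaks: the middle vertex $x$ of that path is not a cut vertex, so it has no neighbours outside $C$, and after deleting both incident edges it becomes isolated, so $H^{uw}$ is \emph{not} connected, contradicting what you assert. The case $C=C_4$ with opposite cut vertices is not exotic (e.g.\ a $C_4$ with two pendant paths attached at opposite corners), so this must be handled; removing a single arbitrary edge, as the paper does, sidesteps the issue entirely.

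Second, the gadget you describe (``split $H^{uw}$ into $H_X$/$H_Y^-$-style halves'') does not match the paper and is not well defined, precisely because $H^{uw}$ is still connected after removing one edge of the $2$-connected block. The paper's gadget is simpler: for each $e$ add a single apex $q^e$, and for each $i\in[1,k]$ glue a \emph{full} copy of $H^-$ to the framework by identifying $v$ with $r^e_i$ and $v'$ with $q^e$; all $k$ copies share $q^e$, and there are no $B$-extra vertices ($t_\Fcal=0$). This also makes the reverse direction cleaner than what you sketch: the path $q^{\sigma(e)}\to r^{\sigma(e)}_i$ in your argument is not an edge of $F_{\{H\}}$, and the way the paper realizes the missing edge $\{v,v'\}$ of $B$ is by using \emph{two} copies of $H^-$ — one contributes all of $H$ except $\{v,v'\}$, and the cycle is closed by the path $r^{\sigma(e)}_i - b^e_{i,j} - c^{\sigma(e)}_j - a^{\sigma(e)}_{i',j} - r^{\sigma(e)}_{i'}$ followed by the $(v,v')$-path of $B$ inside the second copy of $H^-$ back to $q^{\sigma(e)}$. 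This second-copy detour is the ingredient your proposal is missing. Your soundness direction (``$Z$ has no cycle with two cut vertices'') is essentially the paper's argument and is fine, once the gadget is the correct one.
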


\begin{proof}
  Let $(F,\ell)$ be the $\{H\}$-M-framework (resp. $\{H\}$-TM-framework) of an input $(G,k)$ of {\sc $k\times k$ Permutation Independent Set}, where $t_\Fcal = 0$.
  Let $B$ be the block of $H$ with at least three edges and two cut vertices, and let $\{v,v'\}$ be an edge of $B$.
  Let $H^-$ be the graph $H$ where the edge $\{v,v'\}$ has been removed.

  We are now ready to describe the graph $F_{\{H\}}$.
  Starting from $F$, for each $e \in E(G)$ , we introduce a new vertex $q^e$.
  For each $e \in E(G)$ and each $i \in \intv{1,k}$,
  we add a copy of $H^-$ where we identify $v$ and $r_i^e$, and $v'$ and $q^e$.
  This completes the definition of $F_{\{H\}}$.

  Let $P$ be a solution of {\sc $k\times k$ Permutation Independent Set} on $(G,k)$.
  Then every connected component of $F_{\{H\}} \gm S_P$ is either a copy of the graph $K$, which is of size $h-1$, or
  the graph $Z$ depicted in  Figure~\ref{reallocatedbiss}.
  As $Z$ does not contain any cycle that contains at least two cut vertices,  we obtain that $H$ is not a minor of $Z$. Thus  $S_P$ is a solution of  \textsc{$\{H\}$-M-Deletion} (resp. \textsc{$\{H\}$-TM-Deletion}) of size $\ell$.

  \begin{figure}[htb]
    \centering
    \scalebox{1}{\begin{tikzpicture}[scale=1]
        \node at (0,3.5) {\footnotesize{$r^{\sigma(e)}_{1}$}};
        \node at (0,5.5) {\footnotesize{$r^{\sigma(e)}_{2}$}};
        \node at (0,7.5) {\footnotesize{$r^{\sigma(e)}_{3}$}};
        \node at (2,7.9) {\footnotesize{$a^{\sigma(e)}_{3,2}$}};
        \node at (2,6.8) {\footnotesize{$b^{e}_{3,2}$}};

        \node at (2,5.9) {\footnotesize{$a^{\sigma(e)}_{2,3}$}};
        \node at (2,4.8) {\footnotesize{$b^{e}_{2,3}$}};
        \node at (2,3.9) {\footnotesize{$a^{\sigma(e)}_{1,1}$}};
        \node at (2,2.8) {\footnotesize{$b^{e}_{1,1}$}};
        \node at (4,3.4) {\footnotesize{$c^{\sigma(e)}_{1}$}};
        \node at (4,5.4) {\footnotesize{$c^{\sigma(e)}_{3}$}};
        \node at (4,7.4) {\footnotesize{$c^{\sigma(e)}_{2}$}};
        \node at (-3.5,5) {\footnotesize{$q^{\sigma(e)}$}};

        \foreach \x in {3,5,7}
        {
          \vertex{2,\x+0.5};
          \vertex{2,\x-0.5};
          \vertex{4,\x};
          \draw (0,\x) -- (2, \x+0.5) -- (4,\x) -- (2, \x-0.5) -- (0,\x);
          \draw (-3,5) -- (0,\x) node[pos=0.6, circle, draw, fill=white]{\footnotesize{$H^-$}};
        }
        \vertex{0,3};
        \vertex{0,5};
        \vertex{0,7};
        \vertex{-3,5};
      \end{tikzpicture}}
    \caption{A connected component $Z$ of $F_{\{H\}} \gm S_P$ that is not a copy of $K$, with $T^e = \{(1,1), (2,3), (3,2)\}$.}

    \label{reallocatedbiss}
  \end{figure}

    \begin{figure}[htb]
    \centering
    \scalebox{1}{\begin{tikzpicture}[scale=1]
        \node at (0.3,3.5) {\footnotesize{$r^{\sigma(e)}_{1}$}};
        \node at (0.3,5.5) {\footnotesize{$r^{\sigma(e)}_{2}$}};
        \node at (0.3,7.5) {\footnotesize{$r^{\sigma(e)}_{3}$}};
        \node at (2,7.9) {\footnotesize{$a^{\sigma(e)}_{3,2}$}};
        \node at (2,6.8) {\footnotesize{$b^{e}_{3,2}$}};

        \node at (2,5.9) {\footnotesize{$a^{\sigma(e)}_{2,3}$}};
        \node at (2,4.8) {\footnotesize{$b^{e}_{2,3}$}};
        \node at (2,3.9) {\footnotesize{$a^{\sigma(e)}_{1,1}$}};
        \node at (2,2.8) {\footnotesize{$b^{e}_{1,1}$}};
        \node at (4,3.4) {\footnotesize{$c^{\sigma(e)}_{1}$}};
        \node at (4,5.4) {\footnotesize{$c^{\sigma(e)}_{3}$}};
        \node at (4,7.4) {\footnotesize{$c^{\sigma(e)}_{2}$}};
        \node at (-2.5,5) {\footnotesize{$q^{\sigma(e)}$}};

        \draw[white] (0,3) -- (-2,5)  node[pos=0.5]{\textcolor{black}{\footnotesize{$H^-$}}};
        \draw[white] (0,5) -- (-2,5)  node[pos=0.5]{\textcolor{black}{\footnotesize{$H^-$}}};
        \draw[white] (0,7) -- (-2,5)  node[pos=0.5]{\textcolor{black}{\footnotesize{$H^-$}}};
        \draw[rounded corners=4mm] (-2.3,5) -- (-1,5.3) -- (0.2,5) -- (-1,4.7) -- cycle;
        \draw[rounded corners=4mm] (-2.2,4.8) -- (-1,6.4) -- (0.2,7.2) -- (-1,5.6) -- cycle;
        \draw[rounded corners=4mm] (-2.2,5.2) -- (-1,4.4) -- (0.2,2.8) -- (-1,3.6) -- cycle;

        \foreach \x in {3,5,7}
        {
          \vertex{2,\x+0.5};
          \vertex{2,\x-0.5};
          \vertex{4,\x};
          \draw (0,\x) -- (2, \x+0.5) -- (4,\x) -- (2, \x-0.5) -- (0,\x);
        }
        \vertex{0,3};
        \vertex{0,5};
        \vertex{0,7};
        \vertex{-2,5};
      \end{tikzpicture}}
    \caption{A connected component $Z$ of $F_{\{H\}} \gm S_P$ that is not a copy of $K$, with $T^e = \{(1,1), (2,3), (3,2)\}$.}

    \label{reallocatedbissstretch}
  \end{figure}

  Assume now that $S$ is a solution of \textsc{$\{H\}$-M-Deletion} (resp. \textsc{$\{H\}$-TM-Deletion}) on $F_{\{H\}}$ of size $\ell$.
  Let $e \in E(G)$ and let $i,j \in \intv{1,k}$ such that $b^e_{i,j} \not \in S$.
  Let $i' \in \intv{1,k}$ such that $i \not = i'$.
  If $a^{\sigma(e)}_{i',j} \not \in S$, then, since by Lemma~\ref{reimitation} it holds that $S \cap V(J^{\sigma(e)}) = \es$, we have that
    the path $ r^{\sigma(e)}_{i}, b^e_{i,j}, c^{\sigma(e)}_{j}, a^{\sigma(e)}_{i',j}, r^{\sigma(e)}_{i'}$ together with the two copies of $H^-$  attached to $q^{\sigma(e)}$ and $r^{\sigma(e)}_{i}$ and to  $q^{\sigma(e)}$ and $r^{\sigma(e)}_{i'}$  induce a graph that contains $H$ as a topological minor.
  This implies that if  $a^{\sigma(e)}_{i',j} \not \in S$, $H$ is a topological minor of $F_{\{H\}}\gm S$.
  As this is not possible by definition of $S$, we have that  $a^{\sigma(e)}_{i',j} \in S$.
  Thus Property~\ref{new-wealthierm} (resp. Property~\ref{new-wealthiertm}) holds and the lemma follows from Lemma~\ref{lemmasol}.
\end{proof}

We now assume that $H$ contains at least three cut vertices. In particular, this applies to the case where $H=P_{5}$.

\begin{lemma}
  \label{threecut}
  Let $H$ be a connected graph that contains at least three cut vertices.
  Neither \textsc{$\{H\}$-M-Deletion} nor \textsc{$\{H\}$-TM-Deletion} can be solved in time $2^{\smallo(\tw \log \tw)}\cdot n^{\Ocal(1)}$  unless the \ETH fails.
\end{lemma}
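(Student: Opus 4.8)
The proof will follow exactly the template laid out after the restatement of Theorems~\ref{hardminor} and~\ref{hardtminor}: I reduce from {\sc $k\times k$ Permutation Independent Set} via the enhanced $\{H\}$-framework, I verify that $S_P$ (for a solution $P$ of the permutation problem) is a valid deletion set, and in the reverse direction I verify Property~\ref{new-wealthierm} (resp.\ Property~\ref{new-wealthiertm}) so that Lemma~\ref{lemmasol} applies. As permitted by Theorem~\ref{hardgene}, I may assume every block of $H$ has at most four edges, so $h=|V(H)|$; and here I additionally assume $H$ has at least three cut vertices. First I would fix notation for the block-cut tree: pick a leaf block $B_0$ of $\bct(H)$ with first vertex $c_0$, and since $H$ has at least three cut vertices there is ``enough room'' in $H\gm(V(B_0)\setminus\{c_0\})$ to find another piece of $H$ hanging off a different cut vertex. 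Concretely, I would split $H$ at a carefully chosen cut vertex $w$ into a ``spine'' part and a ``pendant'' part, the key point being that a single copy of the pendant part, attached to each $r_i^e$, plus the spine attached to the common vertex $q^e$, will reconstruct $H$ only when a short consistency path through $b^e_{i,j}, c^{\sigma(e)}_j, a^{\sigma(e)}_{i',j}$ is available.

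**Construction of $F_{\{H\}}$.** Starting from the framework graph $F$ (with $t_{\cal F}=0$, so no $B$-extra vertices), for each $e\in E(G)$ I add one $J$-extra vertex $q^e$, and for each $e\in E(G)$ and each $i\in\intv{1,k}$ I attach a copy of a fixed graph $H^-$ (the part of $H$ obtained by deleting one well-chosen edge or vertex incident to a cut vertex, so that $H^-$ is connected and strictly ``smaller'' than $H$ in the relevant structural sense) by identifying two specified vertices of the copy with $q^e$ and $r_i^e$ respectively. The precise choice of $H^-$ and of the two identification vertices has to guarantee two things: (i) for a yes-instance, every connected component of $F_{\{H\}}\setminus S_P$ is either a copy of $K=K_{h-1}$ or a graph $Z$ (analogous to Figures~\ref{reallocatedbiss} and~\ref{reallocatedbissstretch}) which does \emph{not} contain $H$ as a (topological) minor, because $Z$ lacks the configuration of three cut vertices arranged as in $H$; and (ii) when $b^e_{i,j}\notin S$ and $a^{\sigma(e)}_{i',j}\notin S$ for some $i'\neq i$, the path $r^{\sigma(e)}_i, b^e_{i,j}, c^{\sigma(e)}_j, a^{\sigma(e)}_{i',j}, r^{\sigma(e)}_{i'}$ together with the two copies of $H^-$ hanging off $q^{\sigma(e)}$ (one going to $r^{\sigma(e)}_i$, one to $r^{\sigma(e)}_{i'}$) reconstitutes $H$ as a topological minor (hence also as a minor). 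Since by Lemma~\ref{reimitation} $S\cap V(J^{\sigma(e)})=\es$, all the vertices of these copies of $H^-$ survive in $F_{\{H\}}\setminus S$, which makes (ii) legitimate.

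**Finishing via Lemma~\ref{lemmasol}.** With (i) established, $S_P$ is a deletion set of size $\ell$, giving the forward direction. For the converse, given a solution $S$ of size $\ell$, Lemma~\ref{reimitation} applies to the underlying framework, so $V(J^{\sigma(e)})\cap S=\es$; then (ii) forces $a^{\sigma(e)}_{i',j}\in S$ whenever $b^e_{i,j}\notin S$ and $i'\neq i$, which is exactly Property~\ref{new-wealthierm} (resp.\ Property~\ref{new-wealthiertm}). Lemma~\ref{lemmasol} then yields a solution of {\sc $k\times k$ Permutation Independent Set} on $(G,k)$, and conversely. Since $\pw(F_{\{H\}})=\Ocal(k)$ by the bound established in Section~\ref{supexpgencons}, a $2^{o(\tw\log\tw)}\cdot n^{\Ocal(1)}$ algorithm for \textsc{$\{H\}$-M-Deletion} (resp.\ \textsc{$\{H\}$-TM-Deletion}) would solve {\sc $k\times k$ Permutation Independent Set} in time $2^{o(k\log k)}$, contradicting Theorem~\ref{borrowing} under the \ETH.

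**The main obstacle.** The routine part is the framework bookkeeping; the delicate part is choosing $H^-$ (equivalently, the cut vertex $w$ and the deleted edge/vertex) so that \emph{both} (i) and (ii) hold simultaneously. For (ii) I need $H^-$ to be exactly ``$H$ minus a path-like bridge between two attachment points'', so that re-inserting a path through the $a/b/c$ vertices rebuilds $H$; for (i) I need that after the deletion $S_P$, no component $Z$ secretly contains $H$, which requires that the single surviving pendant copy per row, glued at one point $q^{\sigma(e)}$, cannot supply the three cut vertices of $H$ in the right cyclic/linear arrangement — here is where the hypothesis ``$H$ has at least three cut vertices'' is used, presumably by arranging that $Z$ has a block-cut tree that is a ``star'' or a short path incompatible with $\bct(H)$. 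I expect the argument to split into a couple of subcases depending on whether the cut vertices of $H$ lie on a path in $\bct(H)$ or branch, and on whether some of the blocks involved are single edges versus $C_3$ or $C_4$; each subcase is a short structural check, but enumerating them correctly is where care is needed.
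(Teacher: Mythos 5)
Your scaffolding (reduce via the enhanced $\{H\}$-framework, check the forward direction on $S_P$, then verify Property~\ref{new-wealthierm}/\ref{new-wealthiertm} and invoke Lemma~\ref{lemmasol} and the $\pw=\Ocal(k)$ bound) is the right template, but the specific gadget you propose is the wrong one, and it does not work.

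You graft the construction of Lemma~\ref{cycletwocut} onto this case: add a single hub vertex $q^e$, attach $k$ copies of some connected $H^-$ between $q^e$ and the vertices $r^e_i$, and reconstitute $H$ via the consistency path plus two copies of $H^-$ meeting at $q^{\sigma(e)}$. That template depends on $H$ having a cycle block containing two cut vertices, so that removing one edge of that cycle gives a connected $H^-$ and the missing edge can be re-supplied as a subdivided path. When $H$ has three or more cut vertices but no such cycle, the construction breaks in the forward direction. Concretely, take $H=P_5$ (three cut vertices, so in scope for this lemma). Whatever connected $H^-$ and attachment vertices you choose, the component $Z$ of $F_{\{H\}}\setminus S_P$ containing $q^{\sigma(e)}$ always has a path of length at least $4$: e.g.\ even with $H^-=P_2$ one gets $q^{\sigma(e)},r^{\sigma(e)}_i,a^{\sigma(e)}_{i,j},c^{\sigma(e)}_j,b^{e}_{i,j}$, which is a $P_5$ (in fact $Z$ contains a \banner), so $Z\supseteq H$ and $S_P$ is not a solution. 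A hub with pendant copies of a subgraph of $H$ inevitably produces, inside a single component, long concatenations through $q^{\sigma(e)}$ and through the surviving $a/c/b/r$ quadrilateral; for path-like $H$ this cannot be avoided, and the hypothesis ``three cut vertices'' does not repair it — it is precisely such path-like graphs (including $P_5$) that this lemma must handle.

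The paper's proof uses a genuinely different gadget. It first reduces, via Lemma~\ref{cycletwocut}, to the case where the three cut vertices do not all lie in one block, then picks a path $B_a,a,B_{a,c},c,B_{c,b},b,B_b$ in $\bct(H)$ with $B_a$ a leaf, fixes $a'\in V(B_a)\setminus\{a\}$ and $r\in V(B_b)\setminus\{b\}$, and splits $H$ into four pieces $R_a,R_b,R_c,R_r$ (plus the singleton $a'$). There is no hub vertex at all: copies of $R_a$ and $R_b$ are glued onto the $a^e_{i,j}$ and $b^e_{i,j}$ vertices (so they are $B$-extra vertices, hence $t_\Fcal>0$ in general, contradicting your $t_\Fcal=0$), and copies of $R_c$ and $R_r$ are glued onto $c^e_j$ and $r^e_i$ as $J$-extra vertices, with the attachment edges chosen so that the $a^e_{i,j}$--$r^e_i$--$b^{\sigma^{-1}(e)}_{i,j}$--$c^e_j$ quadrilateral plays the role of the four special vertices of $H$. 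The decisive observation in the forward direction is a vertex count: the non-$K$ component $Z$ has exactly $|V(R_a)|+|V(R_b)|+|V(R_c)|+|V(R_r)|=h-1$ vertices (because $a'$ and $r$ collapse onto the same $r^e_i$), so $H$ cannot be a minor of $Z$ simply by cardinality. This quantitative argument is what the three-cut-vertices hypothesis buys, and it has no analogue in the hub-based construction you propose.

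So the gap is: your choice of gadget (hub $q^e$ plus pendant copies of $H^-$) is unsound for this lemma; it fails already for $H=P_5$, and the hand-wave ``choose $H^-$ so that (i) and (ii) both hold'' cannot be filled in because no such $H^-$ exists in that template. You need the four-way decomposition and the $h-1$ vertex count of the paper, not a variation of the Lemma~\ref{cycletwocut} gadget.
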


\begin{proof}
   Let $(F,\ell)$ be the $\{H\}$-M-framework (resp. $\{H\}$-TM-framework) of an input $(G,k)$ of {\sc $k\times k$ Permutation Independent Set}, where $t_\Fcal$ will be precised later.
  By Lemma~\ref{cycletwocut} and the fact that $H$ has at least three cut vertices, we can assume that $H$ contains
  at least three cut vertices that do not belong to the same block. Therefore, we can find three cut vertices $a$, $c$, $b$ and four blocks $B_a$, $B_{a,c}$, $B_{c,b}$, $B_b$ such that $B_a$ is a leaf of the block-cut tree of $H$ and $B_a$, $a$, $B_{a,c}$, $c$, $B_{c,b}$, $b$, $B_b$ is a path in this block-cut tree.
  Let $a'$ be a vertex of $V(B_a) \sm \{a\}$ and  $r$ be a vertex of $V(B_b) \sm \{b\}$.
  We define $R_a$ to be the connected component of $H \gm \{a',c\}$ that contains $a$,
  $R_b$ to be the connected component of $H \gm \{c,r\}$ that contains $b$,
  $R_c$ to be the connected component of $H \gm (R_a \cup R_b)$ that contains $c$, and
  $R_r$ to be the connected component of $H \gm R_b$ that contains $r$.
  Note that $\{a'\}$, $V(R_a)$, $V(R_c)$, $V(R_b)$, and $V(R_r)$ form a partition of $V(H)$.
  This decomposition of $H$ is depicted in Figure~\ref{decomph}.

  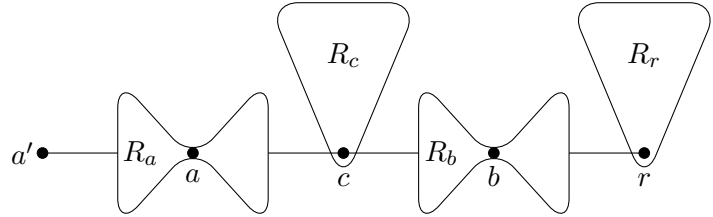
\begin{figure}[htb]
    \centering
    \scalebox{1}{
      \begin{tikzpicture}
        \draw (-1,1) -- (1,1);
        \draw (3,1) -- (4,1);
    \draw (-4,1) -- (-3, 1);
    \draw[rounded corners=4mm] (0,0.6) -- (1,3) -- (-1,3) -- cycle;
    \draw[rounded corners=4mm] (4,0.6) -- (5,3) -- (3,3) -- cycle;
    \vertex{0,1};
    \vertex{4,1};

    \node at (1.3,1) {{$R_b$}};
    \node at (-2.7,1) {{$R_a$}};
    \node at (0,2.3) {{$R_c$}};
    \node at (4,2.3) {{$R_r$}};

    \node at (0,0.65) {{$c$}};
    \node at (4,0.65) {{$r$}};

    \vertex{-2,1};
    \vertex{2,1};
    \node at (-2,0.7) {{$a$}};
    \node at (2,0.7) {{$b$}};

    \vertex{-4,1};
    \node at (-4.2,1) {{$a'~$}};

    \draw[rounded corners=4mm] (1,2) -- (1,0) -- (2,1.1) -- (3,0) -- (3,2) -- (2,0.9) -- cycle;
    \draw[rounded corners=4mm] (-3,2) -- (-3,0) -- (-2,1.1) -- (-1,0) -- (-1,2) -- (-2,0.9) -- cycle;

  \end{tikzpicture}

}
\caption{The decomposition of the graph $H$ where $a$, $c$, and $b$ are three cut vertices.}
\label{decomph}
\end{figure}

We are now ready to describe the graph $F_{\{H\}}$.
Starting from $F$, for each $e \in E(G)$ and each $i \in \intv{1,k}$, we add a copy of $R_c$ where we identify $c$ and $c^e_i$, and a copy of $R_r$ where we identify $r$ and $r^e_i$.
  Moreover, for each $e \in E(G)$ and each $i,j \in \intv{1,k}$ we add a copy of $R_a$ where we identify $a$ and $a^e_{i,j}$ and we connect the vertices $N_H(a') \cap R_a$ to $r^e_i$
  and the vertices $N_H(c) \cap R_a$ to $c^e_j$.
  We also add a copy of $R_b$ where we identify $b$ and $b^e_{i,j}$
  and we connect the vertices $N_H(c) \cap R_b$ to $c^{\sigma(e)}_j$
  and the vertices $N_H(r) \cap R_b$ to $r^{\sigma(e)}_i$.
  Note that the vertices of the copies of $R_a$ and $R_b$ are $B$-extra vertices, and the vertices of the copies of $R_c$ and $R_r$ are $J$-extra vertices.
  In particular, we have $t_\Fcal = |V(R_a) \cup V(R_b)| -2$ and, for each $e \in E(G)$,  $|V(J^e)| = |V(R_c) \cup V(R_r)|\cdot k$.
    This completes the definition of $F_{\{H\}}$.

  Let $P$ be a solution of {\sc $k\times k$ Permutation Independent Set} on $(G,k)$.
  Then every connected component of $F_{\{H\}} \gm S_P$ is either a copy of the graph $K$, which is of size $h-1$, or
  the graph $Z$ depicted in  Figure~\ref{reallocatedbis}.
  As $Z$ contains $h-1$ vertices (both vertices $a'$ and $r$ of $H$ are mapped to $r^e_{i}$), we obtain that $H$ is not a minor of $F_{\{H\}} \gm S_P$.
  Thus  $S_P$ is a solution of  \textsc{$\{H\}$-M-Deletion} (resp. \textsc{$\{H\}$-TM-Deletion}) of size $\ell$.


  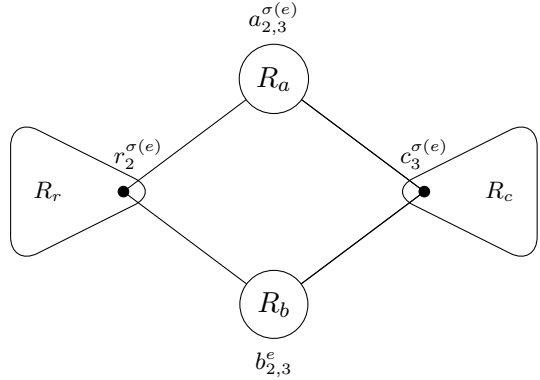
\begin{figure}[htb]
    \centering
    \scalebox{1}{\begin{tikzpicture}[scale=1]
        \node at (0.2,0.5) {\footnotesize{$r^{\sigma(e)}_{2}$}};
        \node at (2,2.3) {\footnotesize{$a^{\sigma(e)}_{2,3}$}};
        \node at (2,-2.3) {\footnotesize{$b^{e}_{2,3}$}};
        \node at (4,0.5) {\footnotesize{$c^{\sigma(e)}_{3}$}};

        \vertex{2,1.5};
        \vertex{2,-1.5};
        \vertex{4,0};
        \draw (0,0) -- (2, 1.5) -- (4,0) -- (2, -1.5) -- (0,0);

        \vertex{0,0};
        \draw[rounded corners=4mm] (0.5,0) -- (-1.5,1) -- (-1.5,-1) -- cycle;
        \node at (-1,0){\footnotesize{$R_r$}};
        \draw[rounded corners=4mm] (3.5,0) -- (5.5,1) -- (5.5,-1) -- cycle;
        \node at (5,0){\footnotesize{$R_c$}};
        \draw (2,1.5) -- (4,0)  node[pos=0, circle, draw, fill=white]{{$R_a$}};
        \draw (2,-1.5) -- (4,0)  node[pos=0, circle, draw, fill=white]{{$R_b$}};

      \end{tikzpicture}}
    \caption{A connected component $Z$ of $F_{\{H\}} \gm S$ that contains $b_{2,3}^e$ with  $(2,3) \in T^e$.}
    \label{reallocatedbis}
  \end{figure}

  Assume now that $S$ is a solution of \textsc{$\{H\}$-M-Deletion} (resp. \textsc{$\{H\}$-TM-Deletion}) on $F_{\{H\}}$ of size $\ell$.
  Let $e \in E(G)$ and let $i,j \in \intv{1,k}$ such that $b^e_{i,j} \not \in S$.
  Let $i' \in \intv{1,k}$ such that $i \not = i'$.
  If $a^{\sigma(e)}_{i',j} \not \in S$, then, since by Lemma~\ref{reimitation} it holds that $S \cap V(J^{\sigma(e)}) = \es$, we have that
  the vertex $r^{\sigma(e)}_{i'}$, the copy of $R_a$ attached to $a^{\sigma(e)}_{i',j}$, the copy of $R_c$ attached to $c^{\sigma(e)}_{j}$, the copy of $R_b$ attached to $ b^e_{i,j}$, and the copy of $R_r$ attached to $r^{\sigma(e)}_{i}$ induce the graph $H$.
  This implies that if  $a^{\sigma(e)}_{i',j} \not \in S$, $H$ is a subgraph of $F_{\{H\}}\gm S$.
  As this is not possible by definition of $S$, we have that  $a^{\sigma(e)}_{i',j} \in S$.
  Thus Property~\ref{new-wealthierm} (resp. Property~\ref{new-wealthiertm}) holds and the lemma follows from Lemma~\ref{lemmasol}.
\end{proof}


In the next lemma, we consider the case where $H$ is a particular type of tree that covers the case where  $H=K_{1,4}$
for the  \textsc{$\{H\}$-M-Deletion} problem.

\begin{lemma}
  \label{twocut}
  Let $H$ be a tree with at most two cut vertices and at least four vertices of degree one.
  \textsc{$\{H\}$-M-Deletion} cannot be solved in time $2^{\smallo(\tw \log \tw)}\cdot n^{\Ocal(1)}$  unless the \ETH fails.
\end{lemma}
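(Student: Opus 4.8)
The plan is to run the machinery of Section~\ref{supexpgencons}: starting from an input $(G,k)$ of {\sc $k\times k$ Permutation Independent Set} and its $\{H\}$-M-framework $(F,\ell)$, I would build an enhanced $\{H\}$-M-framework $F_{\{H\}}$, verify that Property~\ref{new-wealthierm} holds for it, and then invoke Lemma~\ref{reimitation} and Lemma~\ref{lemmasol} to conclude that solutions of {\sc $k\times k$ Permutation Independent Set} on $(G,k)$ correspond to solutions of \textsc{$\{H\}$-M-Deletion} on $(F_{\{H\}},\ell)$; since $\tw(F_{\{H\}}) = \Ocal(k)$ and the reduction is polynomial, Theorem~\ref{borrowing} yields the claimed bound. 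As a first step I would record the shape of $H$: a tree with at most two cut vertices and at least four leaves is either a star $K_{1,t}$ with $t\ge 4$ (the center being the unique cut vertex) or a \emph{double star} $S_{p,q}$ --- two adjacent vertices $u,v$ carrying $p\ge 1$ and $q\ge 1$ pendant leaves respectively, with $p+q\ge 4$ (two cut vertices, necessarily adjacent, since a tree with two non-adjacent cut vertices has a third one between them). By Theorem~\ref{hardgene} we may assume every block of $H$ has at most four edges, so $h=|V(H)|$.

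To build $F_{\{H\}}$ I would, starting from $F$, attach to every row-vertex $r^e_i$ a star with $x=x(H)$ pendant leaves (these are $J$-extra vertices, lying in $V(J^e)$), and attach a single pendant leaf to each of $a^e_{i,j}$ and $b^e_{i,j}$ (these are $B$-extra vertices, so $t_\Fcal=2$). The number $x$ is chosen by a case analysis on the shape of $H$: for $K_{1,t}$ one takes $x=\lceil (t-2)/2\rceil$, and for a double star a smaller value depending on $p$ and $q$ (for instance $x=1$ when $p=q=2$ and $x=p-2$ for $S_{p,1}$ with $p\ge 3$), namely the largest value that keeps the ``good'' components $H$-minor-free. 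The leaves hung on the $a$- and $b$-vertices are needed because they provide, in the ``bad'' configuration, two extra pendant neighbours that do not accumulate at a single vertex in the ``good'' one.

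For the forward direction, given a solution $P$ of {\sc $k\times k$ Permutation Independent Set} I set $S_P$ as in Section~\ref{supexpgencons}; by Lemma~\ref{reimitation} every component of $F_{\{H\}}\setminus S_P$ is either a copy of $K=K_{h-1}$ (too few vertices to contain $H$) or a $4$-cycle on $\{c^e_j,b^{\sigma^{-1}(e)}_{i,j},r^e_i,a^e_{i,j}\}$ (with $i,j$ the row/column paired by $P$) carrying $x$ pendant leaves on $r^e_i$ and one on each of $b^{\sigma^{-1}(e)}_{i,j}$ and $a^e_{i,j}$; checking that this ``$C_4$-with-leaves'' graph has no $H$-minor for the chosen $x$ (for $K_{1,t}$ this is just $x+2<t$; for double stars one must additionally argue that the cycle cannot play the role of the two centers $u,v$) shows $S_P$ is a solution of size $\ell$. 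For the reverse direction, if $S$ is a solution of size $\ell$, Lemma~\ref{reimitation} gives $S\cap V(J^e)=\es$ for all $e$ and that $S$ omits exactly one $B^e_{i,j}$ per column; assuming $b^e_{i,j}\notin S$ and, for contradiction, $a^{\sigma(e)}_{i',j}\notin S$ with $i'\ne i$, the path $r^{\sigma(e)}_i - b^e_{i,j} - c^{\sigma(e)}_j - a^{\sigma(e)}_{i',j} - r^{\sigma(e)}_{i'}$ survives together with its attached leaves, and contracting (an initial segment of) this path into one branch set --- the whole path as the center in the star case, a $2$-split of it as $u$ and $v$ in the double-star case --- the $x$ leaves on each of $r^{\sigma(e)}_i, r^{\sigma(e)}_{i'}$ plus the leaves on $b^e_{i,j}$ and $a^{\sigma(e)}_{i',j}$ furnish the required $\ge 4$ (resp.\ $p$ and $q$) pendant neighbours, so $H\prem F_{\{H\}}\setminus S$, a contradiction; thus Property~\ref{new-wealthierm} holds and Lemma~\ref{lemmasol} finishes the proof.

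I expect the main obstacle to be exactly the $H$-minor-freeness of the good-configuration components. Because the surviving framework keeps a $4$-cycle through each $r^e_i$, that cycle offers ``free'' extra branching, and for double stars with a short leg --- most sharply $S_{p,1}$ --- a careless choice of $x$ already lets a $C_4$ plus a few pendants contain $H$ as a minor; getting $x$ right forces the split into cases on the shape of $H$, and for double stars on $p$ and $q$. This is precisely the phenomenon responsible for the gap between \textsc{$\{K_{1,4}\}$-M-Deletion} (hard) and \textsc{$\{K_{1,4}\}$-TM-Deletion} (single-exponential).
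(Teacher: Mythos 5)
Your high-level plan is the paper's (enhanced framework, Property~\ref{new-wealthierm}, Lemma~\ref{reimitation}, Lemma~\ref{lemmasol}, Theorem~\ref{borrowing}), and you correctly observe that $H$ must be a star or a double star. But your concrete enhancement is not the paper's, and it does not work. The paper attaches \emph{no} pendants to $r^e_i$; it attaches $s_x-1$ pendant vertices to $a^e_{i,j}$ and $s_y-1$ to $b^e_{i,j}$, where $s_x,s_y$ are the numbers of leaves at the two cut vertices of $H$ (for a star $K_{1,p}$ one sets $s_x=p-2$, $s_y=2$), so $t_\Fcal=p-2$. You instead freeze $t_\Fcal=2$ and hang a variable number $x$ of leaves on $r^e_i$. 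This already wrecks the forward direction for $H=K_{1,4}$ with your $x=\lceil(4-2)/2\rceil=1$: the good component is the $4$-cycle $r^e_i-a^e_{i,j}-c^e_j-b^{\sigma^{-1}(e)}_{i,j}-r^e_i$ with one pendant on each of $r,a,b$, and contracting $\{r^e_i,a^e_{i,j}\}$ yields a vertex adjacent to the pendant of $r$, the pendant of $a$, $c^e_j$, and $b^{\sigma^{-1}(e)}_{i,j}$, so $K_{1,4}$ is a minor of the good component and $S_P$ is not a solution. (The same component contains $S_{2,2}$ as a subgraph, with $r^e_i$ and $a^e_{i,j}$ as the two centers.) Nor is this a miscalculated constant: keeping $t_\Fcal=2$, for $H=K_{1,5}$ any $x\ge 2$ again puts $K_{1,5}$ in the good component (the contracted $\{r^e_i,a^e_{i,j}\}$ has $3+x\ge 5$ neighbours), while for $x\le 1$ the path $r^{\sigma(e)}_{i'}-a^{\sigma(e)}_{i',j}-c^{\sigma(e)}_j-b^e_{i,j}-r^{\sigma(e)}_i$ together with its pendants has maximum star minor $K_{1,4}$, so the subgraph you exhibit does not establish Property~\ref{new-wealthierm}.

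The idea your construction loses is that in the good configuration the two surviving $B$-vertices around $c^e_j$ come from the \emph{same} row and hence share the single vertex $r^e_i$ (closing a cycle), whereas in the bad configuration they come from \emph{different} rows and meet two distinct $r$-vertices (a path). A contracted centre therefore touches one $r$ in the good case and two in the bad case; with all variable pendants placed on $a^e_{i,j}$ and $b^{\sigma^{-1}(e)}_{i,j}$, this $\pm 1$ is exactly the margin between $p-1$ and $p$ available neighbours. Pendants on $r^e_i$ are counterproductive because $r^e_i$ lives in both configurations and helps the good one as much as the bad one. That said, your closing worry about ``double stars with a short leg'' is well placed and in fact seems to bite the paper's own proof: for $H$ with $s_x=1$ and $s_y\ge 3$, the paper's good component (a $C_4$ with $s_y-1$ pendants on $b^{\sigma^{-1}(e)}_{i,j}$ and none on $a^e_{i,j}$) already contains $H$ as a subgraph, by mapping the degree-$(s_y{+}1)$ centre to $b^{\sigma^{-1}(e)}_{i,j}$, the other centre to $c^e_j$, and the lone short-arm leaf to $a^e_{i,j}$; so that sub-case appears to require an additional idea beyond what either argument spells out. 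This observation does not, however, rescue your construction, which fails even on stars.
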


\begin{proof}
  Let $(F,\ell)$ be the $\{H\}$-M-framework of an input $(G,k)$ of {\sc $k\times k$ Permutation Independent Set}, where $t_\Fcal$  will be specified later.
  If $H$ has two cut vertices $x$ and $y$, then we set $s_x$ (resp. $s_y$) to be the number of vertices pendent to $x$ (resp. $y$).
  If $H$ has only one cut vertex, we set $s_x = p-2$ and $s_y = 2$, where $p$ is the number of vertices of degree one.

  We are now ready to describe the graph $F_{\{H\}}$.
  Starting from $F$,  for each $e\in E(G)$ and each $i,j \in \intv{1,k}$, we add $s_x-1$ (resp. $s_y-1$) pendent vertices to  $a^e_{i,j}$ (resp. $b^e_{i,j}$).
  Note that the pendent vertices are $B$-extra vertices. 
  In particular we have $t_\Fcal = p-2$ and, for each $e \in E(G)$,  $|V(J^e)| = 2k$, i.e., there are no $J$-extra vertices.
    This completes the definition of $F_{\{H\}}$.

  Let $P$ be a solution of {\sc $k\times k$ Permutation Independent Set} on $(G,k)$.
  Then every connected component of $F_{\{H\}} \gm S_T$ is either a copy of the graph $K$, which is of size $h-1$, or
  the subgraph induced by  $a^e_{i,j}$, $r^e_i$, $b^{\sigma^{-1}(e)}_{i,j}$, $c_j^e$, and the vertices that are pendent to $a^e_{i,j}$ and $b^{\sigma^{-1}(e)}_{i,j}$,  for every $(i,j) \in T$. It can be easily verified that, as by hypothesis,  $s_x+s_y\geq 4$,  this latter subgraph, depicted in Figure~\ref{reoffending}, does not contain $H$ as a  minor. Thus  $F_{\{H\}} \gm S_T$ does not contain $H$ as a  minor and $S_P$ is a solution of  \textsc{$\{H\}$-M-Deletion} of size $\ell$.

  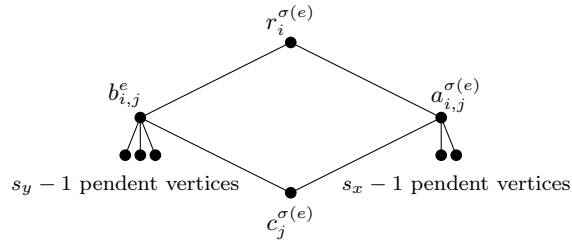
\begin{figure}[htb]
    \centering
    \scalebox{1}{\begin{tikzpicture}[scale=1]
        \node at (0,-0.4) {\footnotesize{$c^{\sigma(e)}_{j}$}};
        \node at (0,2.3) {\footnotesize{$r^{\sigma(e)}_{i}$}};
        \node at (-2.2,1.3) {\footnotesize{$b^e_{i,j}$}};
        \node at (2.2,1.3) {\footnotesize{$a^{\sigma(e)}_{i,j}$}};

        \vertex{0,0};
        \vertex{0,2};
        \vertex{-2,1};
        \vertex{2,1};
        \vertex{2,0.5};
        \draw (2,1) -- (2,0.5);
        \draw (0,0) -- (-2,1) -- (0,2);
        \draw (0,0) -- (2,1) -- (0,2);

        \vertex{2.2,0.5};
        \vertex{-2.2,0.5};
        \vertex{-2,0.5};
        \vertex{-1.8,0.5};
        \draw (-2,1) -- (-2,0.5);
        \draw (-2,1) -- (-1.8,0.5);
        \draw (-2,1) -- (-2.2,0.5);
        \draw (2,1) -- (2.2,0.5);

        \node at (-2.2,0.1) {\scriptsize{$s_y-1$} pendent vertices};
        \node at (2.2,0.1) {\scriptsize{$s_x-1$} pendent vertices};

      \end{tikzpicture}}
    \caption{A connected component of $F_{\{H\}} \gm S$ that is not a copy of $K$, with $s_x = 3$ and $s_y = 4$.}

    \label{reoffending}
  \end{figure}

  Assume now that $S$ is a solution of  \textsc{$\{H\}$-M-Deletion} on $F_{\{H\}}$ of size $\ell$.
  Let $e \in E(G)$ and let $i,j \in \intv{1,k}$ such that $b^e_{i,j} \not \in S$.
  Let $i' \in \intv{1,k}$ such that $i \not = i'$.
  If $a^{\sigma(e)}_{i,'j} \not \in S$, then, as by Lemma~\ref{reimitation} it holds that $S \cap V(J^{\sigma(e)}) = \es$, we have that the path $r^{\sigma(e)}_{i'},a^{\sigma(e)}_{i',j},c^{\sigma(e)}_{j},b^e_{i,j},r^e_{i}$ combined with the $s_x-1$ vertices pendent to $a^{\sigma(e)}_{i',j}$ and the $s_y-1$ vertices pendent to $b^e_{i,j}$ induce a graph $Z$ that contains $H$ as a minor.
  As, by definition of $S$, $F_{\{H\}} \gm S$ does not contain $H$ as a minor, we have that  $a^{\sigma(e)}_{i',j} \in S$.
  Thus Property~\ref{new-wealthierm} holds and the lemma follows from Lemma~\ref{lemmasol}.
\end{proof}

Observe that in the end of the above proof, if $H$ contains two cut vertices, then $Z$ also contains $H$ as a topological minor, but this is not true if $H$ is a star; this is consistent with the single-exponential algorithms given in~\cite{monster2}. Therefore, we  obtain the following lemma for  topological minors.
\begin{lemma}
  \label{twocuttm}
  Let $H$ be a tree with exactly two cut vertices and at least four vertices of degree one.
  \textsc{$\{H\}$-TM-Deletion} cannot be solved in time $2^{\smallo(\tw \log \tw)}\cdot n^{\Ocal(1)}$  unless the \ETH fails.
\end{lemma}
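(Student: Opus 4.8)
The plan is to follow the construction and the argument of Lemma~\ref{twocut} almost verbatim, but carried out over the $\{H\}$-TM-framework (so with $n_h={h \choose 2}$) instead of the $\{H\}$-M-framework, and to strengthen the two places where that proof speaks of a minor so that they speak of a topological minor. First I would record what $H$ looks like: in a tree a vertex is a cut vertex exactly when it has degree at least two, so ``$H$ has exactly two cut vertices'' means that $H$ has exactly two vertices $x,y$ of degree at least two and every other vertex is a leaf; since $H$ is connected and acyclic, $x$ and $y$ must be adjacent (an internal vertex of the $x$--$y$ path would otherwise be a third cut vertex). Thus $H$ is a double star: the edge $\{x,y\}$ together with $s_x\geq 1$ leaves pendent to $x$ and $s_y\geq 1$ leaves pendent to $y$, where $s_x+s_y\geq 4$ by hypothesis and $h=|V(H)|=s_x+s_y+2$. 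Starting from the framework graph $F$ of an instance $(G,k)$ of {\sc $k\times k$ Permutation Independent Set}, I build $F_{\{H\}}$ by attaching, for every $e\in E(G)$ and every $i,j\in\intv{1,k}$, exactly $s_x-1$ new pendent vertices to $a^e_{i,j}$ and exactly $s_y-1$ new pendent vertices to $b^e_{i,j}$; these are $B$-extra vertices, so $t_\Fcal=s_x+s_y-2$, no $J$-extra vertex is used, and the generic pathwidth bound of Section~\ref{supexpgencons} gives $\pw(F_{\{H\}})=\Ocal(k)$.

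For the forward direction, given a solution $P$ of {\sc $k\times k$ Permutation Independent Set} I take $S_P$, which has size $\ell$. As in the proof of Lemma~\ref{twocut}, every connected component of $F_{\{H\}}\gm S_P$ is either an isolated copy of $K=K_{h-1}$, which has fewer than $h$ vertices and hence admits no topological-minor model of $H$, or a copy of the graph $Z$ of Figure~\ref{reoffending}: a $4$-cycle on $r^e_i,a^e_{i,j},c^e_j,b^{\sigma^{-1}(e)}_{i,j}$ with $s_x-1$ leaves attached to the $a$-vertex and $s_y-1$ attached to the $b$-vertex, so that $|V(Z)|=s_x+s_y+2=h$. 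Since $|V(Z)|=|V(H)|$, any topological-minor model of $H$ in $Z$ would have no subdivision vertex and hence be an isomorphism onto a subgraph of $Z$; but the only two adjacent vertices of $Z$ that both have a pendent leaf -- which is what the two centres $x,y$ of $H$ require -- would have to be $a^e_{i,j}$ and $b^{\sigma^{-1}(e)}_{i,j}$, and these are opposite corners of the $4$-cycle, hence non-adjacent. Therefore $H\not\preceq_{\sf tm}Z$, so $S_P$ is a solution of \textsc{$\{H\}$-TM-Deletion} of size $\ell$. (Alternatively one may simply invoke that, by the proof of Lemma~\ref{twocut}, $Z$ has no $H$-minor, which a fortiori forbids an $H$-topological minor.)

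For the reverse direction, let $S$ be a solution of \textsc{$\{H\}$-TM-Deletion} on $F_{\{H\}}$ of size $\ell$; it is in particular a solution on the underlying $\{H\}$-TM-framework $(F,\ell)$, so Lemma~\ref{reimitation} applies and yields $S\cap V(J^{\sigma(e)})=\es$ for every $e\in E(G)$. Fix $e\in E(G)$ and $i,j\in\intv{1,k}$ with $b^e_{i,j}\notin S$, and let $i'\in\intv{1,k}\sm\{i\}$; I must show $a^{\sigma(e)}_{i',j}\in S$. If it were not, then $r^{\sigma(e)}_{i'},a^{\sigma(e)}_{i',j},c^{\sigma(e)}_j,b^e_{i,j},r^{\sigma(e)}_i$, together with the $s_x-1$ pendent vertices of $a^{\sigma(e)}_{i',j}$ and the $s_y-1$ pendent vertices of $b^e_{i,j}$, would induce a subgraph of $F_{\{H\}}\gm S$ containing $H$ as a topological minor: send $x$ to $a^{\sigma(e)}_{i',j}$ and $y$ to $b^e_{i,j}$, realise the edge $\{x,y\}$ by the path $a^{\sigma(e)}_{i',j},c^{\sigma(e)}_j,b^e_{i,j}$ (with the single subdivision vertex $c^{\sigma(e)}_j$), realise the $s_x$ leaves of $x$ by its $s_x-1$ pendent neighbours together with the length-one path to $r^{\sigma(e)}_{i'}$, and symmetrically realise the $s_y$ leaves of $y$ by its $s_y-1$ pendent neighbours together with the edge to $r^{\sigma(e)}_i$; all these paths are pairwise internally disjoint. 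This contradicts the choice of $S$, so Property~\ref{new-wealthiertm} holds, and Lemma~\ref{lemmasol} combined with Theorem~\ref{borrowing} gives the stated conditional lower bound.

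The construction, the forward direction, and the bookkeeping are thus essentially those of Lemma~\ref{twocut}; the only genuinely new ingredient is upgrading, in the reverse direction, the statement ``the gadget contains $H$ as a minor'' to ``it contains $H$ as a topological minor'', and I expect this to be the one point that needs care. It is exactly here that the hypothesis ``$H$ has exactly two cut vertices'' is used: the path $a^{\sigma(e)}_{i',j},c^{\sigma(e)}_j,b^e_{i,j}$ provides an honest subdivided edge between the two high-degree centres of $H$, which would be impossible to produce if $H=K_{1,s}$ had a single high-degree vertex -- consistent with the fact that $K_{1,s}$ is excluded from the statement and that \textsc{$\{K_{1,s}\}$-TM-Deletion} admits a single-exponential algorithm by~\cite{monster2}.
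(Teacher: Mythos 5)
Your overall strategy matches the paper's: the paper's justification for Lemma~\ref{twocuttm} is only the remark following Lemma~\ref{twocut}, namely that the construction and proof of Lemma~\ref{twocut} carry over verbatim, the one new point being that the graph obtained in the reverse direction contains $H$ as a \emph{topological} minor (not merely a minor) because the path $a^{\sigma(e)}_{i',j},c^{\sigma(e)}_j,b^e_{i,j}$ realizes the edge between the two centres $x,y$ as a subdivided edge. You make exactly that observation explicit, your reverse-direction topological-minor model is correct, and identifying $H$ as a double star with $s_x,s_y\geq1$ and $s_x+s_y\geq4$ is right.

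However, your new explicit forward-direction argument --- that $x$ and $y$ must both be sent to vertices of $Z$ carrying a pendent leaf, and that the only two such vertices $a^e_{i,j}$ and $b^{\sigma^{-1}(e)}_{i,j}$ are non-adjacent --- is only valid when \emph{both} $s_x\geq 2$ and $s_y\geq 2$. The hypothesis of Lemma~\ref{twocuttm} allows $s_x=1$ (so $x$ has degree $2$ in $H$; for instance $H=K_{1,s_y+1}$ with one edge subdivided, $s_y\geq 3$). In that case $a^e_{i,j}$ carries no pendent, $x$ is under no obligation to map to a vertex with a pendent leaf, and the spanning tree of $Z$ obtained by deleting the single edge $\{a^e_{i,j},r^e_i\}$ of the $C_4$ is in fact isomorphic to $H$: send $y\mapsto b^{\sigma^{-1}(e)}_{i,j}$, $x\mapsto c^e_j$, the unique leaf of $x$ to $a^e_{i,j}$, and the $s_y$ leaves of $y$ to $r^e_i$ together with the $s_y-1$ pendents of $b^{\sigma^{-1}(e)}_{i,j}$. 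Hence $Z$ does contain $H$ as a subgraph, a fortiori as a topological minor, and the forward direction of your proof breaks for this $H$. Your parenthetical alternative, citing the minor-freeness claim from Lemma~\ref{twocut}, inherits exactly the same difficulty, since that claim is asserted there without proof and is likewise false when $\min(s_x,s_y)=1$; the paper's Figure~\ref{reoffending} is drawn with $s_x=3$ and $s_y=4$, which is precisely the regime in which the claim is easy. So your write-up needs an additional case (or a modified gadget --- e.g.\ redistributing the pendent leaves between $a^e_{i,j}$ and $b^{\sigma^{-1}(e)}_{i,j}$ so that each gets at least one, and checking the reverse direction still produces $H$) to cover $\min(s_x,s_y)=1$.
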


\begin{lemma}
  \label{cyclestartwocut}
  Let $H$ be a connected graph that contains exactly two cut vertices and each cut vertex is part of a cycle.
  Neither \textsc{$\{H\}$-M-Deletion} nor \textsc{$\{H\}$-TM-Deletion} can be solved in time $2^{\smallo(\tw \log \tw)}\cdot n^{\Ocal(1)}$  unless the \ETH fails.
\end{lemma}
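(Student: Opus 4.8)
The plan is to reduce from {\sc $k\times k$ Permutation Independent Set} using the general construction of Section~\ref{supexpgencons}, along the lines of the proof of Lemma~\ref{threecut}. First I would pin down the structure of $H$. By Theorem~\ref{hardgene} we may assume that every block of $H$ has at most four edges, so every cycle of $H$ is a $C_3$ or a $C_4$ that is itself a block; and since two distinct blocks share at most one vertex, at most one cycle of $H$ contains both cut vertices, so by Lemma~\ref{cycletwocut} we may assume there is none. Hence the two cut vertices $x$ and $y$ lie in a common block, which must be the edge $\{x,y\}$, and each of $x$, $y$ lies in a leaf cycle block of $\bct(H)$, say $C_x$ and $C_y$, which are vertex-disjoint; in particular $h := |V(H)| \geq 6$.

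Next, mirroring Lemma~\ref{threecut}, I would fix $x'\in V(C_x)\sm\{x\}$ and $y'\in V(C_y)\sm\{y\}$, and let $R_x$ (resp. $R_y$) be the connected component of $H\gm\{x',y\}$ (resp. $H\gm\{x,y'\}$) containing $x$ (resp. $y$); then $\{x'\}$, $V(R_x)$, $V(R_y)$, $\{y'\}$ partition $V(H)$, and $N_H(x')$ (resp. $N_H(y')$) consists of exactly two vertices of $R_x$ (resp. $R_y$). I would start from the $\{H\}$-M-framework (resp. $\{H\}$-TM-framework) with $t_\Fcal = |V(R_x)| + |V(R_y)| - 2 = h - 4 \geq 2$ and no $J$-extra vertices, and build $F_{\{H\}}$ by adding, for every $e \in E(G)$ and every $i,j \in \intv{1,k}$, a copy of $R_x$ in which $x$ is identified with $a^e_{i,j}$ and the two vertices of $N_H(x')$ are made adjacent to $r^e_i$ (so that $r^e_i$ is available to play the role of $x'$), and a copy of $R_y$ in which $y$ is identified with $b^e_{i,j}$ and the two vertices of $N_H(y')$ are made adjacent to $r^{\sigma(e)}_i$ (so that $r^{\sigma(e)}_i$ is available to play $y'$). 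All new vertices are $B$-extra vertices, and $\pw(F_{\{H\}}) = \Ocal(k)$ is inherited from the framework.

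For the forward implication, given a solution $P$ of {\sc $k\times k$ Permutation Independent Set}, the connected components of $F_{\{H\}}\gm S_P$ are copies of $K = K_{h-1}$ (which cannot contain $H$, having $h-1<h$ vertices) and, for each surviving block $B^e_{i,j}$, a graph $Z$ consisting of a $4$-cycle on the corresponding vertices $r$, $a^e_{i,j}$, $c$, $b$ (with $a^e_{i,j}$ and $b$ non-adjacent) together with a copy of $R_x$ attached at $a^e_{i,j}$ and a copy of $R_y$ attached at $b$, both copies having their ``$x'$-'' and ``$y'$-neighbours'' joined to the \emph{same} vertex $r$. One checks that $|V(Z)| = h$, so $H \prem Z$ (resp. $H \pretp Z$) would force $H$ to embed as a spanning subgraph of $Z$; and then a structural analysis shows this is impossible, because realizing the cycle $C_x$ inside $Z$ forces $r$ to play the missing vertex $x'$, realizing $C_y$ forces $r$ to play $y'$, and $r$ is a single vertex. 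For the reverse implication, given a solution $S$ of size $\ell$, Lemma~\ref{lemmasol} (via Lemma~\ref{reimitation}) reduces everything to verifying Property~\ref{new-wealthierm} (resp. Property~\ref{new-wealthiertm}): if $b^e_{i,j}\notin S$ and $a^{\sigma(e)}_{i',j}\notin S$ with $i\neq i'$, then, using $V(J^{\sigma(e)})\cap S=\es$, the copy of $R_x$ at $a^{\sigma(e)}_{i',j}$ together with $r^{\sigma(e)}_{i'}$ (as $x'$) realizes the $x$-side of $H$, the copy of $R_y$ at $b^{e}_{i,j}$ together with $r^{\sigma(e)}_{i}$ (as $y'$) realizes the $y$-side, and the path $a^{\sigma(e)}_{i',j} - c^{\sigma(e)}_{j} - b^{e}_{i,j}$ realizes the bridge $\{x,y\}$; these pieces are vertex-disjoint precisely because $i\neq i'$ makes $r^{\sigma(e)}_{i'}$ and $r^{\sigma(e)}_{i}$ distinct (in contrast with the forward direction), so $H$ is a (topological) minor of $F_{\{H\}}\gm S$, a contradiction. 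The statement then follows from Theorem~\ref{borrowing} in the usual way.

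The step I expect to be the main obstacle is the forward direction, namely proving rigorously that the component $Z$ contains no copy of $H$: one must rule out every way the $h$ vertices of $Z$ could accommodate $H$, and in particular argue (using that $C_x$ and $C_y$ are $2$-connected and, via Lemma~\ref{reifying}, that each $R$-copy is attached to the rest of $Z$ only through $\{a^e_{i,j}, r\}$, resp. $\{b, r\}$) that the two vertex-disjoint cycles $C_x$ and $C_y$ of $H$ cannot both be realized in $Z$ without re-using the unique vertex $r$ through which each of them must be routed. The secondary points to get right are that $C_x$, $C_y$ are genuinely leaf blocks and that $x'$ (resp. $y'$) has both of its cycle-neighbours inside $R_x$ (resp. $R_y$), which is what makes the reconstruction in the reverse direction close the cycles up correctly, together with the bookkeeping that $t_\Fcal$ is a constant and $|V(J^e)| = \Ocal(k)$, so that the pathwidth of $F_{\{H\}}$ stays linear in $k$.
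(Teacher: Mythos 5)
Your construction is a genuinely different reduction from the one in the paper, and it is worth contrasting the two. The paper sets $t_\Fcal=0$ and, rather than splitting $H$ along the bridge, it \emph{contracts} the bridge $\{v,v'\}$ to a single vertex $w$ and attaches a copy of the resulting graph $H^-$ at each vertex $r^e_i$, so all new vertices are $J$-extra. In the forward direction the surviving component $Z$ is then simply ``$H^-$ plus one extra $C_4$, all glued at the single vertex $r^{\sigma(e)}_i$'': every block of $Z$ is a minor of $C_4$ and $Z$ has exactly one cut vertex, which makes the argument that $H\not\prem Z$ almost immediate (since $H$ has two adjacent cut vertices, each on a cycle). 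You instead split $H$ into $R_x$ and $R_y$ after deleting $x'$ and $y'$, attach them to $a^e_{i,j}$ and $b^e_{i,j}$ as $B$-extra vertices with $t_\Fcal=h-4$, and route the neighbours of $x'$ (resp. $y'$) to $r^e_i$ (resp.\ $r^{\sigma(e)}_i$); this mirrors the decomposition of Lemma~\ref{threecut} rather than the contraction idea actually used for this lemma. Your reverse direction (verifying Property~\ref{new-wealthierm}/\ref{new-wealthiertm} via the two distinct row vertices $r^{\sigma(e)}_i$, $r^{\sigma(e)}_{i'}$ and the path through $c^{\sigma(e)}_j$) is correct.

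The price of your route is exactly where you flagged it: the forward direction. Your $|V(Z)|=h$ observation is right and does reduce the claim to a spanning-subgraph-isomorphism statement, but the heuristic you give for ruling it out --- ``realizing $C_x$ forces $r$ to play $x'$, realizing $C_y$ forces $r$ to play $y'$'' --- is not literally true. If $x$ lies on more than one cycle block, a copy of $C_x$ inside $Z$ can be realized entirely within $\{a\}\cup(R_x\setminus\{x\})$ without touching $r$ at all (for instance by borrowing the triangle coming from another leaf block at $x$); the contradiction then shows up elsewhere (some \emph{other} block at $x$ can no longer be accommodated), so the conclusion survives but the stated reason does not. A correct argument along your lines has to account for the possibility of permuting the leaf blocks of $x$ among the cycles available in $Z\cap(\{a,r\}\cup R_x)$, and likewise for $y$, which is a real case analysis rather than a two-line observation. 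The paper's contraction-based construction avoids this entirely because all cycle blocks of $Z$ collapse onto the single cut vertex and the block-cut-tree shape of $Z$ simply cannot host that of $H$. So: correct plan, correct bookkeeping ($t_\Fcal=h-4$ constant, $|V(J^e)|=2k$, pathwidth linear in $k$), correct reverse direction, but the forward direction as written has a gap and the specific reason you offer for it is wrong in general.
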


\begin{proof}
   Let $(F,\ell)$ be the $\{H\}$-M-framework (resp. $\{H\}$-TM-framework) of an input $(G,k)$ of {\sc $k\times k$ Permutation Independent Set}, where $t_\Fcal= 0$.
  Thanks to Lemma~\ref{cycletwocut}, we can assume that the block containing both cut vertices is not a cycle, hence it is an edge.
  Let $v$ and $v'$ be the two cut vertices and let $H^-$ be the graph obtained from $H$ by contracting the edge $\{v,v'\}$. We denote by $w$ the new vertex.

  We are now ready to describe the graph $F_{\{H\}}$. We set $t_{\cal F}=0$.
  Starting from $F$, for each $e \in E(G)$ and each $i \in \intv{1,k}$ we add a copy of $H^-$ where we identify $w$ and $r^e_i$.
  In particular, for each $e \in E(G)$,  $|V(J^e)| = (|V(H^-)|+1)\cdot k$.
      This completes the definition of $F_{\{H\}}$.


  Let $P$ be a solution of {\sc $k\times k$ Permutation Independent Set} on $(G,k)$.
  Then every connected component of $F_{\{H\}} \gm S_P$ is either a copy of the graph $K$, which is of size $h-1$, or
  the graph $Z$ depicted in  Figure~\ref{reallocatedbisss}.
  As $Z$ has only one cut vertex and every block of this graph is a minor of $C_4$, while $H\not \prem C_{4}$, we obtain that $H$ is not a minor of it.
  Thus $S_P$ is a solution of  \textsc{$\{H\}$-M-Deletion} (resp. \textsc{$\{H\}$-TM-Deletion}) of size $\ell$.

  \begin{figure}[htb]
    \centering
    \scalebox{1}{\begin{tikzpicture}[scale=1]

         \node at (0,4.5) {\footnotesize{$r^{\sigma(e)}_{2}$}};


        \node at (2,4.9) {\footnotesize{$a^{\sigma(e)}_{2,3}$}};
       \node at (2,3.8) {\footnotesize{$b^{e}_{2,3}$}};

        \node at (4,4.4) {\footnotesize{$c^{\sigma(e)}_{3}$}};

        \foreach \x in {4}
        {
          \vertex{2,\x+0.5};
          \vertex{2,\x-0.5};
          \vertex{4,\x};
          \draw (0,\x) -- (2, \x+0.5) -- (4,\x) -- (2, \x-0.5) -- (0,\x);

        }

        \vertex{0,4};



        \draw (-1,4) -- (0,4)  node[pos=0, circle, draw, fill=white]{\footnotesize{$B_b$}};

      \end{tikzpicture}}
    \caption{A connected component $Z$ of $F_{\{H\}} \gm S$ that is not a copy of $K$, that contains $b^e_{2,3}$ with $(2,3) \in P$, where $B_b$ means that we have attached $b\geq 2$ cycles to the vertex $r^{\sigma(e)}_{2}$.}

    \label{reallocatedbisss}
  \end{figure}
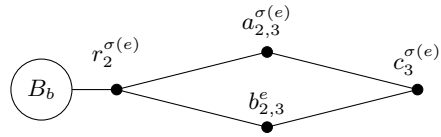

  Assume now that $S$ is a solution of \textsc{$\{H\}$-M-Deletion} (resp.  \textsc{$\{H\}$-TM-Deletion}) on $F_{\{H\}}$ of size $\ell$.
  Let $e \in E(G)$ and let $i,j \in \intv{1,k}$ such that $b^e_{i,j} \not \in S$.
  Let $i' \in \intv{1,k}$ such that $i \not = i'$.
  If $a^{\sigma(e)}_{i',j} \not \in S$, then, since by Lemma~\ref{reimitation}  it holds that $S \cap V(J^{\sigma(e)}) = \es$, we have that
    the path $r^{\sigma(e)}_{i}, b^e_{i,j}, c^{\sigma(e)}_{j}, a^{\sigma(e)}_{i',j}, r^{\sigma(e)}_{i'}$ together with the copies of $H^-$ attached to $r^{\sigma(e)}_{i}$  and $r^{\sigma(e)}_{i'}$ induce a graph that contains $H$ as a minor.
  This implies that if  $a^{\sigma(e)}_{i',j} \not \in S$, $H$ is a topological minor of $F_{\{H\}}\gm S$.
  As this is not possible by definition of $S$, we have that  $a^{\sigma(e)}_{i',j} \in S$.
  Thus Property~\ref{new-wealthierm} (resp. Property~\ref{new-wealthiertm}) holds and the lemma follows from Lemma~\ref{lemmasol}.
\end{proof}

\begin{lemma}\label{commander}
  Let $H$ be a connected graph with exactly two cut vertices such that exactly one of the two cut vertices is part of a cycle.
  Neither \textsc{$\{H\}$-M-Deletion} nor \textsc{$\{H\}$-TM-Deletion}   can be solved in time $2^{\smallo(\tw \log \tw)}\cdot n^{\Ocal(1)}$  unless the \ETH fails.
\end{lemma}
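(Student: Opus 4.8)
The plan is to follow exactly the template of the previous lemmas of this block, in particular Lemma~\ref{cycletwocut} and Lemma~\ref{cyclestartwocut}: starting from the $\{H\}$-framework $(F,\ell)$ of an instance $(G,k)$ of {\sc $k\times k$ Permutation Independent Set}, build an enhanced framework $F_{\{H\}}$, prove that solutions transfer between {\sc $k\times k$ Permutation Independent Set} and \textsc{$\{H\}$-M-Deletion} (resp.\ \textsc{$\{H\}$-TM-Deletion}), and in the reverse direction establish Property~\ref{new-wealthierm} (resp.\ Property~\ref{new-wealthiertm}) so that Lemma~\ref{lemmasol} applies. The same $F_{\{H\}}$ should serve both the minor and topological minor versions, with the ``bad-case'' certificate being a topological minor and the ``good-case'' analysis strong enough to exclude $H$ even as a minor.

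First I would pin down the shape of $H$. Since $H$ has exactly two cut vertices and, by Theorem~\ref{hardgene}, we may assume each block of $H$ has at most four edges, Lemma~\ref{cycletwocut} lets us assume that the unique block of $\bct(H)$ lying between the two cut vertices is a single edge $\{v_1,v_2\}$ (otherwise that block would be the only cycle of $H$ carrying two cut vertices and we are done). Let $v_1$ be the cut vertex on a cycle and $v_2$ the one that is not. Inspecting $\bct(H)$, every leaf block attached to $v_2$ must be a single edge, so $v_2$ has exactly $s\geq 1$ pendant neighbours and no other private structure, while $v_1$ carries, besides the edge $\{v_1,v_2\}$, some leaf blocks, at least one of which is a $C_3$ or a $C_4$. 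Let $H_1$ be the ``flower'' of $H$ around $v_1$, i.e.\ $H$ with $v_2$ and its pendant neighbours removed; then $H$ is obtained by gluing $H_1$, the edge $\{v_1,v_2\}$ and the star $K_{1,s}$ centred at $v_2$, and $h=|V(H)|=|V(H_1)|+1+s$.

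Then I would describe $F_{\{H\}}$, again adding only $J$- and $B$-extra vertices and fixing $t_\Fcal$ by the construction. The idea, as in Lemma~\ref{cycletwocut}, is to attach to each row vertex $r^e_i$ a copy of the graph obtained from $H$ by deleting one cycle edge at $v_1$ incident to $v_1$, identifying $v_1$ with $r^e_i$ and the other endpoint of the deleted edge with a single new vertex $q^e$ shared by all rows of $e$; the pendant neighbours of $v_2$ are realised as pendant extra vertices attached to the $a^e_{i,j}$/$b^e_{i,j}$ vertices (so that $t_\Fcal$ is a constant depending on $s$), and $v_2$ itself is realised along the path through these vertices. The precise placement of the pieces is chosen exactly so that, in every connected component $Z$ of $F_{\{H\}}\gm S_P$ that is not a copy of $K$ (the one depicted in the accompanying figure), each cycle is either one of the framework $C_4$'s, whose only cut vertex in $Z$ is the incident row vertex, or a cycle of size at most four inside a detached flower, and the vertices adjacent to a framework $C_4$ have no spare capacity to host the $s$ pendant neighbours of $v_2$. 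For the reverse direction the argument is, as in the earlier lemmas, essentially forced: if $b^e_{i,j}\notin S$ and $a^{\sigma(e)}_{i',j}\notin S$ for some $i'\neq i$, then, using $S\cap V(J^{\sigma(e)})=\es$ from Lemma~\ref{reimitation}, the path $r^{\sigma(e)}_{i},b^e_{i,j},c^{\sigma(e)}_j,a^{\sigma(e)}_{i',j},r^{\sigma(e)}_{i'}$ together with the gadget copies attached at $r^{\sigma(e)}_{i}$ and $r^{\sigma(e)}_{i'}$ contains $H$ as a topological minor (the deleted cycle edge being rerouted through this path and through $r^{\sigma(e)}_i$'s gadget, and the pendant neighbours of $v_2$ being supplied by the attached pendants), contradicting the choice of $S$; hence Property~\ref{new-wealthierm} (resp.\ Property~\ref{new-wealthiertm}) holds and Lemma~\ref{lemmasol} concludes.

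The hard part will be the forward direction: showing that no connected component of $F_{\{H\}}\gm S_P$ contains $H$. The difficulty specific to this case is that $H$ has a single ``cyclic region'' (the flower at $v_1$) followed by a tail at $v_2$, so one cannot simply reuse the ``no cycle with two cut vertices'' argument of Lemma~\ref{cycletwocut} nor a plain vertex count: the $C_4$'s that $F$ necessarily plants inside the good components could a priori play the role of the cycle block of $H$ while the remaining tree‑like part of $H$ is supplied by an adjacent detached flower or by the pendant extra vertices. The construction must therefore place the star of $v_2$ and the flower of $v_1$ so that, in any good component, no vertex adjacent to a framework $C_4$ has enough free neighbours to realise $v_2$ with its $s$ pendant leaves; nailing down where exactly the $B$-extra and $J$-extra vertices attach, and checking that the resulting components have the claimed block and leaf structure, is the main technical content. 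A final routine point is that these additions keep the pathwidth of $F_{\{H\}}$ linear in $k$, which follows from the generic bound in Section~\ref{supexpgencons} since every new vertex or edge is local to a single $B^e_{i,j}$ or to a single $J^e$.
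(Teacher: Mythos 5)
Your structural analysis of $H$ is correct (two cut vertices $v_1,v_2$ joined by a bridge, $v_1$ carrying all cycles, $v_2$ carrying only $s\geq 1$ pendant edges), and the overall template---build $F_{\{H\}}$, prove the forward direction, establish Property~\ref{new-wealthierm}/\ref{new-wealthiertm}, invoke Lemma~\ref{lemmasol}---is right. But the construction you sketch is different from the paper's, underspecified, and, as written, the forward direction fails. You attach to each $r^e_i$ a copy of $H$ with one cycle edge at $v_1$ removed (identifying $v_1$ with $r^e_i$, the other endpoint with $q^e$) and add pendants at the $a^e_{i,j}/b^e_{i,j}$ vertices. The gadget at $r^e_i$ then already contains the entire flower at $v_1$ (all blocks other than $C$ are intact), the bridge $\{v_1,v_2\}$, and $v_2$ with all its pendants. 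Since the framework $C_4$ through $r^e_i$ is vertex-disjoint from that gadget except at $r^e_i$, the good component $Z$ contains $H$ as a subgraph: map $v_1\mapsto r^e_i$, let the framework $C_4$ play the role of $C$, and read off everything else from the gadget. So $S_P$ would not be a solution of \textsc{$\{H\}$-M-Deletion} and the reduction breaks. You flag precisely this step (``nailing down where exactly the $B$-extra and $J$-extra vertices attach'') as the open difficulty, and indeed a hybrid of the $q^e$ and pendant gadgets does not give the needed separation.

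The paper's construction instead \emph{splits} $H$ across rows and columns. Writing $x=v_1$, $y=v_2$, define $H_x$ to be the component of $H\gm((V(C)\cup\{y\})\sm\{x\})$ containing $x$ (the flower minus $C$) and $H_y$ the component of $H\gm\{x\}$ containing $y$ (a star). Then add, for each $e$ and $i,j$, one new vertex $\overline{a}^e_{i,j}$ adjacent to $r^e_i$ and $c^e_j$ (so $t_\Fcal=1$), attach a copy of $H_x$ at every $c^e_j$ and a copy of $H_y$ at every $r^e_j$ (as $J$-extra vertices). In a good component the framework part reduces to a $K_{2,3}$ between $r^{\sigma(e)}_i$ and $c^{\sigma(e)}_j$, with $H_y$ (acyclic) at $r$ and $H_x$ (one fewer cycle than $H$) at $c$; any cycle of a putative $H$-minor not contained in $H_x$ must pass through both $c$ and $r$, which forces $x$ and $y$ into a common $2$-connected block and destroys the bridge $\{x,y\}$, so $H\not\prem Z$. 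In the bad configuration, the $C_4$ on $r^{\sigma(e)}_{i'},a^{\sigma(e)}_{i',j},c^{\sigma(e)}_j,\overline{a}^{\sigma(e)}_{i',j}$, the path $c^{\sigma(e)}_j,b^e_{i,j},r^{\sigma(e)}_i$, and the attached $H_x$ and $H_y$ give $H$ as a topological minor, establishing Property~\ref{new-wealthierm}/\ref{new-wealthiertm}. The missing idea in your proposal is to put the flower part of $H$ on the column vertices, the pendant part on the row vertices, and to enlarge each $B$-gadget only by the single vertex $\overline{a}$, so that every framework cycle spans a row vertex \emph{and} a column vertex.
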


\begin{proof}
  Let $(F,\ell)$ be the $\{H\}$-M-framework (resp. $\{H\}$-TM-framework) of an input $(G,k)$ of {\sc $k\times k$ Permutation Independent Set}, where $t_\Fcal $ will be defined later. By assumption, the block containing both cut vertices is  an edge.
  Let $x$ and $y$ be the two cut vertices.
  Let $C$ be a block that is a cycle, and without loss of generality we may assume that $x \in V(C)$.
  Let $H_x$ (resp. $H_y$) be the connected component of $H \gm ((V(C) \cup \{y\}) \sm \{x\})$ (resp. $H \gm \{x\}$) that contains $x$ (resp. $y$) (see Figure~\ref{rlo4rtp}).

\begin{figure}[htb]

    \centering
    \scalebox{1}{\begin{tikzpicture}[scale=0.9]
\vertex{0,0};
\vertex{2,0};
\vertex{3.5,0.5};
\vertex{3.5,-0.5};
\vertex{3,-1};
\vertex{3,1};
\vertex{-1,0};
\draw (3.5,0.5) -- (2,0) -- (3.5,-0.5);
\draw (3,1) -- (2,0) -- (3,-1);
\draw (-1,0) -- (2,0);
\node at (2,0.3) {{$y$}};
\node at (0.2,0.3) {{$x$}};
\node at (-0.25,1.3) {{$C$}};

\draw[rounded corners=4mm, dashed] (1.7,-0.5) -- (1.7, 0.5) --  (3,1.5) -- (4.1,0.5) -- (4.1,-0.5) -- (3,-1.5) -- cycle;

\draw[rounded corners=1mm] (0,0) -- (0,1) -- (-0.5,1) -- (0,0);
\draw[rounded corners=1mm] (0,0) -- (0,-1) -- (-0.5,-1) -- (0,0);
\draw[rounded corners=1mm] (0,0) -- (-1,-1) -- (-1,-0.5) -- (0,0);

\draw[rounded corners=4mm, dashed] (0.2,0.3) -- (-1.2, 0.3) --  (-1.2,-1.2) -- (0.2,-1.2) -- cycle;

\node at (4.3,-0.7) {{$H_y$}};
\node at (-1.7,-0.7) {{$H_x$}};

\end{tikzpicture}
}

\caption{A visualization of $C$, $H_{x}$, and $H_{y}$ in $H$.}
\label{rlo4rtp}
\end{figure}
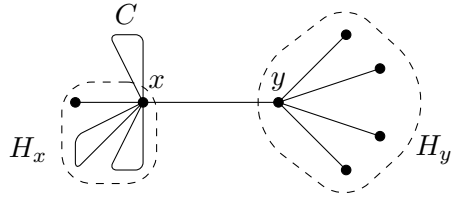

%
%

  We are now ready to describe the graph $F_{\{H\}}$.
  Starting from $F$, we add, for each $e \in E(G)$ and each $i,j \in \intv{1,k}$, a vertex $\overline{a}^e_{i,j}$ and the edges $\{\overline{a}^e_{i,j},r^e_i\}$ and $\{c^e_j,\overline{a}^{e}_{i,j}\}$.
  Moreover, for each  $e \in E(G)$ and each $j \in \intv{1,k}$, we add a copy of $H_x$ where we identify $x$ and $c^e_j$, and a copy of $H_y$ where we identify $y$ and $r^e_{j}$.
  The vertices in the copies of $H_x$ and the copies of $H_y$ are $J$-extra vertices and the vertices $\overline{a}^e_{i,j}$,  $e \in E(G)$ and $i,j \in \intv{1,k}$
  are $B$-extra vertices.
  In particular, we have $t_\Fcal = 1$ and, for each $e \in E(G)$,  $|V(J^e)| = (|V(H_x)|+|V(H_y)|)\cdot k$.
    This completes the definition of $F_{\{H\}}$.

  Let $P$ be a solution of {\sc $k\times k$ Permutation Independent Set} on $(G,k)$.
  The connected components of  $F_{\{H\}} \gm S_P$ are either copies of the graph $K$, which is of size $h-1$, or the graph $Z$ depicted in Figure~\ref{pic1-cobannerbis}.
  Since $H_{y}$ has no cycles and $H_{x}$ has one cycle less than $H$, it follows that if $Z$ contains $H$ as a minor, then there is a cycle of this minor that contains both $c_j^{\sigma(e)}$ and $r_i^{\sigma(e)}$.
  But in that case we cannot find in $Z$ a block consisting of one edge whose both endpoints are cut vertices, corresponding to  the edge $\{x,y\}$.
  We obtain that $H$ is not a minor of the depicted graph.
   Thus  $F_{\{H\}} \gm S_P$ does not contain $H$ as a minor and $S_P$ is a solution of \textsc{$\{H\}$-M-Deletion} (resp. \textsc{$\{H\}$-TM-Deletion}) of size $\ell$.

  \begin{figure}[htb]
    \centering
    \scalebox{1}{\begin{tikzpicture}[scale=0.9]
        \node at (0,4.5) {\footnotesize{$r^{\sigma(e)}_{i}$}};
        \node at (2.4,0.5) {\footnotesize{$c^{\sigma(e)}_{j}$}};
        \node at (1.6,2.6) {\footnotesize{$a^{\sigma(e)}_{i,j}$}};
        \node at (3.5,3) {\footnotesize{$\overline{a}^{\sigma(e)}_{i,j}$}};
        \node at (-2.5,2) {\footnotesize{$b^{e}_{i,j}$}};
        \vertex{0,4};
        \vertex{2,3};
        \vertex{3,3};
        \vertex{2.5,1};
        \vertex{-2,2};
        \draw (0,4) -- (2,3) -- (2.5,1) -- (-2,2) -- (0,4) -- (3,3) -- (2.5,1);


        \draw[rounded corners=4mm] (2,1) -- (4,0.3) -- (4,1.7) -- cycle;
        \node at (3.5,1) {\footnotesize{$H_x$}};
        \draw[rounded corners=4mm] (0.5,4) -- (-1.5,4.7) -- (-1.5,3.3) -- cycle;
        \node at (-1,4) {\footnotesize{$H_y$}};


      \end{tikzpicture}}
    \caption{A connected component $Z$ of $F_{\{H\}} \gm S$ that is not a copy of $K$.
    }

    \label{pic1-cobannerbis}
  \end{figure}
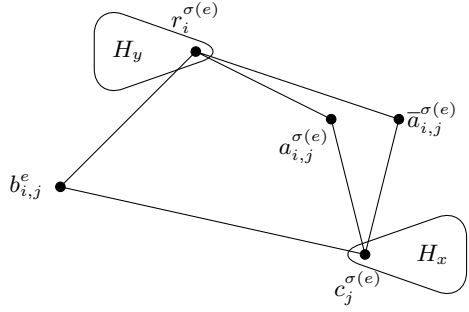
%

  Assume now that $S$ is a solution of \textsc{$\{H\}$-M-Deletion} (resp. \textsc{$\{H\}$-TM-Deletion})
  on $F_{\{H\}}$ of size $\ell$.
  Let $e \in E(G)$ and let $i,j \in \intv{1,k}$ such that $b^e_{i,j} \not \in S$.
  Let $i' \in \intv{1,k}$ such that $i \not = i'$.
  If $a^{\sigma(e)}_{i',j} \not \in S$ then, as by Lemma~\ref{reimitation} $S \cap V(J^{\sigma(e)}) = \es$, it follows that the $C_{4}$ induced by $r^{\sigma(e)}_{i'},a^{\sigma(e)}_{i',j},c^{\sigma(e)}_{j},\overline{a}^{\sigma(e)}_{i',j},$ the path
  $c^{\sigma(e)}_{j},b^e_{i,j},r^{\sigma(e)}_{i}$,
together with the copy of $H_x$ attached to $c^{\sigma(e)}_{j}$ and the copy of $H_y$ attached to $r^{\sigma(e)}_{i}$ induce a subgraph  of $F_{\{H\}}\gm S$ that contains $H$ as a topological minor. This subgraph, together with an extra copy of $H_{y}$ attached to $r_{i'}^{\sigma(e)}$, is depicted in Figure~\ref{pic2-cobannerbis}.
  As this is forbidden by definition of $S$,
  we have that  $a^{\sigma(e)}_{i',j} \in S$.
  Thus Property~\ref{new-wealthierm} (resp. Property~\ref{new-wealthiertm}) holds and the lemma follows from Lemma~\ref{lemmasol}.
\end{proof}

\vspace{-.2cm}
\begin{figure}[htb]
  \centering
  \scalebox{1}{\begin{tikzpicture}[scale=0.75]

      \node at (0,4.5) {\footnotesize{$r^{\sigma(e)}_{i}$}};
      \node at (0.2,6.5) {\footnotesize{$r^{\sigma(e)}_{i'}$}};
      \node at (2.5,0.5) {\footnotesize{$c^{\sigma(e)}_{j}$}};
      \node at (1.5,4.6) {\footnotesize{$a^{\sigma(e)}_{i',j}$}};
      \node at (3.6,5) {\footnotesize{$\overline{a}^{\sigma(e)}_{i',j}$}};
      \node at (-2.5,3) {\footnotesize{$b^{e}_{i,j}$}};

      \vertex{0,4};
      \vertex{0,6};
      \vertex{2,5};
      \vertex{3,5};
      \vertex{2.5,1};
      \vertex{-2,3};

      \draw (0,6) -- (2,5) -- (2.5,1) -- (-2,3) -- (0,4);
      \draw (0,6) -- (3,5) -- (2.5,1);


      \draw[rounded corners=4mm] (2,1) -- (4,0.3) -- (4,1.7) -- cycle;
      \node at (3.5,1) {\footnotesize{$H_x$}};
      \draw[rounded corners=4mm] (0.5,6) -- (-1.5,6.7) -- (-1.5,5.3) -- cycle;
      \node at (-1,6) {\footnotesize{$H_y$}};
      \draw[rounded corners=4mm] (0,4.5) -- (0.7,2.5) -- (-0.7,2.5) -- cycle;
      \node at (0,3) {\footnotesize{$H_y$}};

    \end{tikzpicture}}
  \caption{A connected component $Z$ of $F_{\{H\}} \gm S$ if $b^e_{i,j} \not \in S$ and $a^{\sigma(e)}_{i',j} \not \in S$ for some $e \in E(G)$, $i,i',j \in \intv{1,k}$, $i \not = i'$.}

  \label{pic2-cobannerbis}
\end{figure}

\begin{lemma}\label{butternutbis}
  Let $H$ be a connected graph with exactly one cut vertex and at least two cycles.
  Neither \textsc{$\{H\}$-M-Deletion} nor \textsc{$\{H\}$-TM-Deletion} can be solved in time $2^{\smallo(\tw \log \tw)}\cdot n^{\Ocal(1)}$  unless the \ETH fails.
\end{lemma}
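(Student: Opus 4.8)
As in the previous lemmas of this section I would reduce from {\sc $k\times k$ Permutation Independent Set} using the $\Fcal$-framework of Section~\ref{supexpgencons} with $\Fcal=\{H\}$: from an instance $(G,k)$ I build an enhanced framework $(F_{\{H\}},\ell)$, check that Property~\ref{new-wealthierm} (minor version) or Property~\ref{new-wealthiertm} (topological-minor version) holds, and then conclude via Lemma~\ref{lemmasol} and Theorem~\ref{borrowing}. Since $H\in\mathcal{Q}$ and Theorem~\ref{hardgene} already handles the graphs with a large block, I may assume every block of $H$ has at most four edges; because $H$ has exactly one cut vertex $v$, this forces every block of $H$ to contain $v$ and to be a $K_2$, a $C_3$, or a $C_4$, at least two of them being cycles; recall $h=|V(H)|$.

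\textbf{The construction.} The graph $F_{\{H\}}$ is obtained from $F$ by attaching to the separator gadgets (and, if necessary, to the $B^e_{i,j}$'s) gadgets derived from $H$, in the spirit of Lemmas~\ref{cycletwocut} and~\ref{commander}. Concretely, I would take the cyclic blocks $C_1,\dots,C_r$ of $H$ ($r\ge2$), delete from each $C_t$ one edge $\{v,v'_t\}$ incident to $v$ to obtain the connected graph $H^-$ (which is \emph{acyclic}, since $v$ is the unique cut vertex, so each $C_t$ becomes a path), introduce for every $e\in E(G)$ marker vertices $q^e_1,\dots,q^e_r$, and for every $e$ and every $i\in\intv{1,k}$ attach a copy of $H^-$ identifying $v$ with $r^e_i$ and each $v'_t$ with $q^e_t$; all new vertices are $J$-extra vertices, so $t_\Fcal=0$. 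The exact choice of which cyclic blocks to open, of how many of them, and of whether the markers are shared over a gadget or kept local, is the delicate point, and must be arranged so that no component of a ``consistent'' deletion set can reassemble two cycles meeting at a single vertex. As in Section~\ref{supexpgencons}, the pathwidth of $F_{\{H\}}$ is $\Ocal(k)$, so this is a legitimate reduction for the \ETH lower bound.

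\textbf{The two directions.} Given a solution $P$ of {\sc $k\times k$ Permutation Independent Set}, I take $S_P$ as before; the nontrivial components of $F_{\{H\}}\setminus S_P$ are isolated copies of $K=K_{h-1}$ together with graphs $Z$ (of the shape of Figure~\ref{reallocatedbiss}) in which the cyclic blocks of $H$ have been opened, so that the surviving cycles are only the individual framework $C_4$'s, located at vertices separated by the markers. One then checks that such a $Z$ cannot contain $H$ as a (topological) minor --- this is where a minor-monotone invariant separating $H$ from $Z$ is used (morally, ``$H$ has a vertex lying on two cycles that pairwise intersect only at that vertex''). Hence $S_P$ is a solution of size $\ell$. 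Conversely, given a solution $S$ with $|S|\le\ell$, Lemma~\ref{reimitation} gives that exactly one $B^e_{i,j}$ survives per column and $S\cap V(J^e)=\es$ for every $e$. If $b^e_{i,j}\notin S$ and $a^{\sigma(e)}_{i',j}\notin S$ for some $i\ne i'$, then the copies of $H^-$ sitting at $r^{\sigma(e)}_i$ and at $r^{\sigma(e)}_{i'}$ (which share the markers $q^{\sigma(e)}_t$), together with the path $r^{\sigma(e)}_i,b^e_{i,j},c^{\sigma(e)}_j,a^{\sigma(e)}_{i',j},r^{\sigma(e)}_{i'}$, restore each deleted edge $\{v,v'_t\}$ by an internally disjoint path and thus produce a copy of $H$ as a topological minor of $F_{\{H\}}\setminus S$, contradicting the choice of $S$. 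Therefore $a^{\sigma(e)}_{i',j}\in S$, the relevant Property holds, and Lemma~\ref{lemmasol} yields the desired reduction.

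\textbf{Main obstacle.} The crux is the joint design of $F_{\{H\}}$ and the forward direction: every framework $C_4$ behaves like a ``free'' cycle that a component can always supply, so opening a single cyclic block of $H$ is not enough --- one must open enough of them (and place the markers $q^e_t$ correctly) so that a consistent deletion set leaves no component able to re-create the ``two cycles through one vertex'' pattern characterizing $H$, while the inconsistent situation $i\ne i'$ still provides exactly the extra paths needed to glue the opened blocks back into $H$. Making this work simultaneously for the minor and the topological-minor versions --- which is why edges are deleted rather than contracted, unlike in Lemma~\ref{cyclestartwocut} --- and treating cleanly the extremal case where $H$ has just two cyclic blocks (the \butter) is where the actual effort lies.
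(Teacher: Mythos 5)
Your high-level plan is the same as the paper's — reduce from {\sc $k\times k$ Permutation Independent Set} through the framework of Section~\ref{supexpgencons}, verify Property~\ref{new-wealthierm}/\ref{new-wealthiertm}, and invoke Lemma~\ref{lemmasol} and Theorem~\ref{borrowing} — but the gadget you sketch is not the right one, and the ``delicate point'' you flag is exactly where your construction breaks. You propose to open \emph{every} cyclic block of $H$ by deleting one edge $\{v,v'_t\}$ at the cut vertex, obtaining a tree $H^-$, attach a copy of $H^-$ at each $r^e_i$, and identify the $v'_t$'s with markers $q^e_t$ shared over all $i$. This fails in both directions. In the \emph{forward} direction the shared markers already reassemble cycles: for $H=\butterfly$, the $8$-cycle $r^e_i\!-\!u_1\!-\!q^e_1\!-\!u'_1\!-\!r^e_{i'}\!-\!u'_2\!-\!q^e_2\!-\!u_2\!-\!r^e_i$ formed by two copies of $H^-$ meets the framework $C_4$ through $r^e_i$ only in $r^e_i$, so $\butterfly\prem F_{\{H\}}\gm S_P$ and the reduction is unsound. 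If instead you keep the markers local to avoid this, the \emph{backward} direction fails: the detour $r^{\sigma(e)}_i,b^e_{i,j},c^{\sigma(e)}_j,a^{\sigma(e)}_{i',j},r^{\sigma(e)}_{i'}$ is a single path, so it can close at most one of the $r\geq 2$ opened blocks; for $\butterfly$ the union of the two $H^-$ copies and this detour is a theta graph of maximum degree $3$, which does not even contain the degree-$4$ center of $\butterfly$ as a minor, so Property~\ref{new-wealthierm}/\ref{new-wealthiertm} is not established.

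The paper's gadget is structurally different and resolves precisely this tension. It does not open blocks; it removes \emph{two entire} cycle blocks $B_1,B_2$ to form $H_x:=H\gm(V(B_1\cup B_2)\sm\{x\})$ and attaches a copy of $H_x$ at $c^e_j$ (not at $r^e_i$). The extra cycles are supplied by \emph{doubling} the connector vertices: two $B$-extra vertices $\overline a^e_{i,j},\overline b^e_{i,j}$ (so $t_\Fcal=2$) are added with edges mirroring those of $a^e_{i,j},b^e_{i,j}$, giving four internally disjoint length-$2$ paths between $r^e_i$ and $c^e_j$. With a consistent solution all four paths join the same pair $(r^e_i,c^e_j)$, forming a single theta block, so the nontrivial component has only one more block than $H_x$ and hence $H\not\prem Z$. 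With an inconsistency the four paths split two–two between $r^{\sigma(e)}_i$ and $r^{\sigma(e)}_{i'}$, producing two $C_4$'s meeting only at $c^{\sigma(e)}_j$, which together with the attached $H_x$ contain $H$ as a topological minor. This ``one theta versus two $C_4$'s'' dichotomy is the mechanism your sketch lacks; once you move the attachment point to $c^e_j$ and double $a,b$ rather than opening all cycles of $H$, the argument goes through uniformly for minors and topological minors, including the $\butterfly$ case.
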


\begin{proof}
     Let $(F,\ell)$ be the $\{H\}$-M-framework (resp. $\{H\}$-TM-framework) of an input $(G,k)$ of {\sc $k\times k$ Permutation Independent Set}, where $t_\Fcal$ will be specified later.
  Let $x$ be the cut vertex of $H$, and let $B_1$ and $B_2$ be two blocks that are cycles.
  We define the graph $H_x$ to be $H \gm (V(B_1 \cup B_2) \sm \{x\})$.

  We are now ready to describe the graph $F_{\{H\}}$.
  Starting from $F$, we add,  for  each $e \in E(G)$ and each $i,j \in \intv{1,k}$, two new vertices $\overline{a}_{i,j}^{e}$ and $\overline{b}_{i,j}^{e}$ and the edges $\{\overline{a}^e_{i,j},r^e_i\}$, $\{r^e_i,\overline{b}^{\sigma^{-1}(e)}_{i,j}\}$, $\{\overline{b}^{\sigma^{-1}(e)}_{i,j},c^e_j\}$, and $\{c^e_j,\overline{a}_{i,j}\}$.
  Then for each $e \in E(G)$ and each $j \in \intv{1,k}$ we add a copy of $H_x$ where we identify $x$ with $c^e_j$.
  The vertices in the copies of $H_x$ are $J$-extra vertices and the vertices $\overline{a}_{i,j}^{e}$ and $\overline{b}_{i,j}^{e}$, $i,j \in \intv{1,k}$ and $e \in E(G)$, are $B$-extra vertices.
    In particular we have $t_\Fcal = 2$ and, for each $e \in E(G)$,  $|V(J^e)| = (|V(H_x)|+1)\cdot k$.
    This completes the definition of $F_{\{H\}}$.

    Let $P$ be a solution of {\sc $k\times k$ Permutation Independent Set} on $(G,k)$.
      Then every connected component of $F_{\{H\}} \gm S_P$ is either a copy of the graph $K$, which is of size $h-1$, or
  the graph $Z$ depicted in  Figure~\ref{butbis}.
  Since $Z$ has only one block more than $H_{x}$, it follows that $Z$ does not contain $H$ as a minor.
  Thus  $F \gm S_T$ does not contain $H$ as a minor, and $S_P$ is a solution of  \textsc{$\{H\}$-M-Deletion} (resp. \textsc{$\{H\}$-TM-Deletion}) of size $\ell$.
  \begin{figure}[htb]
    \centering
    \scalebox{1}{\begin{tikzpicture}[scale=0.8]

        \node at (0,4.5) {\footnotesize{$r^{\sigma(e)}_{i}$}};
        \node at (2.2,0.5) {\footnotesize{$c^{\sigma(e)}_{j}$}};
        \node at (1.6,2.6) {\footnotesize{$a^{\sigma(e)}_{i,j}$}};
        \node at (3.6,3) {\footnotesize{$\overline{a}^{\sigma(e)}_{i,j}$}};
        \node at (-3.45,3) {\footnotesize{$b^{e}_{i,j}$}};
        \node at (-1.2,3) {\footnotesize{$\overline{b}^{e}_{i,j}$}};

        \vertex{0,4};
        \vertex{2,3};
        \vertex{3,3};
        \vertex{2.5,1};
        \vertex{-2,3};
        \vertex{-3,3};

        \draw (0,4) -- (2,3) -- (2.5,1) -- (-2,3) -- (0,4) -- (3,3) -- (2.5,1) -- (-3,3) -- (0,4);

      \draw[rounded corners=4mm] (2,1) -- (4,0.3) -- (4,1.7) -- cycle;
      \node at (3.5,1) {\footnotesize{$H_x$}};
      \end{tikzpicture}}
    \caption{A connected component $Z$ of $F_{\{H\}} \gm S$ that is not a copy of $K$.}
    \label{butbis}
  \end{figure}
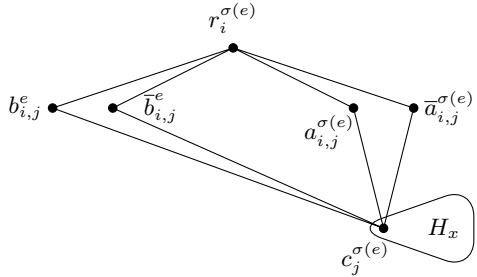

  Assume now that $S$ is a solution of \textsc{$\{H\}$-M-Deletion}
  on $F_{\{H\}}$ of size $\ell$.
  Let $e \in E(G)$ and let $i,j \in \intv{1,k}$ such that $b^e_{i,j} \not \in S$.
  Let $i' \in \intv{1,k}$ such that $i \not = i'$.
  If $a^{\sigma(e)}_{i',j} \not \in S$ then, as by Lemma~\ref{reimitation} $S \cap V(J^{\sigma(e)}) = \es$, we have that the graph
  induced by the two paths $r^{\sigma(e)}_{i'},a^{\sigma(e)}_{i',j},c^{\sigma(e)}_{j},b^e_{i,j},r^{\sigma(e)}_{i}$
  and $r^{\sigma(e)}_{i'},\overline{a}^{\sigma(e)}_{i',j},c^{\sigma(e)}_{j},\overline{b}^e_{i,j},r^{\sigma(e)}_{i}$, together with the copy of $H_x$ attached to $c^{\sigma(e)}_{j}$,
  depicted in Figure~\ref{countdownbis},
  is a subgraph of $F\gm S$ containing $H$ as a topological minor.
  As this is forbidden by definition of $S$,
  we have that  $a^{\sigma(e)}_{i',j} \in S$.
  Thus Property~\ref{new-wealthierm} (resp. Property~\ref{new-wealthiertm}) holds and the lemma follows from Lemma~\ref{lemmasol}.
  \begin{figure}[htb]
    \centering
    \scalebox{1}{\begin{tikzpicture}[scale=0.7]

        \node at (0,4.5) {\footnotesize{$r^{\sigma(e)}_{i}$}};
        \node at (0,6.5) {\footnotesize{$r^{\sigma(e)}_{i'}$}};
        \node at (2.2,0.5) {\footnotesize{$c^{\sigma(e)}_{j}$}};
        \node at (1.5,4.6) {\footnotesize{$a^{\sigma(e)}_{i',j}$}};
        \node at (3.7,5) {\footnotesize{$\overline{a}^{\sigma(e)}_{i',j}$}};
        \node at (-3.5,3) {\footnotesize{$b^{e}_{i,j}$}};
        \node at (-1.,3) {\footnotesize{$\overline{b}^{e}_{i,j}$}};

        \vertex{0,4};
        \vertex{0,6};
        \vertex{2,5};
        \vertex{3,5};
        \vertex{2.5,1};
        \vertex{-2,3};
        \vertex{-3,3};

        \draw (0,6) -- (2,5) -- (2.5,1) -- (-2,3) -- (0,4);
        \draw (0,6) -- (3,5) -- (2.5,1) -- (-3,3) -- (0,4);
        \draw[rounded corners=4mm] (2,1) -- (4,0.3) -- (4,1.7) -- cycle;
      \node at (3.5,1) {\footnotesize{$H_x$}};

      \end{tikzpicture}}
    \caption{A connected component $Z$ of $F_{\{H\}} \gm S$, if $b^e_{i,j} \not \in S$ and $a^{\sigma(e)}_{i',j} \not \in S$ for some $e \in E(G)$, $i,i',j \in \intv{1,k}$, $i \not = i'$.}

    \label{countdownbis}
  \end{figure}
\end{proof}

In the next two lemmas, namely Lemma~\ref{crickbisminor} and Lemma~\ref{crickbistm}, we deal separately with the minor and topological minor versions, respectively.

\begin{lemma}\label{crickbisminor}
  Let $H$ be a connected graph with exactly one cut vertex and exactly one cycle such that $H$ is not a minor of the \banner.
  \textsc{$\{H\}$-M-Deletion} cannot be solved in time $2^{\smallo(\tw \log \tw)}\cdot n^{\Ocal(1)}$  unless the \ETH fails.
\end{lemma}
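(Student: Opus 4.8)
The plan is to apply the general machinery of Section~\ref{supexpgencons}, in the style of Lemmas~\ref{cycletwocut}--\ref{butternutbis}: given an instance $(G,k)$ of {\sc $k\times k$ Permutation Independent Set}, start from the $\{H\}$-M-framework $(F,\ell)$ for a suitable value of $t_\Fcal$, build an enhanced framework $F_{\{H\}}$ by adding $B$-extra and $J$-extra vertices and edges inside the regions allowed by Section~\ref{supexpgencons}, prove that $S_P$ solves \textsc{$\{H\}$-M-Deletion} for every solution $P$ of the source problem, verify Property~\ref{new-wealthierm}, and conclude with Lemma~\ref{lemmasol}. Since the construction will satisfy $|V(J^e)|=\Ocal(k)$ and $t_\Fcal=\Ocal(1)$, the generic estimate of Section~\ref{supexpgencons} gives $\pw(F_{\{H\}})=\Ocal(k)$, so Theorem~\ref{borrowing} yields the claimed $2^{\smallo(\tw\log\tw)}$ lower bound.

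First I would pin down the shape of $H$. By Theorem~\ref{hardgene} we may assume that every block of $H$ has at most four edges; together with the hypotheses that $H$ is connected with exactly one cut vertex $x$ and exactly one cycle, this forces the block-cut tree of $H$ to be a star centred at $x$, so $H$ is a cycle $C$ of length $q\in\{3,4\}$ through $x$ plus $p$ pendant edges at $x$. Since $x$ is a cut vertex, $p\geq 1$; moreover $p=1$ gives the \paw (if $q=3$) or the \banner (if $q=4$), both minors of the \banner, so the hypothesis that $H$ is not a minor of the \banner forces $p\geq 2$. Thus $h=|V(H)|=q+p\geq 5$, and, in the notation of Lemma~\ref{butternutbis}, $H_x=H\gm(V(C)\sm\{x\})$ is the star with centre $x$ and $p$ leaves.

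For the construction I would, in the spirit of Lemmas~\ref{commander} and~\ref{butternutbis}, attach a private copy of a leaf gadget (built from $H_x$) near each column vertex $c^e_j$, add a constant number of $B$-extra vertices to each $B^e_{i,j}$ with edges to $c^e_j$, $r^e_i$ and $a^e_{i,j}$, and possibly hang a bounded number of pendant $J$-extra vertices on $c^e_j$; the exact counts differ for $q=3$ and $q=4$, since realising a triangle rather than a $C_4$ in the reverse step costs one extra $B$-extra vertex. The reverse direction then follows the usual pattern: if $b^e_{i,j}\notin S$ and $a^{\sigma(e)}_{i',j}\notin S$ with $i\neq i'$, then, using $S\cap V(J^{\sigma(e)})=\varnothing$ from Lemma~\ref{reimitation}, the subgraph carried by $r^{\sigma(e)}_{i'},a^{\sigma(e)}_{i',j},c^{\sigma(e)}_j,b^e_{i,j},r^{\sigma(e)}_i$ together with the surviving $B$-extras and the leaf gadget at $c^{\sigma(e)}_j$ contains $H$ as a minor, contradicting that $S$ is a solution and establishing Property~\ref{new-wealthierm}. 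For the forward direction, Lemma~\ref{reimitation} reduces the task to showing $H\not\prem Z$, where $Z$ is the unique non-clique component of $F_{\{H\}}\gm S_P$; the crucial point is that a permutation selection makes the two row vertices $r^{\sigma(e)}_i,r^{\sigma(e)}_{i'}$ coincide, so $Z$ has a single cycle region that is a minor of $C_4$ and the leaf gadget attaches to it in such a way that no vertex of $Z$ lying on a cycle can reach $p$ pairwise internally disjoint leaves. Combining the two directions, Lemma~\ref{lemmasol} finishes the proof.

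The main obstacle --- and the reason this case needs a dedicated lemma rather than being a corollary of Lemma~\ref{butternutbis} --- is exactly this forward-direction calibration. In the earlier lemmas the component $Z$ fails to contain $H$ for a robust reason (too few vertices, or fewer cycle-blocks than $H$, or a tree with too few leaves), but here $H$ has only one cycle-block, and $Z$ will typically have the same number of vertices, the same block-cut tree shape, and the same number of pendant edges as $H$; the only thing that may fail is gathering all $p$ pendant edges at a single cycle-vertex. Because the natural component $Z$ contains unavoidable four-cycles through $c^e_j$, $r^e_i$, $a^e_{i,j}$, $b^e_{i,j}$, and because $K_{2,t}$-type fragments have enough slack to fabricate an extra leaf under edge contractions, attaching too much pendant structure to $c^e_j$ already makes $Z$ contain $H$, whereas attaching too little makes the reverse-direction witness one leaf short of $H$; deciding precisely how the $p$ leaves split between permanent pendants and the framework vertices that become distinct only in the bad case, and keeping $t_\Fcal$, the separator gadgets, and the block structure of $Z$ mutually consistent, is the technical heart of the argument.
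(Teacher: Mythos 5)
Your structural analysis of $H$ is correct: with one cut vertex $x$, one cycle, and every block of at most four edges (by Theorem~\ref{hardgene}), $H$ is a $C_q$ ($q\in\{3,4\}$) through $x$ plus $p$ pendant edges at $x$, and $H\not\prem\banner$ forces $p\geq 2$, so $s:=$ the number of degree-one vertices of $H$ equals $p\geq 2$. You also correctly identify the framework to use, the role of Lemma~\ref{reimitation} and Property~\ref{new-wealthierm}, and --- this is genuinely insightful --- the precise obstacle: the unavoidable $C_4$'s through $c^e_j,r^e_i,a^e_{i,j},b^e_{i,j}$ mean that hanging a full $K_{1,p}$ off $c^e_j$ already plants $H$ inside the good-case component $Z$, while hanging too little leaves the bad-case witness one leaf short.

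But you stop exactly where the lemma needs an idea. The text says ``deciding precisely how the $p$ leaves split \ldots is the technical heart of the argument'' --- and that is the part left undone. Your candidate constructions (attach $H_x$ at $c^e_j$; add a couple of $B$-extra vertices; let the gadget depend on whether $q=3$ or $q=4$) are not developed to the point where one can check that $Z$ avoids $H$, and the version you do sketch is, as you yourself observe, one that fails. So as written this is a plan, not a proof.

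For comparison, the paper's actual gadget does not put the leaves at $c^e_j$ at all. It hangs $s-2$ pendants on each $b^e_{i,j}$, and inserts per $(e,j)$ a length-three path $d^e_j\mbox{--}f^e_j\mbox{--}g^e_j$ with edges $\{b^{\sigma^{-1}(e)}_{i,j},d^e_j\}$ and $\{g^e_j,a^e_{i,j}\}$. In the good case, every cycle of $Z$ (there are three, of lengths $4$, $6$, $6$) has at most $s-1$ pendant branches attachable to any single vertex (e.g.\ $b^e_{i,j}$ has the $s-2$ pendants plus the branch through $d^e_j$, but never the $s$-th leaf), so $H\not\prem Z$; in the bad case, the two distinct row vertices $r^{\sigma(e)}_i$ and $r^{\sigma(e)}_{i'}$ both hang off the $C_6$ through $a^{\sigma(e)}_{i',j},c^{\sigma(e)}_j,b^e_{i,j},d,f,g$, and contracting $a^{\sigma(e)}_{i',j}$ into $b^e_{i,j}$ along $c^{\sigma(e)}_j$ merges their two leaves with the $s-2$ pendants, giving $s=p$ leaves on a $C_4$ (or $C_3$ after one more contraction). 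Note the construction is therefore uniform in $q$ --- the $C_6$ contracts to both target cycle lengths --- so your hypothesis that $q=3$ versus $q=4$ requires different gadget sizes is a false lead. That path-gadget-plus-pendants-on-$b$ combination is the missing ingredient; without it, what you have is an accurate diagnosis of why this case is delicate, not a proof that it goes through.
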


\begin{proof}
     Let $(F,\ell)$ be the $\{H\}$-M-framework of an input $(G,k)$ of {\sc $k\times k$ Permutation Independent Set}, where $t_\Fcal$ will be specified  later.
  Let $s$ be the number of vertices of degree one in $H$.
  As $H$ is not a minor of the \banner and (because of ~Theorem~\ref{hardgene}) we can assume that each block of $H$ contains at most four edges,
  we have that  $s \geq 2$.

  We are now ready to describe the graph $F_{\{H\}}$.
  Starting from $F$, we add,  for each $e \in E(G)$ and each $j \in \intv{1,k}$, three new vertices $d^e_j$, $f^e_j$, and $g^e_j$ and the edges $\{d^e_j,f^e_j\}$ and $\{f^e_j,g^e_j\}$.
Moreover, for each $e \in E(G)$ and each $i,j \in \intv{1,k}$, we add the edges
$\{b^{\sigma^{-1}(e)}_{i,j},d^e_j\}$ and $\{g^e_j,a_{i,j}\}$, and $s-2$ vertices pendent to $b^e_{i,j}$.
The vertices $d^e_j$, $f^e_j$, and $g^e_j$, $j \in \intv{1,k}$ and $e \in E(G)$, are $J$-extra vertices, and the pendent vertices are $B$-extra vertices.
  In particular we have $t_\Fcal = s-2$ and, for each $e \in E(G)$,  $|V(J^e)| = 5 k$.
    This completes the definition of $F_{\{H\}}$.


    Let $P$ be a solution of {\sc $k\times k$ Permutation Independent Set} on $(G,k)$.
      Then every connected component of $F_{\{H\}} \gm S_P$ is either a copy of the graph $K$, which is of size $h-1$, or
  the graph $Z$ depicted in  Figure~\ref{cowbis}.
  Note that  $Z$ contains three different cycles, but for each of them some
  pendent edge is missing in order to find $H$ as a minor.
   Thus  $F \gm S_P$ does not contain  $H$ as a minor, and $S_P$ is a solution of \textsc{$\{H\}$-M-Deletion}.
  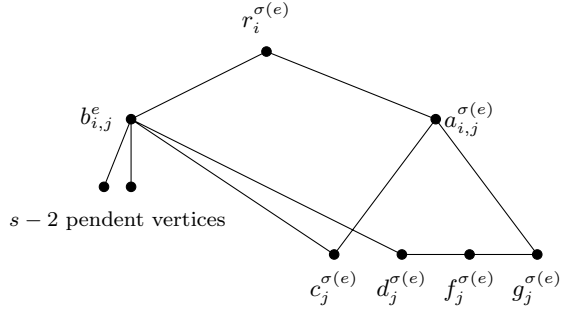
\begin{figure}[htb]
    \centering
    \scalebox{1}{\begin{tikzpicture}[scale=0.9]
        \node at (0,4.5) {\footnotesize{$r^{\sigma(e)}_{i}$}};
        \node at (1,0.5) {\footnotesize{$c^{\sigma(e)}_{j}$}};
        \node at (2,0.5) {\footnotesize{$d^{\sigma(e)}_{j}$}};
        \node at (3,0.5) {\footnotesize{$f^{\sigma(e)}_{j}$}};
        \node at (4,0.5) {\footnotesize{$g^{\sigma(e)}_{j}$}};
        \node at (3,3) {\footnotesize{$a^{\sigma(e)}_{i,j}$}};
        \node at (-2.5,3) {\footnotesize{$b^{e}_{i,j}$}};

        \vertex{0,4};
        \vertex{2.5,3};
        \vertex{1,1};
        \vertex{2,1};
        \vertex{3,1};
        \vertex{4,1};
        \vertex{-2,3};

        \vertex{-2,2};
        \vertex{-2.4,2};
        \draw (-2,3) -- (-2,2);
        \draw (-2,3) -- (-2.4,2);

        \node at (-2.2,1.5) {\scriptsize{$s-2$} pendent vertices};

        \draw (-2,3) -- (1,1) -- (2.5,3) -- (0,4) -- (-2,3) -- (2,1) -- (4,1) -- (2.5,3);

      \end{tikzpicture}}
    \caption{A connected component $Z$ of $F_{\{H\}} \gm S$ that is not a copy of $K$ with $s = 4$.}
    \label{cowbis}
  \end{figure}

  Assume now that $S$ is a solution of \textsc{$\{H\}$-M-Deletion}.
  Let $e \in E(G)$ and let $i,j \in \intv{1,k}$ such that $b^e_{i,j} \not \in S$.
  Let $i' \in \intv{1,k}$ such that $i \not = i'$.
  If $a^{\sigma(e)}_{i',j} \not \in S$ then, as by Lemma~\ref{reimitation}  $S \cap V(J^{\sigma(e)}) = \es$, we have that the graph $Z$ induced by the two paths $r^{\sigma(e)}_{i'},a^{\sigma(e)}_{i',j},c^{\sigma(e)}_{j},b^e_{i,j},r^{\sigma(e)}_{i}$
  and ${a}^{\sigma(e)}_{i',j},g^{\sigma(e)}_{j}, f^{\sigma(e)}_{j}, {d}^{\sigma(e)}_{j}, {b}^e_{i,j}$, and the $s-2$ vertices pendent to ${b}^e_{i,j}$, depicted in Figure~\ref{supercowbis},
  is a subgraph of $F_{\{H\}}\gm S$ containing $H$ as a minor.
  As this is forbidden by definition of $S$,
  we have that  $a^{\sigma(e)}_{i',j} \in S$.
  Thus Property~\ref{new-wealthierm} holds and the lemma follows from Lemma~\ref{lemmasol}.
\begin{figure}[htb]
  \centering
  \scalebox{1}{\begin{tikzpicture}[scale=0.85]
      \node at (0,4.5) {\footnotesize{$r^{\sigma(e)}_{i}$}};
      \node at (0,6.5) {\footnotesize{$r^{\sigma(e)}_{i'}$}};
        \node at (1,0.5) {\footnotesize{$c^{\sigma(e)}_{j}$}};
        \node at (2,0.5) {\footnotesize{$d^{\sigma(e)}_{j}$}};
        \node at (3,0.5) {\footnotesize{$f^{\sigma(e)}_{j}$}};
        \node at (4,0.5) {\footnotesize{$g^{\sigma(e)}_{j}$}};
      \node at (3.1,5) {\footnotesize{$a^{\sigma(e)}_{i',j}$}};
      \node at (-2.5,3) {\footnotesize{$b^{e}_{i,j}$}};

      \vertex{0,4};
      \vertex{0,6};
      \vertex{2.5,5};
      \vertex{1,1};
      \vertex{2,1};
      \vertex{3,1};
      \vertex{4,1};
      \vertex{-2,3};

      \draw (-2,3) -- (1,1) -- (2.5,5) -- (0,6);
      \draw (0,4) -- (-2,3) -- (2,1) -- (4,1) -- (2.5,5);
      \vertex{-2,2};
      \vertex{-2.4,2};
      \draw (-2,3) -- (-2,2);
      \draw (-2,3) -- (-2.4,2);

        \node at (-2.2,1.5) {\scriptsize{$s-2$} pendent vertices};

    \end{tikzpicture}}
  \caption{A connected component $Z$ of $F_{\{H\}} \gm S$, if $b^e_{i,j} \not \in S$ and $a^{\sigma(e)}_{i',j} \not \in S$ for some $e \in E(G)$, $i,i',j \in \intv{1,k}$, $i \not = i'$ where $s = 4$.}

  \label{supercowbis}
\end{figure}
\end{proof}

\begin{lemma}\label{crickbistm}
  Let $H$ be a connected graph with exactly one cut vertex and exactly one cycle such that $H$ is not a minor of the \banner.
  \textsc{$\{H\}$-TM-Deletion} cannot be solved in time $2^{\smallo(\tw \log \tw)}\cdot n^{\Ocal(1)}$  unless the \ETH fails.
\end{lemma}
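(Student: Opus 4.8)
The plan is to follow the scheme of the proof of Lemma~\ref{crickbisminor}: reduce from {\sc $k\times k$ Permutation Independent Set}, build an enhanced $\{H\}$-TM-framework $F_{\{H\}}$, check that a solution $P$ of {\sc $k\times k$ Permutation Independent Set} yields a solution $S_P$ of \textsc{$\{H\}$-TM-Deletion} of size $\ell$, and conversely that any solution $S$ of \textsc{$\{H\}$-TM-Deletion} of size $\ell$ satisfies Property~\ref{new-wealthiertm}, so that Lemma~\ref{lemmasol} closes the argument. First I would pin down the shape of $H$. Since $H$ has exactly one cut vertex $v$ and exactly one cycle, and since by Theorem~\ref{hardgene} we may assume every block of $H$ has at most four edges, $H$ is a cycle $C_\ell$ with $\ell\in\{3,4\}$ together with $s$ pendant edges, all attached to a single vertex $v$ of that cycle (the unique cut vertex). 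If $s\leq 1$ then $H$ is the \paw\ or the \banner, both of which are minors of the \banner; hence $s\geq 2$, and in particular $H$ is not a star, so $H\notin\Scal$ automatically. Put $h=|V(H)|=\ell+s$.

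The construction of $F_{\{H\}}$ starts, as in Lemma~\ref{crickbisminor}, from $F$ with $t_\Fcal=s-2$ many $B$-extra vertices hung as pendants on each $b^e_{i,j}$, plus, for each pair $(e,j)$, a small chain of $J$-extra vertices used to create a witness cycle. The difference is that this auxiliary part has to be reinforced so that, in the inconsistent situation, \emph{one} surviving vertex on the witness cycle has degree at least $s+2$ and carries $s$ pairwise internally-disjoint pendant paths. This is exactly where the topological-minor version departs from the minor one: in the minor construction the $s$ pendants of $H$ are split between two distinct vertices of the witness cycle (one carrying $r^{\sigma(e)}_i$ together with the $s-2$ extra pendants, the other carrying $r^{\sigma(e)}_{i'}$), and a minor model may contract the cycle arc between them to coalesce all $s$ pendants at a single branch vertex, which a topological-minor model cannot do. The enhancement therefore has to route the extra pendant to the same cycle vertex, while ensuring that in the ``consistent'' transit component obtained from an actual solution $P$ the high-degree vertices have all their surplus incident edges pointing back into that short cycle, so that $s$ internally-disjoint pendant paths cannot be extracted and $H$ does not embed as a topological minor (even if it embeds as a minor).

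With such an $F_{\{H\}}$, the two directions are routine. Given a solution $P$ of {\sc $k\times k$ Permutation Independent Set}, every connected component of $F_{\{H\}}\sm S_P$ is either a copy of $K_{h-1}$, which has size $h-1<h$, or a copy of the ``consistent'' transit component, which by the disjointness-of-pendant-paths bookkeeping above contains no topological-minor model of $H$; hence $S_P$ is a solution of \textsc{$\{H\}$-TM-Deletion} of size $\ell$. Conversely, let $S$ be a solution of \textsc{$\{H\}$-TM-Deletion} of size $\ell$, so that Lemma~\ref{reimitation} applies. Suppose, towards Property~\ref{new-wealthiertm}, that $b^e_{i,j}\notin S$ while $a^{\sigma(e)}_{i',j}\notin S$ for some $i'\neq i$. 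Then Lemma~\ref{reimitation} forces $a^{\sigma(e)}_{i,j}\in S$ and $b^e_{i',j}\in S$, so both $r^{\sigma(e)}_i$ and $r^{\sigma(e)}_{i'}$ survive as private pendant vertices; the reinforced auxiliary gadget then exhibits $H$ as a topological minor of $F_{\{H\}}\sm S$, contradicting the choice of $S$. Thus $a^{\sigma(e)}_{i',j}\in S$, Property~\ref{new-wealthiertm} holds, and the lemma follows from Lemma~\ref{lemmasol}.

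The main obstacle, and the reason this lemma is split off from Lemma~\ref{crickbisminor}, is the design and verification of this auxiliary gadget: one must simultaneously (i) gain, in the inconsistent situation, exactly one extra pendant \emph{at a single cycle vertex}, so that $s$ pendants are concentrated there and $H$ genuinely embeds as a topological minor, and (ii) guarantee that no vertex of the consistent transit component can serve as the image of $v$ in a topological-minor model of $H$, which is delicate precisely because around the relevant vertex the consistent component is at least as rich as the inconsistent one. Establishing (ii) requires a case analysis on whether $\ell=3$ or $\ell=4$ and on the value of $s$, and is where the bulk of the work will go.
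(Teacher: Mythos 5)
There is a genuine gap: you never construct the gadget, and the route you sketch toward one would not work. Your preliminary analysis is correct — $H$ must be $C_\ell$ ($\ell\in\{3,4\}$ by Theorem~\ref{hardgene}) with $s\geq 2$ pendant edges all attached at the unique cut vertex $v$, so $\deg_H(v)=s+2\geq 4$ — and you correctly diagnose why the gadget of Lemma~\ref{crickbisminor} fails for topological minors (a minor model can contract an arc of the witness cycle to coalesce the $s-2$ pendants at $b^e_{i,j}$ with the stray pendant at $a^{\sigma(e)}_{i',j}$, whereas in a topological minor model the image of $v$ must already have degree $\geq s+2$, and the inconsistent component of $F_{\{H\}}$ from Lemma~\ref{crickbisminor} has no cycle vertex of degree above $s+1$). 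But the entire content of the lemma is then the design of the replacement gadget, and you explicitly defer it: you list the two constraints (i) and (ii) it must meet and say the ``bulk of the work'' remains. Worse, the route you suggest — keep the per-column chain $d^e_j,f^e_j,g^e_j$ and the $B$-extra pendants on $b^e_{i,j}$, and somehow ``route the extra pendant to the same cycle vertex'' — runs directly into the tension you yourself identify: any surplus incident edges added at $b^e_{i,j}$ so that it reaches degree $s+2$ in the inconsistent component are equally present in the consistent component, where $b^e_{i,j}$ also lies on a cycle (the $C_4$ through $r^{\sigma(e)}_i,a^{\sigma(e)}_{i,j},c^{\sigma(e)}_j$), so $H$ would embed there too. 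Case analysis on $\ell$ and $s$ does not dissolve this; the construction has to change.

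The paper's actual construction abandons Lemma~\ref{crickbisminor}'s gadget entirely. It sets $t_\Fcal=0$ (no $B$-extra vertices, no pendants on $b^e_{i,j}$) and instead adds, for each $e\in E(G)$, a single hub vertex $q^e$ with $s$ pendant leaves, adjacent to every $r^e_j$, $j\in\intv{1,k}$. The key point is that the degree of the only high-degree vertex does not change between the two situations — what changes is whether it lies on a cycle. In the consistent component the surviving $C_4$'s hanging off the different $r^{\sigma(e)}_j$'s are pairwise vertex-disjoint, so $q^{\sigma(e)}$ is a cut vertex and every vertex lying on a cycle of $Z$ has degree at most $3<s+2$, whence $H\not\pretp Z$. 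In the inconsistent situation, $q^{\sigma(e)}$ lies on the $C_6$ through $r^{\sigma(e)}_{i'},a^{\sigma(e)}_{i',j},c^{\sigma(e)}_j,b^e_{i,j},r^{\sigma(e)}_i$, and together with its $s$ pendant leaves it is a degree-$(s+2)$ branch vertex, so $H\pretp F_{\{H\}}\gm S$; this gives Property~\ref{new-wealthiertm} and, via Lemma~\ref{lemmasol}, the result. The argument is uniform in $\ell$ and $s$ — no case distinction is needed — which is exactly what your plan was missing.
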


\begin{proof}
     Let $(F,\ell)$ be the $\{H\}$-TM-framework of an input $(G,k)$ of {\sc $k\times k$ Permutation Independent Set}, where $t_\Fcal$ will be specified  later.
  Let $s$ be the number of vertices of degree one in $H$.
  As in Lemma~\ref{crickbisminor}, since $H$ is not a minor of the \banner and (because of ~Theorem~\ref{hardgene}) we can assume that each block of $H$ contains at most four edges,
  we have that  $s \geq 2$.

  We are now ready to describe the graph $F_{\{H\}}$.
  Starting from $F$, we add,  for each $e \in E(G)$, a vertex $q^{e}$ and $s$ vertices pendent to $q^e$, and for each $e \in E(G)$ and each $j \in \intv{1,k}$, the edge $\{q^e, r^e_{j}\}$.
The vertices $q^e$, $e \in E(G)$,  and the pendent vertices are $J$-extra vertices.
  In particular we have $t_\Fcal = 0$ and, for each $e \in E(G)$,  $|V(J^e)| = 2k+s+1$.
  This completes the definition of $F_{\{H\}}$.
  Note that this construction is similar to the construction provided in Section~\ref{globred} with $H_x$ being a star with $s$ leaves, $x$ the non-leaf vertex, and $H_Y^-$ an edge.


    Let $P$ be a solution of {\sc $k\times k$ Permutation Independent Set} on $(G,k)$.
      Then every connected component of $F_{\{H\}} \gm S_P$ is either a copy of the graph $K$, which is of size $h-1$, or
  the graph $Z$ depicted in  Figure~\ref{reallocatedbissstretchnew}.
  Note that each vertex of $Z$ contained in a cycle is of degree at most three. Since $s\geq 2$, there is a vertex in $H$ of degree at least four contained in a cycle.
   Thus  $F \gm S_P$ does not contain  $H$ as a topological minor, and $S_P$ is a solution of \textsc{$\{H\}$-TM-Deletion}.
    \begin{figure}[htb]
    \centering
    \scalebox{1}{\begin{tikzpicture}[scale=1]
        \node at (0.3,3.5) {\footnotesize{$r^{\sigma(e)}_{1}$}};
        \node at (0.3,5.5) {\footnotesize{$r^{\sigma(e)}_{2}$}};
        \node at (0.3,7.5) {\footnotesize{$r^{\sigma(e)}_{3}$}};
        \node at (2,7.9) {\footnotesize{$a^{\sigma(e)}_{3,2}$}};
        \node at (2,6.8) {\footnotesize{$b^{e}_{3,2}$}};

        \node at (2,5.9) {\footnotesize{$a^{\sigma(e)}_{2,3}$}};
        \node at (2,4.8) {\footnotesize{$b^{e}_{2,3}$}};
        \node at (2,3.9) {\footnotesize{$a^{\sigma(e)}_{1,1}$}};
        \node at (2,2.8) {\footnotesize{$b^{e}_{1,1}$}};
        \node at (4,3.4) {\footnotesize{$c^{\sigma(e)}_{1}$}};
        \node at (4,5.4) {\footnotesize{$c^{\sigma(e)}_{3}$}};
        \node at (4,7.4) {\footnotesize{$c^{\sigma(e)}_{2}$}};
        \node at (-2.5,5) {\footnotesize{$q^{\sigma(e)}$}};


        \foreach \x in {3,5,7}
        {
          \vertex{2,\x+0.5};
          \vertex{2,\x-0.5};
          \vertex{4,\x};
          \draw (0,\x) -- (2, \x+0.5) -- (4,\x) -- (2, \x-0.5) -- (0,\x);
          \draw (-2,5) -- (0,\x);
        }
        \vertex{0,3};
        \vertex{0,5};
        \vertex{0,7};
        \vertex{-2,5};

        \vertex{-2,4};
        \vertex{-2.4,4};
        \draw (-2,5) -- (-2,4);
        \draw (-2,5) -- (-2.4,4);

        \node at (-2.5,3.5) {\scriptsize{$s$} pendent vertices};

      \end{tikzpicture}}
    \caption{A connected component $Z$ of $F_{\{H\}} \gm S_P$ that is not a copy of $K$, with $T^e = \{(1,1), (2,3), (3,2)\}$ and $s=2$.}

    \label{reallocatedbissstretchnew}
  \end{figure}
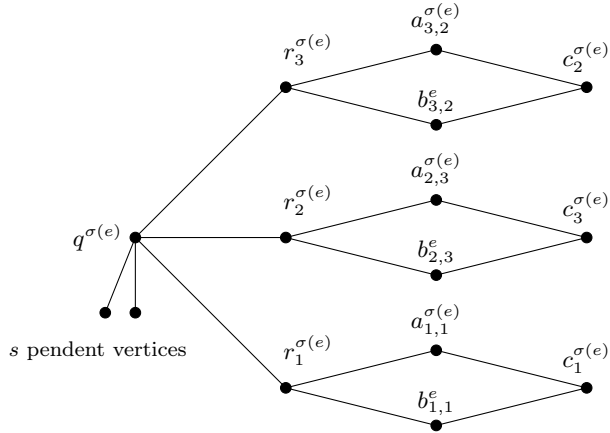

  Assume now that $S$ is a solution of \textsc{$\{H\}$-TM-Deletion}.
  Let $e \in E(G)$ and let $i,j \in \intv{1,k}$ such that $b^e_{i,j} \not \in S$.
  Let $i' \in \intv{1,k}$ such that $i \not = i'$.
  If $a^{\sigma(e)}_{i',j} \not \in S$ then, as by Lemma~\ref{reimitation}  $S \cap V(J^{\sigma(e)}) = \es$, we have that the graph $Z$ induced,
  by the cycle $q^{\sigma(e)},r^{\sigma(e)}_{i'},a^{\sigma(e)}_{i',j},c^{\sigma(e)}_{j},b^e_{i,j},r^{\sigma(e)}_{i}, q^{\sigma(e)}$
  and the $s$ vertices pendent to $q^{\sigma(e)}$,
  is a subgraph of $F_{\{H\}}\gm S$ containing $H$ as a topological minor.
  This situation is depicted in Figure~\ref{reallocatedbissstretchnewbis}.
  As this is forbidden by definition of $S$,
  we have that  $a^{\sigma(e)}_{i',j} \in S$.
  Thus Property~\ref{new-wealthiertm} holds and the lemma follows from Lemma~\ref{lemmasol}.
      \begin{figure}[htb]
    \centering
    \scalebox{1}{\begin{tikzpicture}[scale=1]
        \node at (0.3,3.5) {\footnotesize{$r^{\sigma(e)}_{1}$}};
        \node at (0.3,5.5) {\footnotesize{$r^{\sigma(e)}_{2}$}};
        \node at (0.3,7.5) {\footnotesize{$r^{\sigma(e)}_{3}$}};
        \node at (2,7.9) {\footnotesize{$a^{\sigma(e)}_{3,2}$}};
        \node at (2,6.8) {\footnotesize{$b^{e}_{3,3}$}};

        \node at (2,5.9) {\footnotesize{$a^{\sigma(e)}_{2,3}$}};
        \node at (1.9,4.8) {\footnotesize{$b^{e}_{2,2}$}};
        \node at (2,3.9) {\footnotesize{$a^{\sigma(e)}_{1,1}$}};
        \node at (2,2.8) {\footnotesize{$b^{e}_{1,1}$}};
        \node at (4,3.4) {\footnotesize{$c^{\sigma(e)}_{1}$}};
        \node at (4,5.4) {\footnotesize{$c^{\sigma(e)}_{3}$}};
        \node at (4,7.4) {\footnotesize{$c^{\sigma(e)}_{2}$}};
        \node at (-2.5,5) {\footnotesize{$q^{\sigma(e)}$}};


        \foreach \x in {3,5,7}
        {
          \vertex{2,\x+0.5};
          \vertex{2,\x-0.5};
          \vertex{4,\x};
          \draw (-2,5) -- (0,\x);
        }

        \draw (0,7) -- (2, 7.5) -- (4,7) -- (2,4.5) -- (0,5);
        \draw (0,5) -- (2, 5.5) -- (4,5) -- (2,6.5) -- (0,7);
        \draw (0,3) -- (2, 3+0.5) -- (4,3) -- (2, 3-0.5) -- (0,3);

        \vertex{0,3};
        \vertex{0,5};
        \vertex{0,7};
        \vertex{-2,5};

        \vertex{-2,4};
        \vertex{-2.4,4};
        \draw (-2,5) -- (-2,4);
        \draw (-2,5) -- (-2.4,4);

        \node at (-2.5,3.5) {\scriptsize{$s$} pendent vertices};

      \end{tikzpicture}}
    \caption{A connected component $Z$ of $F_{\{H\}} \gm S_P$ that is not a copy of $K$, with $T_{e} = \{(1,1), (2,2), (3,3)\}$, $T_{\sigma(e)} = \{(1,1), (2,3), (3,2)\}$, and $s=2$.}

    \label{reallocatedbissstretchnewbis}
  \end{figure}
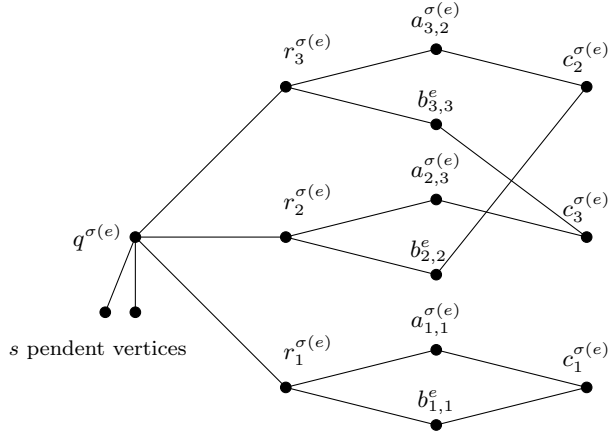
\end{proof}

We are now ready to prove Theorem~\ref{hardminor} and Theorem~\ref{hardtminor}.

\begin{proof}[Proof of Theorem~\ref{hardminor} and Theorem~\ref{hardtminor}]
  Let $H$ be a graph in $\Qcal$.
  If $H$ is a star with at least four leaves, then by Lemma~\ref{twocut} \textsc{$\{H\}$-M-Deletion} cannot be solved in time $2^{\smallo(\tw \log \tw)}\cdot n^{\Ocal(1)}$ unless the \ETH fails.
  In the following we assume that $H$ is not a star. This permits us to proceed with the proofs of both theorems in a unified way.

  If $H$ contains at least one block with at least five vertices, then such a block would have at least five edges as well (by definition of a block), hence by Theorem~\ref{hardgene}, the theorems hold. We can now assume that $H$ does not contain any block with at least five vertices. Therefore,
  every block of $H$ is either an edge, a $C_{3}$, or a $C_{4}$.

  If $H$ contains at least three cut vertices that do not belong to the same block, then Lemma~\ref{threecut} can be applied.
  We can now assume that $H$ contains at most two cut vertices.

  Assume now that $H$ contains exactly two cut vertices and let $B$ the block that contains both of them. If $B$ is not an edge, then Lemma~\ref{cycletwocut} can be applied. Otherwise, we distinguish cases depending on the shape of the two connected components of $H$ after removing the only edge of $B$. If both connected components contain a cycle then Lemma~\ref{cyclestartwocut} can be applied, if only one of them contains a cycle then Lemma~\ref{commander} can be applied, and if none of them contains a cycle then, as $H$ is not a minor of the \banner,  Lemma~\ref{twocut} can be applied.

  Assume now that $H$ contains exactly one cut vertex. As $H$ is not a star, then either $H$ contains at least two cycles, and so Lemma~\ref{butternutbis} can be applied, or $H$ contains exactly one cycle (and is not a minor of the \banner) and therefore Lemma~\ref{crickbisminor} or Lemma~\ref{crickbistm} can be applied. The theorems follow.
\end{proof}

\section{Conclusions and further research}
\label{upheaval}

We provided lower bounds for  \textsc{$\Fcal$-M-Deletion} and  \textsc{$\Fcal$-TM-Deletion} parameterized by the treewidth of the input graph, several of them being tight. In particular, the results of this article together with those of~\cite{monster1,monster2,BasteST20-SODA,SODA-arXiv} settle completely the complexity of \textsc{$\{H\}$-M-Deletion} when $H$ is connected.

Concerning the topological minor version, in order to establish a dichotomy for \textsc{$\{H\}$-TM-Deletion} when $H$ is planar and connected, it remains to obtain algorithms in time $2^{\Ocal(\tw  \cdot \log \tw)}\polyn$ for the graphs $H$ with maximum degree at least four, like the \gem or the \dart (see Figure~\ref{shifting}), as for those graphs the algorithm in time $2^{\Ocal(\tw  \cdot \log \tw)}\polyn$ given in~\cite{monster1} cannot be applied.

It is easy to check that the lower bounds presented in this article also hold for {\sl treedepth} (as it is the case in~\cite{Pili15}) which is a parameter more restrictive than treewidth~\cite{BougeretS16}.  Also, it is worth mentioning that \textsc{$\Fcal$-M-Deletion} and \textsc{$\Fcal$-TM-Deletion} are unlikely to admit polynomial kernels parameterized by treewidth for essentially any collection $\Fcal$, by using the framework introduced by Bodlaender et al.~\cite{BodlaenderDFH09} (see~\cite{BougeretS16} for an explicit proof for any problem satisfying a generic condition).

Finally, let us mention that Bonnet et al.~\cite{BonnetBKM-IPEC17} recently studied generalized feedback vertex set problems parameterized by treewidth, and showed that excluding $C_4$ plays a fundamental role in the existence of single-exponential algorithms. This is related to our dichotomy for cycles illustrated in Figure~\ref{shifting} (which we proved independently in~\cite{BasteST17} building on the work of Pilipczuk~\cite{Pili15}), namely that \textsc{$\{C_i\}$-Deletion} can be solved in single-exponential time if and only if $i \leq 4$.

\vspace{.3cm}

\noindent \textbf{Acknowledgements}. We would like to thank the referees of the two conference versions containing some of the results of this article for helpful remarks that improved the presentation of the manuscript, and  \href{http://www.lamsade.dauphine.fr/~bonnet/}{Édouard Bonnet}, \href{http://www.lamsade.dauphine.fr/~kim/}{Eun Jung Kim},  and \href{https://mat-web.upc.edu/people/juan.jose.rue/}{Juanjo Rué} for insightful discussions on the topic of this paper.

\bibliographystyle{abbrv}
\bibliography{Biblio-Fdeletion3}

\begin{thebibliography}{10}

\bibitem{P3-cover-improved}
Z.~Bai, J.~Tu, and Y.~Shi.
\newblock {An improved algorithm for the vertex cover $P_3$ problem on graphs
  of bounded treewidth}.
\newblock {\em CoRR}, abs/1603.09448, 2016.

\bibitem{BasteS15}
J.~Baste and I.~Sau.
\newblock The role of planarity in connectivity problems parameterized by
  treewidth.
\newblock {\em Theoretical Computer Science}, 570:1--14, 2015.

\bibitem{BasteST17}
J.~Baste, I.~Sau, and D.~M. Thilikos.
\newblock Optimal algorithms for hitting (topological) minors on graphs of
  bounded treewidth.
\newblock In {\em Proc. of the 12th International Symposium on Parameterized
  and Exact Computation (IPEC)}, volume~89 of {\em LIPIcs}, pages 4:1--4:12,
  2017.

\bibitem{BasteST18}
J.~Baste, I.~Sau, and D.~M. Thilikos.
\newblock A complexity dichotomy for hitting small planar minors parameterized
  by treewidth.
\newblock In {\em Proc. of the 13th International Symposium on Parameterized
  and Exact Computation (IPEC)}, volume 115 of {\em LIPIcs}, pages 2:1--2:13,
  2018.

\bibitem{SODA-arXiv}
J.~Baste, I.~Sau, and D.~M. Thilikos.
\newblock {Hitting minors on bounded treewidth graphs. IV. An optimal
  algorithm}.
\newblock {\em CoRR}, abs/1907.04442, 2019.

\bibitem{BasteST20-SODA}
J.~Baste, I.~Sau, and D.~M. Thilikos.
\newblock A complexity dichotomy for hitting connected minors on bounded
  treewidth graphs: the chair and the banner draw the boundary.
\newblock In {\em Proc. of the 2020 {ACM-SIAM} Symposium on Discrete Algorithms
  (SODA)}, pages 951--970, 2020.

\bibitem{monster1}
J.~Baste, I.~Sau, and D.~M. Thilikos.
\newblock {Hitting minors on bounded treewidth graphs. I. General upper
  bounds}.
\newblock {\em {SIAM} Journal on Discrete Mathematics}, 34(3):1623--1648, 2020.

\bibitem{monster2}
J.~Baste, I.~Sau, and D.~M. Thilikos.
\newblock {Hitting minors on bounded treewidth graphs. {II.} Single-exponential
  algorithms}.
\newblock {\em Theoretical Computer Science}, 814:135--152, 2020.

\bibitem{BasteSTSODA20}
J.~Baste, I.~Sau, and D.~M. Thilikos.
\newblock A complexity dichotomy for hitting connected minors on bounded
  treewidth graphs: the chair and the banner draw the boundary.
\newblock In {\em Proc. of the 31st Annual ACM-SIAM Symposium on Discrete
  Algorithms (SODA)}, 2020, to appear.

\bibitem{BodlaenderCKN15}
H.~L. Bodlaender, M.~Cygan, S.~Kratsch, and J.~Nederlof.
\newblock Deterministic single exponential time algorithms for connectivity
  problems parameterized by treewidth.
\newblock {\em Information and Computation}, 243:86--111, 2015.

\bibitem{BodlaenderDFH09}
H.~L. Bodlaender, R.~G. Downey, M.~R. Fellows, and D.~Hermelin.
\newblock On problems without polynomial kernels.
\newblock {\em Journal of Computer and System Sciences}, 75(8):423--434, 2009.

\bibitem{BondyM08}
J.~A. Bondy and U.~S.~R. Murty.
\newblock {\em Graph Theory}.
\newblock Graduate Texts in Mathematics. Springer, 2008.

\bibitem{BonnetBKM-IPEC17}
{\'{E}}.~Bonnet, N.~Brettell, O.~Kwon, and D.~Marx.
\newblock {Generalized Feedback Vertex Set Problems on Bounded-Treewidth
  Graphs: Chordality Is the Key to Single-Exponential Parameterized
  Algorithms}.
\newblock In {\em Proc. of the 12th International Symposium on Parameterized
  and Exact Computation (IPEC)}, volume~89 of {\em LIPIcs}, pages 7:1--7:13,
  2017.

\bibitem{BougeretS16}
M.~Bougeret and I.~Sau.
\newblock How much does a treedepth modulator help to obtain polynomial kernels
  beyond sparse graphs?
\newblock In {\em Proc. of the 12th International Symposium on Parameterized
  and Exact Computation (IPEC)}, volume~89 of {\em LIPIcs}, pages 10:1--10:13,
  2017.
\newblock Full version available at
  \texttt{https://doi.org/10.1007/s00453-018-0468-8}.

\bibitem{Courcelle90}
B.~Courcelle.
\newblock {The Monadic Second-Order Logic of Graphs. I. Recognizable Sets of
  Finite Graphs}.
\newblock {\em Information and Computation}, 85(1):12--75, 1990.

\bibitem{CyganFKLMPPS15}
M.~Cygan, F.~V. Fomin, L.~Kowalik, D.~Lokshtanov, D.~Marx, M.~Pilipczuk,
  M.~Pilipczuk, and S.~Saurabh.
\newblock {\em Parameterized Algorithms}.
\newblock Springer, 2015.

\bibitem{CyganNPPRW11}
M.~Cygan, J.~Nederlof, M.~Pilipczuk, M.~Pilipczuk, J.~M.~M. van Rooij, and
  J.~O. Wojtaszczyk.
\newblock {Solving Connectivity Problems Parameterized by Treewidth in Single
  Exponential Time}.
\newblock In {\em Proc. of the 52nd Annual {IEEE} Symposium on Foundations of
  Computer Science (FOCS)}, pages 150--159, 2011.

\bibitem{Die10}
R.~Diestel.
\newblock {\em {Graph Theory}}, volume 173.
\newblock Springer-Verlag, 4th edition, 2010.

\bibitem{DF13}
R.~G. Downey and M.~R. Fellows.
\newblock {\em Fundamentals of Parameterized Complexity}.
\newblock Texts in Computer Science. Springer, 2013.

\bibitem{FominLPS16}
F.~V. Fomin, D.~Lokshtanov, F.~Panolan, and S.~Saurabh.
\newblock Efficient computation of representative families with applications in
  parameterized and exact algorithms.
\newblock {\em Journal of the {ACM}}, 63(4):29:1--29:60, 2016.

\bibitem{HopcroftT73}
J.~E. Hopcroft and R.~E. Tarjan.
\newblock Efficient algorithms for graph manipulation.
\newblock {\em Communations of {ACM}}, 16(6):372--378, 1973.

\bibitem{ImpagliazzoP01}
R.~Impagliazzo, R.~Paturi, and F.~Zane.
\newblock Which problems have strongly exponential complexity?
\newblock {\em Journal of Computer and System Sciences}, 63(4):512--530, 2001.

\bibitem{JansenLS14}
B.~M.~P. Jansen, D.~Lokshtanov, and S.~Saurabh.
\newblock A near-optimal planarization algorithm.
\newblock In {\em Proc. of the 25th Annual {ACM-SIAM} Symposium on Discrete
  Algorithms (SODA)}, pages 1802--1811, 2014.

\bibitem{Lic82}
D.~Lichtenstein.
\newblock Planar formulae and their uses.
\newblock {\em SIAM Journal on Computing}, 11(2):329--343, 1982.

\bibitem{LokshtanovMS11}
D.~Lokshtanov, D.~Marx, and S.~Saurabh.
\newblock Lower bounds based on the exponential time hypothesis.
\newblock {\em Bulletin of the {EATCS}}, 105:41--72, 2011.

\bibitem{permuclique}
D.~Lokshtanov, D.~Marx, and S.~Saurabh.
\newblock Slightly superexponential parameterized problems.
\newblock In {\em Proc. of the 22nd Annual ACM-SIAM Symposium on Discrete
  Algorithms (SODA)}, pages 760--776, 2011.

\bibitem{Pili15}
M.~Pilipczuk.
\newblock {A tight lower bound for Vertex Planarization on graphs of bounded
  treewidth}.
\newblock {\em Discrete Applied Mathematics}, 231:211--216, 2017.

\bibitem{P3-cover}
J.~Tu, L.~Wu, J.~Yuan, and L.~Cui.
\newblock {On the vertex cover $P_3$ problem parameterized by treewidth}.
\newblock {\em Journal of Combinatorial Optimization}, 34(2):414--425, 2017.

\bibitem{Yannakakis78}
M.~Yannakakis.
\newblock {Node- and Edge-Deletion NP-Complete Problems}.
\newblock In {\em Proc. of the 10th Annual {ACM} Symposium on Theory of
  Computing (STOC)}, pages 253--264, 1978.

\end{thebibliography}

\end{document}